\newtheorem{theorem}{Theorem}
\newtheorem{proposition}{Proposition}
\newtheorem{lemma}{Lemma}
\newtheorem{remark}{Remark}
\newtheorem{conjecture}{Conjecture}
\newtheorem{corollary}{Corollary}
\newcommand{\g}{\mathfrak{g}}
\newcommand{\RR}{\mathbb{R}}
\newcommand{\CC}{\mathbb{C}}
\newcommand{\ZZ}{\mathbb{Z}}
\newcommand{\cF}{{\mathcal F}}
\newcommand{\pa}{\partial}
\newcommand{\Ga}{\Gamma}
\newcommand{\cA}{\mathcal A}
\newcommand{\Aut}{\operatorname{Aut}}
\newcommand{\sign}{\operatorname{sign}}
\newcommand{\be}{\begin{equation}}
\newcommand{\ee}{\end{equation}}
\newcommand{\cL}{\mathcal{L}}
\newcommand{\io}{\iota}
\begin{document}

\title{Semiclassical quantization of classical field theories.}
\author{Alberto S. Cattaneo}
\address{Institut f\"ur Mathematik,
Universit\"at Z\"urich,
Winterthurerstrasse 190,
CH-8057, Z\"urich, Switzerland}

\email{alberto.cattaneo@math.uzh.ch}

\author{Pavel Mnev}
\address{Instit\"ut f\"ur Mathematik,
Universit\"at Z\"urich,
Winterthurerstrasse 190,
CH-8057, Z\"urich, Switzerland}

\address{ St. Petersburg Department of V.A.Steklov Institute of Mathematics of the Russian Academy of Sciences, Fontanka 27, St. Petersburg, 191023 Russia}

\address{ Chebyshev Laboratory, St. Petersburg State University,
14th Line, 29b, St. Petersburg, 199178 Russia}

\email{pmnev@pdmi.ras.ru}

\author{Nicolai Reshetikhin}
\address{Department of Mathematics,
University of California, Berkeley
California 94305,
USA}

\email{reshetik@math.berkeley.edu}
\date{\today}
\maketitle

\begin{abstract} These lectures are an introduction to formal
semiclassical quantization of classical field theory. First we
develop the Hamiltonian formalism for classical field theories
on space time with boundary. It does not have to be a cylinder
as in the usual Hamiltonian framework. Then we outline formal
semiclassical quantization in the finite dimensional case.
Towards the end we give an example of such a quantization
in the case of Abelian Chern-Simons theory.
\end{abstract}

\tableofcontents

\section*{Introduction}

The goal of these lectures is an introduction to the
formal semiclassical quantization of classical gauge theories.

In high energy physics space time is traditionally treated as a flat Minkowski
manifold without boundary. This is consistent with the fact the characteristic scale
in high energy is so much smaller then any characteristic scale of the Universe.

As one of the main paradigms in quantum field theory,
quantum fields are usually assigned to elementary particles. The corresponding
classical field theories are described by relativistically invariant local
action functionals. The locality of interactions between elementary particles is one
of the key assumptions of a local quantum field theories and of the Standard Model itself.

The path integral formulation of quantum field theory makes it mathematically
very similar to statistical mechanics. It also suggests that in order to
understand the mathematical nature of local quantum field theory it is natural
to extend this notion from Minkowski space time to a space time with boundary. It is definitely natural to do it for
the corresponding classical field theories.

The concept of topological and conformal field theories
on space time manifolds with boundary was advocated in \cite{At}\cite{Seg}.
The renormalizability of local quantum field theory on a space time with boundary was
studied earlier in \cite{Sym}.
Here we develop the gauge fixing approach for space time manifolds
with boundary by adjusting the Faddeev-Popov (FP) framework to this setting.
This gauge fixing approach is a particular case of the more general Batalin-Vilkovisky (BV)
formalism for quantization of gauge theories.
The classical Hamiltonian part of the BV quantization
on space time manifolds with boundary, the BV-BFV formalism, is developed in \cite{CMR}.
In a subsequent publication we will extend it to the quantum level.

The goal of these notes is an overview of the FP framework in the context of
space time manifolds with boundary.
As a first step we present the Hamiltonian structure for such theories.
We focus on the Hamiltonian formalism for first order theories.
Other theories can be treated similarly, see for example \cite{CMR1} and
references therein. In a subsequent publication we will connect this
approach with the BV-BFV program.

In the first section we recall the concept of local quantum field theory as a
functor from the category of space time cobordisms to the category of vector
spaces. The second section contains examples: the scalar field theory,
Yang-Mills theory, Chern-Simons and BF theories. The concept of
semiclassical quantization of first order quantum field theories is explained in section 3
where we present a finite dimensional model for the gauge fixing for
space time manifolds with or without boundary.
In section 4 we briefly discuss the example of Abelian Chern-Simons theory. The nonabelain case and
the details of the gluing of partition functions
semiclassical Chern-Simons theories will be given elsewhere.

The authors benefited from discussions with T. Johnson-Freyd, J. Lott and B. Vertman,
A.S.C. acknowledges partial support of SNF Grant No. 200020-131813/1,
N.R. acknowledges the support from the Chern-Simons grant, from
the NSF grant DMS-1201391 and the University of Amsterdam where important part of the work
was done. P. M. acknowledges partial support of RFBR Grant No.13-01-12405-ofi-m-2013, of the Chebyshev Laboratory  (Department of Mathematics and Mechanics, St. Petersburg State University)  under RF Government grant 11.G34.31.0026, of  JSC ``Gazprom Neft'', and of SNF Grant No. 200021-13759. We also benefited from hospitality and
research environment of the QGM center at Aarhus University.

\section{First order classical field theories}

\subsection{Space time categories}

In order to define a classical field theory one has to specify a
space time category, a space of fields for each space time and the
action functional on the space of fields.

Two space time categories which are most important for these lectures
are the category of smooth $n$-dimensional cobordisms and the category
of smooth $n$-dimensional Riemannian manifolds.

{\bf The $d$-dimensional smooth category}. \index{topological
category} Objects are smooth, compact, oriented $(d-1)$-dimensional
manifolds with smooth $d$-dimensional collars. A morphism between $\Sigma_1$ and $\Sigma_2$ is a smooth
$d$-dimensional compact oriented manifolds
with $\pa M=\Sigma_1\sqcup \overline{\Sigma_2}$ and the smooth structure on $M$ agrees with smooth structure
on collars near the boundary. The orientation on $M$
should agree with the orientations of $\Sigma_1$ and be opposite to the one on $\Sigma_2$ in a natural way.

The composition consists of gluing two morphisms along the common boundary in
such a way that collars with smooth structure on them fit.

{\bf The $d$-dimensional Riemannian category}. \index{Riemannian
category} Objects are $(d-1)$-dimensional Riemannian manifolds with $d$-dimensional collars.  Morphisms
between two oriented $(d-1)$-dimensional Riemannian manifolds $N_1$
and $N_2$ are oriented $d$-dimensional Riemannian manifolds $M$ with collars
near the boundary,
such that $\pa M =N_1\sqcup \overline{N_2}$. The orientation on all
three manifolds should naturally agree, and the metric on $M$ agrees
with the metric on $N_1$ and $N_2$ and on collar near the boundary. The
composition is the gluing of such Riemannian cobordisms. For the
details see \cite{ST}.

This category is important for many reasons. One of them is that
it is the underlying structure for statistical quantum field theories.

{\bf The $d$-dimensional pseudo-Riemannian category} The difference between this category and the Riemannian category is that morphisms
are pseudo-Riemannian with the signature $(d-1,1)$ while objects remain $(d-1)$-dimensional Riemannian.
This is the most interesting category for particle physics.

Both objects and morphisms may have an extra structure such as
a fiber bundle (or a sheaf) over it. In this case such structures
for objects should agree with the structures for morphisms.

\subsection{General structure of first order theories}

\subsubsection{First order classical field theories}
A first order classical field theory\footnote{It is not essential that we consider
here only first order theories. Higher order theories where $L(d\phi, \phi)$ is not
necessary a linear function in $d\phi$ can also be treated in a similar way, see for
example \cite{CMR1} and references therein. In first order theories the space of boundary fields
is the pull-back of fields in the bulk.} is defined by the following data:
\begin{itemize}

\item A choice of space time category.

\item A choice of the space of fields $F_M$ for each space time manifold $M$.
This comes together with the definition of the space of fields $F_{\pa M}$ for the
boundary of the space time and the restriction mapping $\pi: F_M\to F_{\pa M}$.

\item A choice of the action functional on the space $F_M$ which
is local and first order in derivatives of fields, i.e.
\[
S_M(\phi)=\int_M L(d\phi, \phi)
\]
Here $L(d\phi, \phi)$ is linear in $d\phi$.
\end{itemize}

These data define:
\begin{itemize}

\item The space $EL_M$ of solutions of the Euler-Lagrange equations.

\item The $1$-form $\alpha_{\pa M}$ on the space of boundary fields.

\item The Cauchy data subspace $C_{\pa M}$ of boundary values (at $\{0\}\times \pa M$) of solutions of the Euler-Lagrange equations in $[0, \epsilon)\times \pa M$.

\item The subspace $L_M\subset C_{\pa M}$ of boundary values of solutions of the Euler-Lagrange
equations in $M$, $L_M=\pi (EL_M)$.

\end{itemize}

When $C_{\pa M}\neq F_{\pa M}$ the Cauchy problem is overdetermined
and therefore the action is degenerate. Typically it is degenerate because of
the gauge symmetry.

A natural boundary condition for such system is given by a Lagrangian fibration\footnote{Here and throughout the text
``fibration'' will mean either the projection to the space of leaves of a foliation where the fiber
is a leaf, or an actual fiber bundle. We will specify this when needed.} on the space of boundary fields such that the form $\alpha_{\pa M}$
vanishes at the fibers. The last conditions guarantees that
solutions of Euler-Lagrange equations which are constrained to a leaf
of such fibration are critical points of the action functional, i.e. not
only the bulk term vanishes but also the boundary terms.

\subsubsection{First order classical field theory as a functor}

First order classical field theory can be regarded as
a functor from the category of space times to the category
which we will call Euler-Lagrange category and will denote $\underline{EL}$.
Here is an outline of this category:

An {\it object} of $\underline{EL}$ is a symplectic manifold $F$ with a prequantum line bundle\footnote{ The prequantum line bundle can be trivial,
the Hermitian connection $\alpha_F$ in this case is a $1$-form whose coboundary is the
symplectic form.}, i.e a line bundle
with a connection $\alpha_F$, 
such that the symplectic form
is the curvature of this connection. It should also have a Lagrangian
foliation\footnote{The foliation does not have to be a fiber bundle.
Its fiber should be Lagrangian over generic points of the base.} which is $\alpha_F$-exact,
i.e. the pull-back of $\alpha_F$ to each fiber vanishes.

A {\it morphism} between $F_1$ and $F_2$ is a manifold $F$
together with two projections $\pi_1: F\to F_1$ and $\pi_2: F\to F_2$,
with a function $S_F$ on $F$ and with the subspace $EL\subset F$ such that
$dS_F|_{EL}$ is the pull-back of $-\alpha_{F_1}+\alpha_{F_2}$ on $F_1\times F_2$.
The image of $EL$ in $(F_1, -\omega_1)\times (F_2,\omega_2)$ is an isotropic submanifold.
Here $\omega_i=d\alpha_{F_i}$. We will focus on theories where
these subspaces are Lagrangian.

The composition of morphisms $(F, S_F)$ and $(F',S_{F'})$ is the fiber product of the morphism spaces
$F$ and $F'$ over the intermediate object and $S_{F'\circ F}=S_F+S_{F'}$. This category is the $gh=0$ part of the BV-BFV category from \cite{CMR}.

A first order classical field theory defines a
functor from the space time category to the Euler-Lagrange category.
An object $N$ of the space time category is mapped to the space of fields $F_N$,
a morphism $M$ is mapped to $(F_M, S_M)$, etc. Composition of morphisms is mapped
to the fiber product of fields and because of the assumption of locality of the
action functional, it is additive with respect to the gluing.

This is just an outline of the Euler-Lagrange category and of the functor.
For our purpose of constructing formal semiclassical quantization we will not need
the precise details of this construction. But it is important to have this more
general picture in mind.

\subsection{Symmetries in first order classical field theories}

The theory is relativistically invariant if the action is
invariant with respect to geometric automorphisms of the
space time. These are diffeomorphisms for the smooth category, isometries for
the Riemannian category etc. In such theory the action is constructed
using geometric operations such as de Rham differential and
exterior multiplication of forms for smooth category. In Riemannian category in addition
to these two operations we have Hodge star (or the metric).

If the space time category has an additional structure such as fiber bundle, the automorphisms of
this additional structure give additional symmetries of the theory. In
Yang-Mills, Chern-Simons and BF theories, gauge symmetry, or automorphisms
of the corresponding principal $G$-bundle, are such a symmetry. A theory
with such space time with the gauge invariant action is called gauge invariant.
The Yang-Mills theory is gauge invariant, the Chern-Simons and the BF theories are
gauge invariant only up to boundary terms.

There are more complicated symmetries when a distribution, not necessary integrable,
is given on the space of fields and the action is annihilated by
corresponding vector fields. Nonlinear Poisson $\sigma$-model is an example
of such field theory \cite{CF}.

\section{Examples}

\subsection{First order Lagrangian mechanics}
\subsubsection{The action and boundary conditions} In Lagrangian mechanics the main
component which determines the dynamics is the Lagrangian function. This is a function
on the tangent bundle to the configuration space $L(\xi, x)$ where $\xi\in T_xN$.
In Newtonian mechanics  the Lagrangian function is quadratic in velocity
and the quadratic term is positive definite which turns $N$ into a Riemannian manifold.

The most general form of first order Lagrangian is $L(\xi,x)=<\alpha(x), \xi>-H(x)$ where
$\alpha$ is a $1$-form on $N$ and $H$ is a function on $N$.
The action of a first order Lagrangian mechanics is the following functional on parameterized paths
$F_{[t_1,t_2]}=C^\infty([t_1,t_2], N)$
\begin{equation}\label{act-cl-m}
S_{[t_2,t_1]}[\gamma]=\int_{t_1}^{t_2} (\left< \alpha(\gamma(t)),\dot\gamma(t)\right> -H(\gamma(t)))dt,
\end{equation}
where $\gamma$ is a parametrized path.

The Euler-Lagrange equations for this action are:
\[
\omega(\dot\gamma(t))-dH(\gamma(t))=0,
\]
where $\omega=d\alpha$. Naturally, the first order Lagrangian system
\index{first order Lagrangian system} is called {\it
non-degenerate}, if the form $\omega$ is non-degenerate. We will focus on non-degenerate
theories here. Denote the space of solutions
to Euler-Lagrange equations by $EL_{[t_1,t_2]}$.

Thus, a non-degenerate first
order Lagrangian system defines an exact
symplectic structure $\omega=d\alpha$ on a manifold $N$.
The Euler-Lagrange equations
for such system are equations for flow lines of the Hamiltonian on
the symplectic manifold $(N, \omega)$ generated by the Hamiltonian
$H$. It is clear that the action of a non-degenerate first
order system is exactly the action for this
Hamiltonian system.

The variation of the action on solutions of the Euler-Lagrange equations is
given by the boundary terms:
\begin{equation*}
\delta  S_{[t_2,t_1]}[\gamma]=\left< \alpha(\gamma(t)),\delta\gamma(t)\right> |_{t_1}^{t_2}\, .
\end{equation*}
If $\gamma(t_1)$ and $\gamma(t_2)$ are constrained to Lagrangian
submanifolds in $L_{1,2}\subset N$ with
$TL_{1,2}\subset\ker(\alpha)$, these terms vanish.

The restriction to boundary points gives the projection $\pi: F_{[t_1,t_2]}\to N\times N$.
The image of the space of solutions of the Euler-Lagrange equations
$L_{[t_1,t_2]}\subset N\times N$ for small $[t_1, t_2]$ is a
Lagrangian submanifold with respect to the symplectic form $(d\alpha)_1-(d\alpha)_2$
on $N\times N$.

Note that solutions of the Euler-Lagrange equation with boundary conditions
in $L_1\times L_2$ correspond to the intersections points $(L_1\times L_2)\cap L_{[t_1,t_2]}$
which is generically a discrete set.

\subsubsection{More on boundary conditions} The evolution of the system from time $t_1$ to $t_2$ and then to $t_3$
can be regarded as gluing of space times $[t_1,t_2]\times [t_2,t_3]\to [t_1,t_2]\cup [t_2,t_3]=[t_1,t_3]$.
If we impose boundary conditions $L_1,L_2,L_3$ at times $t_1,t_2,t_3$ respectively there may be no continuous
solutions of equations of motion for intervals $[t_1,t_2]$ and $[t_2,t_3]$ which would compose into
a continuous solution for the interval $[t_1,t_3]$. This is why boundary conditions should come in families of
Lagrangian submanifolds, so that by varying the boundary condition at $t_2$ we could choose $L_2$ in such a way that
solutions for $[t_1,t_2]$ and $[t_2,t_3]$ would compose to a continuous solution.

This is why we will say that a boundary condition for a first order theory is a Lagrangian fibration on the space of boundary values of classical fields. In case of first order classical mechanics this is a Lagrangian fibration on $N$, boundary condition is a
Lagrangian fibration of $(N,\omega)\times (N, -\omega)$. It is natural to choose
boundary conditions independently for each connected component of the boundary of
the space time. In case of classical mechanics this means a choice of Lagrangian
fibration $p: N\to B$ for each endpoint of $[t_1,t_2]$.
The form $\alpha$ should vanish on fibers of this fibration.

\begin{remark} For semiclassical quantization we will need only
classical solutions and infinitesimal neighborhood of classical solutions. This means
that we need in this case a Lagrangian fibration on the space of boundary fields
defined only locally, not necessary globally.
\end{remark}

Let $N$ be a configuration space (such as $\mathbb{R}^n$) and $T^{\ast}(N)$ be the corresponding phase space. Let $\gamma$ be a parameterized path in $T^{\ast}(N)$ such that, writing $\gamma(t) = (p(t), q(t))$ (where $p$ is momenta and $q$ is position), we have $q(t_i) = q_i$ for two fixed points $q_1, q_2$. If $\gamma_{\text{cl}}$ is a solution to the Euler-Lagrange equations, then

\begin{equation} d S^{\gamma_{\text{cl}}}_{t_1, t_2}(q_1, q_2) = \pi^{\ast}(p_1 \, dq_1 - p_2 \, dq_2) \end{equation}
where $p_1 = p(t_1), p_2 = p(t_2)$ are determined by $t_1, t_2, q_1, q_2$. The function $S^{\gamma_{\text{cl}}}_{t_1, t_2}$  is the Hamilton-Jacobi function.

\subsection{Scalar field theory in an $n$-dimensional space time} The space time in this case is a smooth oriented compact Riemannian manifold $M$ with $\dim M = n$. The space of fields is

\begin{equation} F_M = \Omega^0(M) \oplus \Omega^{n-1}(M). \end{equation}
where we write $\varphi$ for an element of $\Omega^0(M)$ and $p$ for an element of $\Omega^{n-1}(M)$. The action functional is

\begin{equation} S_M(p, \varphi) = \int_M p \wedge d \varphi - \frac{1}{2} \int_M p \wedge \ast p - \int_M V(\varphi) \, dx. \end{equation}

The first term is topological and analogous to $\int_{\gamma} \alpha$ in (\ref{act-cl-m}). The second and third terms 
use the metric and together yield an analog of the integral of the Hamiltonian in (\ref{act-cl-m}).

The variation of the action is

\begin{equation} \int_M \delta p \wedge (d \varphi - \ast p) - (-1)^{n-1} \int_M dp \wedge \delta \varphi + (-1)^{n-1} \int_{\partial M} p \, \delta \varphi - \int_M V'(\varphi) \, \delta \varphi \, dx. \end{equation}

The Euler-Lagrange equations are therefore

\begin{equation} d \varphi - \ast p = 0, (-1)^{n-1} dp + V'(\varphi) \, dx = 0. \end{equation}

The first equation gives $p = (-1)^{n-1} \ast d \varphi$, and substituting this into the second equation gives

\begin{equation} \Delta \varphi + V'(\varphi) \, dx = 0. \end{equation}
where $\Delta=\ast d \ast d$ is the Laplacian acting of functions.

Thus the space of all solutions of Euler-Lagrange equations is
\[
EL_M=\{(p, \varphi)|p = (-1)^{n-1} \ast d \varphi, \ \ \Delta \varphi + V'(\varphi) = 0 \}
\]

\begin{remark} To recover the second-order Lagrangian
compute the action at the critical point in $p$, i.e. substitute $p = (-1)^{n-1} \ast d \varphi$ into the action functional:

\begin{eqnarray*} S_M((-1)^{n-1} \ast d \varphi, \varphi) &=& \int_M (-1)^{n-1} \ast d \varphi \wedge d \varphi - \frac{1}{2} \int \ast d \varphi \wedge \ast\ast d \varphi\\ &-&   \int_M V(\varphi) \, dx
 = \frac{1}{2} \int_M d \varphi \wedge \ast d \varphi - \int_M V(\varphi) \, dx \\
 &=& \int_M \left( \frac{1}{2} (d \varphi, d \varphi) - V(\varphi) \right) \, dx. \end{eqnarray*}

\end{remark}

The boundary term in the variation gives the $1$-form on boundary fields

\begin{equation} \alpha_{\partial M} = \int_{\partial M} p \, \delta \varphi \in \Omega^1(F_{\partial M}) \end{equation}
Here $\delta$ is the de Rham differential on $\Omega^{\bullet}(F_{\partial M})$.
The differential of this $1$-form gives the symplectic form $\omega_{\pa M}=\delta\alpha_{\pa M}$ on $F_{\pa M}$.

Note that we can think of the space $F_{\partial M}$ of boundary fields as $T^{\ast}(\Omega^0(\partial M))$ in the following manner: if $\delta \varphi \in T_{\varphi}(\Omega^0(\partial M)) \cong \Omega^0(\partial M)$ is a tangent vector,
 then the value of the cotangent vector $A \in \Omega^{n-1}(\partial M)$ is

\begin{equation} A(\delta \varphi) = \int_{\partial M} A \wedge \delta \varphi. \end{equation}
The symplectic form $\omega_{\pa M}$ is the natural symplectic form on $T^*\Omega^0(\partial M)$.

The image of the space $EL_M$ of all solutions to the Euler-Lagrange equations
with respect to the restriction map $\pi : F_M \to F_{\partial M}$ gives a
subspace $L_M=\pi(EL_M)\subset F_{\pa M}$.

\begin{proposition} Suppose there is a unique solution\footnote{It is unique if $-V(\varphi)$ is convex.} to $\Delta \varphi + V'(\varphi) = 0$ for any Dirichlet boundary condition $\varphi|_{\partial M} = \eta$. Then $\pi(EL_M)$ is a Lagrangian submanifold of $F_{\partial M}$. \end{proposition}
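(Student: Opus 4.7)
The strategy is to exhibit $L_M$ as the graph of an exact 1-form on $\Om^0(\pa M)$ via a Hamilton-Jacobi construction; the Lagrangian property then follows from the standard fact about cotangent bundles.

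First, by the unique solvability hypothesis, the restriction $\varphi \mapsto \varphi|_{\pa M}$ identifies $EL_M$ with $\Om^0(\pa M)$: for each Dirichlet datum $\eta$ let $\varphi_\eta$ denote the unique solution, and set $p_\eta := (-1)^{n-1}\ast d\varphi_\eta$, so that $(p_\eta, \varphi_\eta) \in EL_M$. Composing with $\pi$ and using the identification $F_{\pa M} \cong T^\ast \Om^0(\pa M)$ already discussed, $L_M$ is exactly the graph of the section
$$\sigma: \Om^0(\pa M) \to \Om^{n-1}(\pa M), \qquad \eta \mapsto p_\eta|_{\pa M}.$$
The graph of a 1-form over the zero section of a cotangent bundle is a Lagrangian submanifold if and only if that 1-form is closed, so it suffices to prove $\delta \sigma = 0$.

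Next, I would introduce the Hamilton-Jacobi function $W(\eta) := S_M(p_\eta, \varphi_\eta)$ and compute $\delta W$ by applying the variation formula for $S_M$ derived above to the family $\{(p_\eta, \varphi_\eta)\}_\eta$. Because the Euler-Lagrange equations are satisfied identically along this family, all three bulk integrands vanish on shell and only the boundary term survives, yielding
$$\delta W(\eta) = (-1)^{n-1}\int_{\pa M} p_\eta|_{\pa M}\, \delta\eta = (-1)^{n-1}\,\sigma(\eta).$$
Thus $\sigma = (-1)^{n-1}\delta W$ is exact, hence closed, and $L_M$ is Lagrangian.

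The main subtlety to handle lies precisely in the vanishing of bulk terms in $\delta W$: varying $\eta$ induces a nontrivial variation of the interior fields $p_\eta$ and $\varphi_\eta$, but each such interior variation is paired with a factor that vanishes on shell, so its contribution drops out. This argument tacitly assumes smooth dependence of the classical solution on the Dirichlet datum, which is reasonable given the unique solvability hypothesis together with standard elliptic regularity for the semilinear equation $\Delta\varphi + V'(\varphi)=0$; under these conditions no further difficulty is expected.
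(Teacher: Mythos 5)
Your proof is correct and follows the same basic route as the paper: under the unique-solvability hypothesis, $L_M$ is exhibited as the graph of the section $\eta \mapsto p_\eta|_{\partial M}$ of $T^*\Omega^0(\partial M) \to \Omega^0(\partial M)$. The difference is one of completeness rather than of method. The paper's one-line justification stops at the observation that $L_M$ is such a graph, which by itself only shows that $L_M$ is a section (the ``half-dimensionality''), not that it is isotropic; you supply the missing half by showing that the section, viewed as a one-form on $\Omega^0(\partial M)$, equals $(-1)^{n-1}\delta W$ for the on-shell Hamilton--Jacobi functional $W(\eta)=S_M(p_\eta,\varphi_\eta)$, hence is closed. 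This is precisely the property $\delta S_M|_{EL_M}=\pi^{*}(\alpha_{\partial M}|_{L_M})$ that the paper records separately (for Yang--Mills and Chern--Simons) but does not invoke here, so your argument makes explicit what the paper leaves implicit. The only caveats are the ones you already flag: smooth dependence of $\varphi_\eta$ on the Dirichlet datum $\eta$, needed both for $L_M$ to be a submanifold and for the on-shell variation computation to make sense.
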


Indeed, in this case $L_M$ is the graph of a map $\Omega^0(\partial M) \to F_{\partial M}$
given by $\eta\mapsto (p_\pa=\pi((-1)^{n-1}*d\varphi), \eta)$ where $\varphi$ is the
unique solution to the Dirichlet problem with boundary conditions $\eta$.

The space of boundary fields has a natural Lagrangian  fibration $\pi_\pa:  T^{\ast}(\Omega^0(\partial M)) \to \Omega^0(\partial M)$. This fibration corresponds to
Dirichlet boundary conditions: we fix the value $\varphi|_{\partial M}=\eta$ and impose no conditions on $p|_{\partial M}$, i.e. we impose boundary condition $(p, \varphi)|_{\pa M}\in
\pi_\pa^{-1}(\eta)$.

Another natural family of boundary conditions, Neumann boundary conditions, correspond to the Larganian
fibration of $ T^{\ast}(\Omega^0(\partial M))\simeq \Omega^{n-1}(\pa M)\oplus \Omega^0(\pa M)$
where the base is $ \Omega^{n-1}(\pa M)$. In the case we fix
$\ast_\pa i^*(p)=\eta\in \Omega^0(\pa M)$. The intersection of $L_M$ and the fiber over $\eta$
is the set of pairs $(\ast_\pa\eta, \xi)\in \Omega^{n-1}(\pa M)\oplus \Omega^0(\pa M)$
where $\xi=i^*(\phi)$ and $\phi$ is a solution to the Neumann problem
\[
\Delta\phi+V'(\phi)=0, \ \ \pa_n\phi|_{\pa M}=\eta
\]
where $\pa_n$ is the normal derivative of $\phi$ at the boundary.

\subsection{Classical Yang-Mills theory}

Space time is again a smooth compact oriented Riemannian manifold $M$. Let $G$ be a compact semisimple, connected, simply-connected Lie group with Lie algebra $\mathfrak{g}$. We assume that it is a matrix group, i.e. we fix an embedding of $G$ into $\text{Aut}(V)$, and hence an embedding of $\mathfrak{g}$ into $\text{End}(V)$ such that the
Killing form on $\mathfrak g$ is $<a,b>=tr(ab)$. The space of fields in the first order
Yang-Mills theory is

\begin{equation} F_M = \Omega^1(M, \mathfrak{g}) \oplus \Omega^{n-2}(M, \mathfrak{g}) \end{equation}
where we think of $\Omega^1(M, \mathfrak{g})$ as the space of connections on a trivial $G$-bundle over $M$. If we use a nontrivial $G$-bundle over $M$ then the first term should be replaced by the corresponding space of connections. We denote an element of $F_M$ by an ordered pair $(A, B)$, $A\in \Omega^1(M, \mathfrak{g}) $ and $B\in \Omega^{n-2}(M, \mathfrak{g})$ . The action functional is

\begin{equation} S_M(A, B) = \int_M \text{tr}(B \wedge F(A)) - \frac{1}{2} \int_M \text{tr}(B \wedge \ast B) \end{equation}
where $F(A)=dA+A\wedge A$ is the curvature of $A$ as a connection\footnote{We will use notations $A\wedge B=\sum_{\{i\}\{j\}}A_{\{i\}}B_{\{j\}}dx^{\{i\}}\wedge dx^{\{j\}}$
for matrix-valued forms $A$ and $B$. Here $\{i\}$ is a multiindex $\{i_1,\dots, i_k\}$ and
$x^i$ are local coordinates on $M$. We will also write $[A\wedge B]$ for $\sum_{\{i\}\{j\}}[A_{\{i\}}, B_{\{j\}}] dx^{\{i\}}\wedge dx^{\{j\}}$.}.

After integrating by part we can write the variation of the action as the sum of bulk and boundary parts:

\begin{equation} \delta S_M(A,B)= \int_M \text{tr}(\delta B\wedge (F(A)-*B)+ \delta A\wedge d_A B ) - \int_{\partial M} \text{tr}(\delta A\wedge B ) \end{equation}
The space $EL_M$ of all solution to Euler-Lagrange equations is the space of pairs $(A, B)$
which satisfy
\[
B=*F(A), \ \ d_AB=0
\]

\subsubsection{} The boundary term of the variation defines the one-form on the space boundary fields $F_{\pa M}=\Omega^1(\pa M)\oplus \Omega^{n-2}(\pa M)$.

\begin{equation}  \alpha_{\partial M} = -\text{tr} \int_{\partial M} \delta A\wedge B \in \Omega^1(F_{\pa M}) \end{equation}

Its differential defines the symplectic form $\omega_{\pa M}=\int_{\pa M} \text{tr} (\delta A\wedge \delta B)$.

Note that similarly to the scalar field theory boundary fields can be regarded as $T^*\Omega^1(\pa M)$
where we identify cotangent spaces with $\Omega^{n-2}(\pa M)$, tangent spaces with $\Omega^1(\pa M)$
with the natural pairing
\[
\beta(\alpha)=\text{tr} \int_{\pa M}  \alpha\wedge \beta
\]

The projection map $\pi : F_M \to F_{\partial M}$ which is the restriction (pull-back) of forms
to the boundary defines the subspace $L_M = \pi(EL_M)$ of the space of boundary values of solutions to the Euler-Lagrange equations on $M$.

\subsubsection{} Let us show that this subspace is Lagrangian for Maxwell's electrodynamics, i.e.
for the Abelian
Yang-Mills with $G=\RR$. In this case Euler-Lagrange equations are
\[
B=\ast dA, \ \  \  \ d\ast dA=0
\]
Fix Dirichlet boundary condition $i^*(A)=a$. Let $A_0$ be a solution to this equation satisfying Laurenz gauge
condition $d^*A_0=0$. Such solution is a harmonic $1$-form, $(dd^*+d^*d)A_0=0$ with boundary condition $i^*(A_0)=a$.
If $A_0'$ is another such form, then $A_0-A_0'$ is a harmonic $1$-form with boundary condition $i^*(A_0-A_0')=0$.
The space of such forms is naturally isomorphic to $H^1(M,\pa M)$.
Each of these solutions gives the same value for $B=\ast dA=\ast dA_0$ and therefore
its boundary value $b=i^*(B)$ is uniquely determined by $a$.
Therefore the projection of $EL_M$ to the boundary is a graph of the map $a\to b$
and thus $L_M$ is a Lagrangian submanifold.

The Dirichlet and Neumann boundary value problems for Yang-Mills theory were studied in \cite{Mar}.

\begin{conjecture} The submanifold $L_M$ is Lagrangian for non-Abelian Yang-Mills theory.
\end{conjecture}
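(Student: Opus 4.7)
The plan is to parallel the Abelian calculation given just above while treating the gauge symmetry carefully. The isotropy half of the statement is automatic in the first-order formalism and needs no new input: the boundary term in $\delta S_M$ shows that $d(S_M|_{EL_M})$ equals the pull-back of $\alpha_{\partial M}$ under $\pi$, and applying $d$ once more yields $\pi^*\omega_{\partial M}|_{L_M}=0$. Thus $L_M\subset F_{\partial M}$ is automatically isotropic, and the actual content of the conjecture is that $L_M$ is of maximal (``half'') dimension, i.e.\ given locally as the graph of a map out of a space of Dirichlet data.

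To produce such a graph I would fix a smooth reference solution $(A_0,B_0)\in EL_M$ with boundary datum $a_0=i^*A_0$ and solve the non-Abelian Dirichlet problem $d_A*F(A)=0$, $i^*A=a$, for $a$ close to $a_0$. Because $A\mapsto d_A*F(A)$ is not elliptic due to gauge invariance, I would gauge-fix by the background Lorenz condition $d^*_{A_0}(A-A_0)=0$ together with the requirement that the gauge parameter vanish on $\partial M$; the resulting combined system is elliptic. The implicit function theorem in a suitable Sobolev completion (as in Marini \cite{Mar}) then provides a locally unique gauge representative $A(a)$ depending smoothly on $a$. The assignment $a\mapsto b(a):=i^*(*F(A(a)))$ is manifestly gauge-invariant, so $a\mapsto(a,b(a))$ defines a local section of $F_{\partial M}\to \Omega^1(\partial M,\mathfrak{g})$ whose image is the branch of $L_M$ through $(a_0,b_0)$. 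A linearized count using the Yang-Mills deformation complex with relative (Dirichlet) boundary conditions shows that at an irreducible, non-degenerate solution this graph is of exactly the half dimension expected of a Lagrangian in $F_{\partial M}$, specializing to the Abelian calculation above when $G=\mathbb{R}$.

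The main obstacle is analytic rather than formal. Existence and uniqueness of the non-Abelian Dirichlet problem can fail globally: multiple gauge orbits of Yang-Mills solutions can share the same boundary datum $a$, reducible backgrounds cause the gauge-fixed operator to degenerate, and Gribov-type ambiguities can appear in the gauge slice. Consequently the strategy above really only establishes that $L_M$ is Lagrangian in a formal or local sense near a generic smooth irreducible solution, which is already what is required for the semiclassical expansion pursued later in the paper. Upgrading the statement to a genuinely global one requires a finer analysis of the moduli space of (flat) connections on $\partial M$ and of the Yang-Mills equation on $M$, and this is where the real difficulty of the conjecture lies.
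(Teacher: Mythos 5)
The paper presents this statement as a \emph{conjecture} and offers no proof, remarking only that isotropy of $L_M$ is easy and that the Lagrangian property holds for small connections by perturbation theory starting from an Abelian one. Your proposal is a fleshed-out version of exactly that local argument (isotropy from the Hamilton--Jacobi identity $\delta S_M|_{EL_M}=\pi^*(\alpha_{\partial M}|_{L_M})$, plus a gauge-fixed elliptic Dirichlet problem near an irreducible, non-degenerate background in the spirit of \cite{Mar}), and your diagnosis that the genuinely open content is the global statement --- non-uniqueness of solutions with given boundary datum, reducible connections, Gribov ambiguities --- coincides with the authors' own assessment of why this remains a conjecture.
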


It is clear that this is true for small connections, when we can rely
on perturbation theory staring from an Abelian connection. It is also easy to prove that
$L_M$ is isotropic.

\subsubsection{} Define the \emph{Cauchy subspace}

\begin{equation}
C_{\partial M} = \pi_\epsilon(EL_{\partial M_{\epsilon}})
\end{equation}

where $\partial M_{\epsilon} = [0, \epsilon) \times \partial M$ and $\pi_\epsilon: F_{\partial M_{\epsilon}}\to F_{\pa M}$ is the restriction of fields to $\{0\}\times \pa M$. In other words $C_{\pa M}$ is the space of
boundary values of solution to Euler-Lagrange equations in $\partial M_{\epsilon} = [0, \epsilon) \times \partial M$. It is easy to see that\footnote{
The subspace $C_{\partial M}$ also makes sense also in scalar field theory, where explicitly it consists of pairs $(p, \varphi) \in \Omega^{n-1}(\partial M) \oplus \Omega^0(\partial M)$ where $p$ is the pullback of $p_0 = \ast d \varphi_0$ and $\varphi$ is the boundary value of $\varphi_0$ which solves the Euler-Lagrange equation  $\Delta \varphi_0 - V'(\varphi_0) = 0$. Since Cauchy problem has unique solution in a small neighborhood of the
boundary, $C_{\pa M}=F_{\pa M}$ for the scalar field.}
\[
C_{\pa M}=\{(A,B)| d_AB=0\}
\]
We have natural inclusions

\[
L_M\subset C_{\pa M} \subset F_{\pa M}
\]

\subsubsection{} The automorphism group of the trivial principal $G$-bundle over $M$ can be naturally identified with $C^\infty(M,G)$. Bundle automorphisms act on the space of Yang-Mills fields. Thinking of a connection $A$ as an element $A \in \Omega^1(M, \mathfrak{g})$ we have the following formulae for the action of
the bundle automorphism (gauge transformation) $g$ on fields:
\begin{equation}\label{g-conn} g : A \mapsto A^g = g^{-1} A g + g^{-1} dg, \ \ \  B \mapsto B^g = g^{-1} B g. \end{equation}
Note that the curvature $F(A)$ is a 2-form and it transforms as $F(A^g)=g^{-1}F(A)g$. Also, if we have
two connections $A_1$ and $A_2$, their difference is a 1-form and $A_1^g-A_2^g=g^{-1}(A_1-A_1)g$.

The Yang-Mills functional is invariant under this symmetry:

\begin{equation} S_M(A^g, B^g) = S_M(A,B) \end{equation}
which is just the consequence of the cyclic property of the trace.

The restriction to the boundary gives the projection map of gauge groups $\tilde{\pi} : G_M \to G_{\partial M}$ which is a group homomorphism. This map is surjective, so we obtain an exact sequence

\begin{equation} 0 \to \text{Ker}(\tilde{\pi}) \to G_M \to G_{\partial M} \to 0 \end{equation}
where $\text{Ker}(\tilde{\pi})$ is the group of gauge transformations acting trivially at the boundary.

It is easy to check that boundary gauge transformations $G_{\partial M}$ preserve the symplectic form $\omega_{\partial M}$. The action of $G_M$ induces an infinitesimal action of the Lie algebra $\mathfrak{g}_M=C^\infty(M,\mathfrak{g})$ of $G_M$
by vector fields on $F_M$. For $\lambda \in \mathfrak{g}_M$  we denote by $(\delta_{\lambda} A , \delta_{\lambda} B)$ the tangent vector to $F_M$ at the point $(A,B)$ corresponding to the action of $\lambda$:

\begin{equation} \delta_{\lambda} A = -[\lambda, A] + d \lambda = d_A \lambda, \delta_{\lambda} B = -[\lambda, B] \end{equation}
where the bracket is the pointwise commutator (we assume that $\mathfrak{g}$ is a matrix Lie algebra). Recall that the action of a Lie group on a symplectic manifold is Hamiltonian
if vector fields describing the action of the Lie algebra $Lie(G)$ are Hamiltonian.

We have the following

\begin{theorem} The action of $G_{\partial M}$ on $F_{\partial M}$ is Hamiltonian. \end{theorem}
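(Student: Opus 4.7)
\medskip
\textbf{Proof strategy.}
The plan is to exhibit an equivariant moment map for the $G_{\partial M}$-action and then verify the Hamiltonian vector field condition term by term. Writing $\mathfrak{g}_{\partial M}=C^\infty(\partial M,\mathfrak{g})$, I identify the (smooth) dual of $\mathfrak{g}_{\partial M}$ with $\Omega^{n-1}(\partial M,\mathfrak g)$ via the nondegenerate pairing $\langle f,\lambda\rangle=\int_{\partial M}\tr(\lambda f)$. My candidate moment map is
\[
\mu:F_{\partial M}\to \mathfrak g_{\partial M}^{\ast},\qquad \mu(A,B)=d_AB,
\]
so that the candidate Hamiltonian generating the action of $\lambda\in\mathfrak g_{\partial M}$ is
\[
H_\lambda(A,B)=\int_{\partial M}\tr\bigl(\lambda\cdot d_AB\bigr).
\]
Note that on $\partial M$ the expression $d_AB=dB+[A\wedge B]$ is an $(n-1)$-form, so this pairing is well-defined and depends only on the restrictions of $A,B$ to the boundary.

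\medskip
\textbf{Main step: $\iota_{X_\lambda}\omega_{\partial M}=\delta H_\lambda$.}
Recall from the preceding subsection that the infinitesimal gauge action is $X_\lambda=(d_A\lambda,\,-[\lambda,B])$ and $\omega_{\partial M}=\int_{\partial M}\tr(\delta A\wedge\delta B)$. Contracting with $X_\lambda$ and applying it to a tangent vector $(a,b)\in T_{(A,B)}F_{\partial M}$ gives
\[
(\iota_{X_\lambda}\omega_{\partial M})(a,b)=\int_{\partial M}\tr\bigl(d_A\lambda\wedge b+a\wedge[\lambda,B]\bigr),
\]
up to an overall sign that is fixed by the convention for $\omega$ as a 2-form on $F_{\partial M}$. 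On the other hand, using $\delta(d_AB)=d_A\delta B+[\delta A\wedge B]$, I compute
\[
\delta H_\lambda(a,b)=\int_{\partial M}\tr\bigl(\lambda\cdot d_Ab\bigr)+\int_{\partial M}\tr\bigl(\lambda\cdot[a\wedge B]\bigr).
\]
The key routine calculations are then two integrations by parts: since $\partial(\partial M)=\emptyset$, Stokes' theorem gives $\int_{\partial M}\tr(\lambda\, d_A b)=\pm\int_{\partial M}\tr(d_A\lambda\wedge b)$, and the cyclicity of the trace rewrites $\tr(\lambda[a\wedge B])=\pm\tr(a\wedge[\lambda,B])$. Matching the two sides amounts to checking that the degree-dependent signs conspire correctly; this is the part I expect to be the most error-prone. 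The reason to be optimistic is that the same moment map $d_AB$ is precisely the gauge constraint of first-order Yang--Mills theory identified in the previous subsection, i.e.\ it cuts out $C_{\partial M}\subset F_{\partial M}$.

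\medskip
\textbf{Equivariance and conclusion.}
Finally, I would check the bracket relation $\{H_\lambda,H_\mu\}=H_{[\lambda,\mu]}$, which is equivalent to $G_{\partial M}$-equivariance of $\mu$. Equivariance of $\mu(A,B)=d_AB$ itself follows from the transformation laws \eqref{g-conn}: under $g\in G_{\partial M}$ one has $d_{A^g}B^g=g^{-1}(d_AB)g$, so $\mu$ intertwines the gauge action on $F_{\partial M}$ with the coadjoint action on $\mathfrak g_{\partial M}^{\ast}$. This implies the Poisson bracket identity by standard arguments, and together with the Hamiltonian vector field computation above completes the proof that the $G_{\partial M}$-action on $(F_{\partial M},\omega_{\partial M})$ is Hamiltonian with moment map $\mu$. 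The main obstacle, as noted, is sign bookkeeping in arbitrary dimension $n$; a clean way to organize this is to treat $\alpha_{\partial M}$ as $G_{\partial M}$-invariant (which follows from the gauge invariance of $S_M$ modulo boundary terms that vanish once one fixes a trivialization) and derive $H_\lambda=\iota_{X_\lambda}\alpha_{\partial M}$ directly, so that Cartan's formula $\mathcal L_{X_\lambda}\alpha_{\partial M}=0$ automatically yields $\iota_{X_\lambda}\omega_{\partial M}=-\delta H_\lambda$.
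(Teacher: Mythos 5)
Your proposal is correct and follows essentially the same route as the paper: you use the identical Hamiltonian $H_\lambda=\int_{\partial M}\tr(\lambda\, d_AB)$ (equivalently the moment map $\mu(A,B)=d_AB$ cutting out $C_{\partial M}$), and your verification via $\iota_{X_\lambda}\omega_{\partial M}=\delta H_\lambda$ is the same integration-by-parts computation the paper phrases as $\delta_\lambda f=\{H_\lambda,f\}$ using the Poisson bracket. The added equivariance check is a harmless bonus beyond what the paper records.
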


Indeed, let $f$ be a function on $F_{\partial M}$ and let $\lambda \in \mathfrak{g}_{\partial M}$. Let $\delta_{\lambda} f$ denote the  Lie derivative of the corresponding infinitesimal gauge transformation. Then

\begin{equation} \delta_{\lambda} f ((A, B)) = \int_{\partial M} \text{tr} \left( \frac{\delta f}{\delta A} \wedge d_A \lambda + \frac{\delta f}{\delta B} \wedge [\lambda, B] \right) \end{equation}
Let us show that this is the Poisson bracket $\{ H_{\lambda}, f \}$ where

\begin{equation} H_{\lambda} = \int_{\partial M} \text{tr}(\lambda d_A B). \end{equation}
The Poisson bracket on functions on $F_{\partial M}$ is given by

\begin{equation} \{ f, g \} = \int_{\partial M} \text{tr} \left( \frac{\delta f}{\delta A} \wedge \frac{\delta g}{\delta B} - \frac{\delta g}{\delta A} \wedge \frac{\delta f}{\delta B} \right). \end{equation}

We have
\begin{equation} \frac{\delta H_{\lambda}}{\delta A} = \frac{\delta}{\delta A} \left( \int_{\partial M} \text{tr}(\lambda \, dB + \lambda [A \wedge b]) \right) = [\lambda, B] \end{equation}
and, using integration by parts:

\begin{equation} \frac{\delta H_{\lambda}}{\delta B} = d_A B = dB + [A \wedge B] \end{equation}
This proves the statement.

An important corollary of this fact is that the Hamiltonian action of $G_M$ induces a moment map $\mu : F_{\partial M} \to \mathfrak{g}_{\partial M}^{\ast}$, and it is clear that
\[
C_{\pa M}=\mu^{-1}(0)
\]
This implies that $C_{\pa M}\subset F_{\pa M}$ {\it is a coisotropic submanifold}.

\begin{remark}
Let us show directly that $C_{\partial M} \subset F_{\partial M}$ is a coisotropic subspace of the symplectic space $F_{\partial M}$ when $\mathfrak{g} = \mathbb{R}$. We need to show that
$C_{\partial M}^{\perp}\subset C_{\partial M}$ where $C^\perp$ is the symplectic orthogonal to $C$.

The subspace $C_{\partial M}^{\perp}$ consists of all $(\alpha, \beta) \in \Omega^1(\partial M) \oplus \Omega^{n-2}(\partial M)$ such that

\begin{equation} \int_{\partial M} a \wedge \beta + \int_{\partial M} \alpha \wedge b = 0 \end{equation}

for all $(a, b) \in C_{\partial M} \subset \Omega^1(\partial M) \oplus \Omega^{n-2}(\partial M)$. This condition for all $a$ gives that $\beta = 0$ and requiring this condition for all $b$ gives that $\alpha$ is exact, so we have $C_{\partial M}^{\perp} = \Omega^1_{\text{ex}}(\partial M) \subset C_{\partial M}$ as desired.
\end{remark}

\subsubsection{} The differential $\delta S_M$ of the action functional is the sum of the bulk term
defining the Euler-Lagrange equations and of the boundary term defining
the 1-form $\alpha_{\pa M}$ on the space of boundary fields.
The bulk term vanishes on solutions of the Euler-Lagrange equations, so we have

\begin{equation}
\delta S_M |_{EL_M} = \pi^{\ast}(\alpha_{\partial M}|_{L_M})
\end{equation}
where $\pi : F_M \to F_{\partial M}$ is the restriction to the boundary and $L_M = \pi(EL_M)$. This is analogous to the property of the Hamilton-Jacobi action in classical mechanics.

Because $S_M$ is gauge invariant, it defines the functional on gauge classes of fields and thus, on
gauge classes of solutions to Euler-Lagrange equations.
Passing to gauge classes we now replace the chain of inclusions of gauge invariant subspaces
$L_M \subset C_{\partial M} \subset F_{\partial M}$ with the chain of inclusions of
corresponding gauge classes

\begin{equation} L_M/G_{\partial M} \subset C_{\partial M}/G_{\partial M} \subset F_{\partial M}/G_{\partial M}. \end{equation}

The rightmost space is a Poisson manifold since the action of $G_{\partial M}$ is Hamiltonian. The middle space is the Hamiltonian reduction of $C_{\partial M}$ and is a symplectic leaf in the rightmost space. The leftmost space is still Lagrangian by the standard arguments from symplectic geometry.

\subsubsection{} A natural Lagrangian fibration $p_\pa: \Omega^{n-2}(\pa M)\oplus \Omega^1(\pa M)\to \Omega^1(\pa M)$
corresponds to the Dirichlet boundary conditions when we fix the pull-back of $A$ to the boundary: $a=i^*(A)$.
Such boundary conditions are compatible with the gauge action. Another example of the family of gauge invariant boundary conditions
corresponds to Neumann boundary conditions and is given by the Lagrangian fibration $p_\pa: \Omega^{n-2}(\pa M)\oplus \Omega^1(\pa M)\to \Omega^{n-2}(\pa M)$.

\subsection{Classical Chern-Simons theory}\label{cCS}

\subsubsection{} Spacetimes for classical Chern-Simons field theory are smooth, compact, oriented $3$-manifolds. Let $M$ be such manifold fields $F_M$ on $M$ are connections on the trivial $G$-bundle over $M$ with $G$ being compact, semisimple, connected, simply connected Lie group. We will identify the space of connections with the space of $1$-forms $\Omega^1(M, \mathfrak{g})$. The action functional is

\begin{equation} S(A) = \int_M \text{tr} \left( \frac{1}{2} A \wedge dA + \frac{1}{3} A \wedge A \wedge A \right) \end{equation}
where $A$ is a connection.

The variation is
\begin{equation} \delta S_M(A) = \int_M \text{tr}(F(A) \wedge \delta A) +\frac{1}{2} \int_{\partial M} \text{tr}(A \wedge \delta A) \end{equation}
so the space of solutions $EL_M$ to the Euler-Lagrange equations is the space of flat connections:
\[
 EL_M=\{A| F(A)=0\}
\]
The boundary term defines the $1$-form on boundary fields (connections on the trivial
$G$-bundle over the boundary which we will identify with $\Omega^1(\pa M))$:

\begin{equation} \alpha_{\partial M} = -\frac{1}{2}\int_{\partial M} \text{tr}(A \wedge \delta A) \end{equation}

This 1-form on boundary fields defines the symplectic structure on the space
of boundary fields:
\begin{equation}\label{cs-omega}
\omega_{\pa M}=\delta\alpha_{\pa M}=-\frac{1}{2}\text{tr} \int_{\pa M} \delta A\wedge \delta A
\end{equation}

\subsubsection{} The gauge group $G_M$ is the group of bundle automorphisms of of the trivial principal $G$-bundle over $M$. It can be naturally
be identified with the space of smooth maps $M \to G$ which transform connections as
in (\ref{g-conn}) and we have:

\begin{equation} \label{g-cs-act}S_M(A^g) = S_M(A) + \frac{1}{2}\text{tr}\int_{\pa M} (g^{-1}Ag\wedge g^{-1}dg)-\frac{1}{6}\text{tr} \int_M g^{-1} dg  \wedge g^{-1} dg  \wedge g^{-1} dg . \end{equation}

Assume the integrality of the Maurer-Cartan form on $G$:
\[
\theta=-\frac{1}{6}\text{tr}(dg\, g^{-1} \wedge dg\, g^{-1} \wedge dg\, g^{-1})
\]
i.e. we assume that the normalization of the Killing form is chosen in such a way that
$[\theta]\in H^3(M, \ZZ)$.
Then for a closed manifold $M$ the expression
\[
W_M(g)=-\frac{1}{6}\text{tr} \int_M dg\, g^{-1} \wedge dg\, g^{-1} \wedge dg\, g^{-1}
\]
is an integer and therefore $S_M\mod \ZZ$ is gauge invariant (for details see for example
\cite{Fr}).

\begin{proposition} When the manifold $M$ has a boundary, the functional $W_M(g)\mod \ZZ$
depends only on the restriction of $g$ to $\pa M$.
\end{proposition}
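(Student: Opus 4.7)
The plan is to deduce the statement from the closed-manifold integrality already recalled: it suffices to exhibit $W_M(g_1)-W_M(g_2)$ as $W_{\tilde M}(\tilde g)$ for some closed oriented smooth $3$-manifold $\tilde M$ and smooth $\tilde g:\tilde M\to G$, whenever $g_1|_{\partial M}=g_2|_{\partial M}$.

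The first step is a homotopy-invariance lemma: if $g_s:M\to G$ is a smooth $1$-parameter family whose restriction $g_s|_{\partial M}$ is independent of $s$, then $W_M(g_s)$ is independent of $s$. To verify this, set $F(s,x)=g_s(x)$ and decompose the closed $3$-form $F^*\theta$ on $[0,1]\times M$ as
\[
F^*\theta \;=\; ds\wedge \eta_s \;+\; \zeta_s,
\]
with $\eta_s\in\Omega^2(M)$ and $\zeta_s\in\Omega^3(M)$; since $i_s^*ds=0$, the time-$s$ restriction is $g_s^*\theta=\zeta_s$. Closedness of $F^*\theta$ (dimensional considerations kill any $d_M\zeta_s$ term since $\dim M=3$) forces $\partial_s\zeta_s=d_M\eta_s$, so Stokes gives
\[
\frac{d}{ds}W_M(g_s) \;=\; \int_M \partial_s\zeta_s \;=\; \int_{\partial M} i^*\eta_s.
\]
Because the homotopy is stationary on $\partial M$, the map $F$ restricted to $[0,1]\times\partial M$ is independent of $s$, so $\eta_s|_{\partial M}=0$ and the derivative vanishes.

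Now form the double $\tilde M = M\cup_{\partial M}\bar M$, where $\bar M$ denotes a copy of $M$ with reversed orientation; this is a closed, oriented, smooth $3$-manifold. Using a collar $(-\epsilon,0]\times\partial M\subset M$, I replace each $g_i$ by a smoothly homotopic $g_i'$ that is constant in the collar direction near $\partial M$, e.g.\ $g_i'(t,x)=g_i(\chi(t),x)$ for a cutoff $\chi$ equal to $0$ near $t=0$ and to $t$ outside the collar. Then $g_1'$ and $g_2'$ both equal the common boundary value $g_1|_{\partial M}=g_2|_{\partial M}$ on a collar of $\partial M$, so the glued map $\tilde g:\tilde M\to G$, taken to be $g_1'$ on the $M$-side and $g_2'$ on the $\bar M$-side, is smooth. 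The invariance lemma gives $W_M(g_i')=W_M(g_i)$, and additivity of the integral together with orientation reversal on $\bar M$ yields
\[
W_{\tilde M}(\tilde g) \;=\; W_M(g_1') - W_M(g_2') \;=\; W_M(g_1) - W_M(g_2).
\]
Since $\tilde M$ is closed, the integrality $[\theta]\in H^3(G,\ZZ)$ gives $W_{\tilde M}(\tilde g)\in\ZZ$, and the proposition follows.

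The main obstacle is the smoothness of $\tilde g$ across the seam $\partial M$, and the invariance lemma is precisely what makes the collar-smoothing step cost-free: $W_M(g_i)$ is unchanged under homotopies rel $\partial M$ (not merely modulo $\ZZ$), so no information is lost in replacing $g_i$ by $g_i'$. Once this is in hand, the remainder is bookkeeping with orientations, additivity of integrals over $\tilde M \setminus \partial M = M^\circ \sqcup \bar M^\circ$, and the closed-manifold integrality already established.
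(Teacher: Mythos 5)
Your proof is correct and follows essentially the same route as the paper: glue $M$ to an oppositely oriented copy along $\partial M$, observe that the difference of the two functionals becomes an integral of $\tilde g^*\theta$ over a closed oriented $3$-manifold, and invoke the integrality of $[\theta]$. The only addition is your homotopy-invariance lemma and collar-straightening step, which justify the smoothness of the glued map $\tilde g$ across the seam --- a point the paper's proof asserts without argument.
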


Indeed, $M'$ be another manifold with the boundary $\pa M'$ which differs from
$\pa M$ only by reversing the orientation, so that the result of the gluing $M\cup M'$
along the common boundary is smooth. Then
\[
W_M(g)-W_{M'}(g')=-\frac{1}{6}\text{tr}\int_{M\cup M'} \int_M d\tilde{g} \tilde{g}^{-1} \wedge d\tilde{g} \tilde{g}^{-1} \wedge d\tilde{g} \tilde{g}^{-1} \in \ZZ
\]
Here $\tilde{g}$ is the result of gluing maps $g$ and $g'$ into a map $M\cup M'\to G$.
Therefore, modulo integers, it does not depend on $g$ and $g'$.

For $a$ a connection on the trivial principal $G$-bundle over a 2-dimensional
manifold $\Sigma$ and for $g\in C^\infty(\Sigma, G)$ define
\[
c_\Sigma(a,g)=\exp\left(2\pi i \left(\frac{1}{2}\int_{\pa M} \text{tr}(g^{-1}ag\wedge g^{-1}dg)+W_\Sigma(g)\right)\right)
\]
Here we wrote $W_\Sigma(g)$ because $W_M(g)\mod\ZZ$ depends only on the value of $g$ on $\pa M$.

The transformation property (\ref{g-cs-act}) of the Chern-Simons action
implies that the functional
\[
\exp(2\pi i S_M(A))
\]
transforms as
\[
\exp(2\pi i S_M(A^g))=\exp(2\pi i S_M(A))c_{\pa M}(i^*(A),i^*(g))
\]
where $i^*$ is the restriction to the boundary (pull-back).
For further details on gauge aspects of Chern-Simons theory see
\cite{Fr}\cite{Himpel}.

Now we can define the gauge invariant version of the Chern-Simons action. Consider the trivial circle
bundle $\cL_M=S^1\times F_M$ with the natural projection $\cL_M\to F_M$. Define the action of $G_M$
on $\cL_M$ as
\[
g: (\lambda, A)\mapsto (\lambda c_{\pa M}(i^*(A), i^*(g)), A^g)
\]
The functional $\exp(2\pi i S_M(A))$ is a $G_M$-invariant section of this bundle.
The restriction of $\cL_M$ to the boundary gives the trivial $S^1$-bundle over $F_{\pa M}$
with the $G_{\pa M}$-action
\[
g:(\lambda, A)\mapsto (\lambda c_{\pa M}(A, g), A^g)
\]
The $1$-form $\alpha_{\pa M}$ is a $G_{\pa M}$-invariant connection of $\cL_{\pa M}$.
The curvature of this connection is the $G_{\pa M}$-invariant symplectic form $\omega_{\pa}$.

By definition of $\alpha_{\pa M}$ we have the Hamilton-Jacobi property of the action:

\begin{equation} \delta S_M|_{EL_M} = \pi^{\ast}(\alpha_{\partial M}|_{L_M}) \end{equation}

\subsubsection{} Now, when the gauge symmetry of the Chern-Simons theory is clarified, let us pass to
gauge classes. The action of boundary gauge transformations on $F_{\partial M}$ is Hamiltonian with respect to the symplectic form (\ref{cs-omega}). It is easy to check (and it is well known) that the vector field
on $F_{\pa M}$ generating infinitesimal gauge transformation $A\to A+d_A\lambda$
is Hamiltonian with the generating function

\begin{equation} H_{\lambda}(A) = \int_{\partial M} \text{tr}(F(A) \lambda). \end{equation}

This induces the moment map $\mu : F_{\partial M} \to \mathfrak{g}_{\partial M}^{\ast}$ given by $\mu(A)(\lambda) = H_{\lambda}(A)$.

Let $C_{\partial M}$ be the space of Cauchy data, i.e. boundary values of connections which are flat in a small neighborhood of the boundary. It can be naturally identified with the space of flat $G$-connections on $\partial M$ and thus, $C_{\pa M}=\mu^{-1}(0)$. Hence $C_{\partial M}$ is a coisotropic submanifold of $F_{\partial M}$. We have a chain of inclusions

\begin{equation} L_M = \pi(EL_M) \subset C_{\partial M} \subset F_{\partial M} \end{equation}
where $L_M$ is the space of flat connections on $\partial M$ which extend to flat connections on $M$.
Using the appendix from \cite{CMR} one can easily show that $L_M$ is Lagrangian.

We have following inclusions of the spaces of gauge classes
\begin{equation} L_M/G_{\partial M} \subset C_{\partial M}/G_{\partial M} \subset F_{\partial M}/G_{\partial M} \end{equation}
where the middle term is the Hamiltonian reduction $\mu^{-1}(0)/G_{\partial M} \cong \underline{C}_{\partial M}$, which is symplectic. The left term is Lagrangian, and the right term is Poisson. Note that the middle term is
a finite dimensional symplectic leaf of the infinite dimensional Poisson manifold $F_{\partial M}/G_{\partial M}$.

The middle term $C_{\partial M}/G_{\partial M}$ is the moduli space $\mathcal{M}^G_{\partial M}$ of flat $G$-connections on $\partial M$. It is naturally isomorphic to the representation variety:
\[
\mathcal{\pa M}^G_{\pa M} \cong \text{Hom}(\pi_1(\pa M), G)/G
\]
where $G$ acts on $\text{Hom}(\pi_1(M), G)$ by conjugation. We
will denote the symplectic structure on this space by $\underline{\omega}_{\pa M}$.

Similarly, we have $EL_M/G_M=\mathcal{M}^G_M \cong \text{Hom}(\pi_1(M), G)/G$, which is the moduli space of flat $G$-connections on $M$. Unlike in Yang-Mills case, these spaces are finite-dimensional.

The image of the natural projection $\pi : \mathcal{M}^G_M \to \mathcal{M}^G_{\partial M}$ is the reduction
of $L_M$ which we will denote by  $\underline{L}_M=L_M/G_M$.

Reduction of $\cL_M$ and of $\cL_{\pa M}$ gives line bundles $\underline{\cL}_M=\cL_M/G_M$ and $\underline{\cL}_{\pa M}=\cL_{\pa M}/G_{\pa M}$ over $\mathcal{M}^G_M$ and $\mathcal{M}^G_{\partial M}$ respectively.
The $1$-form $\alpha_{\partial M}$ which is also
a $G_{\pa M}$-invariant connection on $\cL_{\pa M}$ becomes a connection on $\underline{\cL}_{\pa M}$ with the curvature $\underline{\omega}_{\pa M}$.

The Chern-Simons action yields a section $cs$ of the pull-back of the line bundle $\cL_{\pa M}$ over $\mathcal{M}^G_{\partial M}$. Because $\underline{L}_M$
is a Lagrangian submanifold, the symplectic form $\underline{\omega}_{\pa M}$ vanishes on it
and the restriction of the connection
$\underline{\alpha_{\partial M}}$ to $\underline{L}_M$ results in a flat connection over $\cL_{\pa M}|_{\underline{L}_M}$. The section $cs$ is horizontal with respect to the pull-back of the connection
$\underline{\alpha_{\partial M}}$. It can be written as
\begin{equation} (d -\pi^{\ast}(\underline{\alpha_{\partial M}}|_{L_M} ))cs = 0 \end{equation}
This collection of data is the \emph{reduced Hamiltonian structure} of the Chern-Simons theory.

\subsubsection{} There are no natural non-singular Lagrangian fibrations on the space of connections
on the boundary which are compatible with the gauge action. However, for formal
semiclassical quantization we need only such fibration near a smooth point in the space of connections.
We will return to this later, in section \ref{NAqcs}. Now we will describe another structure on
the space of boundary fields for the Chern-Simons theory which is used in geometric quantization \cite{AdPW}.

Instead of looking for a real Lagrangian fibration, let us choose a complex polarization
of $\Omega^1(M,\g)_{\mathbb C}$. Fixing a complex structure on the boundary, gives us the natural
decomposition
\[
\Omega^1(\pa M,\g)_{\mathbb C}=\Omega^{1,0}(\pa M,\g)_{\mathbb C}\oplus \Omega^{0,1}(\pa M,\g)_{\mathbb C}
\]
and we can define boundary fibration as the natural projection to $\Omega^{1,0}(\pa M,\g)_{\mathbb C}$.
Here elements of $\Omega^{1,0}(\pa M,\g)_{\mathbb C}$ are $\g_{\mathbb C}$-valued forms which locally can be written as $a(z,\overline{z})dz$ and elements of $\Omega^{0,1}(\pa M,\g)_{\mathbb C}$ can be written as $b(z,\overline{z})d\overline{z}$. The decomposition above locally works as
follows:
\[
A=\cA+\overline{\cA}
\]
where $\cA=a(z,\overline{z})dz$.

In terms of this decomposition the symplectic form is
\[
\omega=\int_{\pa M} \text{tr} \  \delta \cA\wedge \delta \overline{\cA}
\]
It is clear that subspaces $\cA + \Omega^{0,1}(\pa M)$ are Lagrangian in the complexification of
$\Omega(M, \g)$. Thus, we have Lagrangian fibration $\Omega(M,\g)_\CC\to \Omega^{0,1}(M,\g)_\CC$.
The action of the gauge group preserves the fibers.

However, the form $\alpha_{\pa M}$ does not vanish of these fibers. To make it vanish
we should modify the action as
\[
\tilde{S}_M=S_M+\frac{1}{2}\int_{\pa M} \text{tr}   \ (\cA\wedge \overline{\cA})
\]
After this modification, the boundary term in the variation of the action
gives the form
\[
\tilde{\alpha}_{\pa M}=-\int_{\pa M} \text{tr} \ ( \overline{\cA}\wedge \delta{\cA})
\]
This form vanishes on fibers. It is not gauge invariant as well as the modified action.
The modified action transforms under gauge transformations as
\[
\tilde{S}_M(A^g) = \tilde{S}_M(A) + \frac{1}{2}\text{tr}\int_{\pa M} (g^{-1}\cA g\wedge g^{-1}\overline{\pa}g)+W_M(g)
\]
This gives the following cocycle on the boundary gauge group
\[
\tilde{c}_\Sigma(A,g)=\exp(2\pi i (\frac{1}{2}\int_{\Sigma} \text{tr}(g^{-1}\cA g\wedge g^{-1}\overline{\pa}g)+W_\Sigma(g)))
\]
This modification of the action and this complex polarization of the space of boundary
fields is important for geometric quantization in Chern-Simons theory \cite{AdPW} and
is important for understanding the relation between the Chern-Simons theory and the WZW theory, see for example \cite{EM},\cite{ABM}.
We will not expand this direction here, since we are interested in formal
semiclassical quantization where real polarizations are needed.

\subsection{BF-theory} Space time $M$ is smooth, oriented\footnote{ The orientability assumption can be dropped, see \cite{CR}} and compact and is equipped with a trivial $G$-bundle where $G$ is connected, simple or abelian  compact Lie group.  Fields are

\begin{equation} F_M = \Omega^1(M, \mathfrak{g}) \oplus \Omega^{n-2}(M, \mathfrak{g}) \end{equation}
where $\Omega^1(M, \mathfrak{g})$ describes connections on the trivial $G$-bundle.

The action functional of the BF theory is the topological term of Yang-Mills action:

\begin{equation} S_M(A, B) = \int_M \text{tr}(B \wedge F(A)) \end{equation}

For the variation of $S_M$ we have:

\begin{equation} \delta S_M = \text{tr}\int_M \delta B\wedge F(A) + (-1)^{n-1}\text{tr}\int_M d_AB\wedge \delta A + (-1)^{n-1}\text{tr} \int_{\partial M} B \wedge \delta A \end{equation}
The bulk term gives Euler-Lagrange equations:

\begin{equation} EL_M = \{ (A, B) : F(A) = 0, \ \ d_A B = 0 \}. \end{equation}
The boundary term gives a $1$-form on the space of boundary fields $F_{\partial M} = \Omega^1(\partial M, \mathfrak{g}) \oplus \Omega^{n-2}(\partial M, \mathfrak{g})$:

\begin{equation} \alpha_{\partial M} = \int_{\partial M} \text{tr}(B \wedge \delta A) \end{equation}

The corresponding exact symplectic form is

\begin{equation} \omega_{\partial M} = \delta \alpha_{\partial M} = \int_{\partial M} \text{tr}(\delta B \wedge \delta A). \end{equation}
The space of Cauchy data is
\[
C_{\pa M}=\{(A,B)| F_A=0, \ \ d_AB=0\}
\]
Boundary values of solutions of the Euler-Lagrange equations on $M$ define the submanifold
$L_M=\pi(EL_M)\subset F_{\pa M}$. This submanifold is Lagrangian.
Thus we have the embedding:
\[
L_M\subset C_{\pa M}\subset F_{\pa M}
\]
where $F_{\pa M}$ is exact symplectic, $C_{\pa M}$ is co-isotropic, and $L_M$ is Lagrangian.

\subsubsection{} The space of bundle automorphisms $G_M$ is the space of smooth maps $M \to G$. They act on $A \in \Omega^1(M, \mathfrak{g})$ by $A \mapsto g^{-1} Ag + g^{-1} dg$ and on $B \in \Omega^{n-2}(M, \mathfrak{g})$ by $B \mapsto g^{-1} Bg$. As in Yang-Mills theory the action is invariant with respect to these transformations.

In addition, it is almost invariant with respect to transformations $A \mapsto A, B \mapsto B + d_A \beta$ where $\beta \in \Omega^{n-3}(M, \mathfrak{g})$:

\begin{equation} S_M(A, B + d_A \beta) = S_M(A, B) + \int_M \text{tr}(d_A \beta \wedge F(A)) \end{equation}
After integration by parts in the second term we write it as

\begin{equation} \int_M \text{tr}(\beta \wedge d_A F(A)) + \int_{\partial M} \text{tr}(\beta \wedge F(A)). \end{equation}
The bulk term here vanishes because of the Bianchi identity and the only additional contribution is a boundary term, thus:
\[
S_M(A, B + d_A \beta) = S_M(A, B) + \text{tr} \int_{\partial M} (\beta \wedge F(A))
\]
The additional gauge symmetry $B \mapsto B + d_A \beta$ gives us a larger gauge group

\begin{equation} G_M^{BF} = G_M \times \Omega_M^{n-3} \end{equation}
Its restriction to the boundary gives the boundary gauge group

\begin{equation} G_{\partial M}^{BF} = G_{\partial M} \times \Omega_{\partial M}^{n-3}. \end{equation}

The action is invariant up to a boundary term. This means that the $1$-form $\alpha_{\pa M}$ is not gauge invariant.
Indeed, it is invariant with respect to $G_M$-transformations, but when $(A,B)\mapsto (A, B+d_A\beta)$
the forms $\alpha_{\pa M}$ transforms as
\[
\alpha_{\pa M}\mapsto \alpha_{\pa M}+\int_{\pa M} \text{tr} \ d_A\beta\wedge \delta A
\]
However, it is clear that the symplectic form $\omega_{\pa M}=\delta\alpha_{\pa M}$ is gauge invariant. Moreover, we have the following.
\begin{theorem} The action of $G_{\partial M}^{BF}$ is Hamiltonian. \end{theorem}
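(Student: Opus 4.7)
The strategy is to decompose $G^{BF}_{\partial M}=G_{\partial M}\times \Omega^{n-3}_{\partial M}$ into its two factors and produce a Hamiltonian generator for each; since the symplectic form is linear and both infinitesimal actions are tangent to $F_{\partial M}$, Hamiltonicity of the full action then reduces to Hamiltonicity of each factor. The two infinitesimal actions to handle are: for $\lambda\in\mathfrak{g}_{\partial M}$,  $\delta_\lambda A = d_A\lambda$ and $\delta_\lambda B = -[\lambda,B]$; and for $\beta\in\Omega^{n-3}(\partial M,\mathfrak{g})$, $\delta_\beta A = 0$ and $\delta_\beta B = d_A\beta$.

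For the $G_{\partial M}$-factor the situation is literally the one analyzed in the Yang--Mills theorem above. The same calculation applies verbatim because the boundary symplectic form $\omega_{\partial M}=\int_{\partial M}\text{tr}(\delta B\wedge\delta A)$ and the transformation laws for $A$ and $B$ are identical. Hence I would simply re-invoke that argument to conclude that $\delta_\lambda$ is Hamiltonian with generating function
\begin{equation*}
H_\lambda(A,B) \;=\; \int_{\partial M}\text{tr}(\lambda\, d_A B).
\end{equation*}

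For the $\Omega^{n-3}_{\partial M}$-factor the natural candidate, suggested by the boundary-term computation $S_M(A,B+d_A\beta)-S_M(A,B)=\text{tr}\int_{\partial M}(\beta\wedge F(A))$, is
\begin{equation*}
H_\beta(A,B) \;=\; \int_{\partial M}\text{tr}\bigl(\beta\wedge F(A)\bigr).
\end{equation*}
To verify this, I would compute the variation
\begin{equation*}
\delta H_\beta \;=\; \int_{\partial M}\text{tr}\bigl(\beta\wedge d_A\,\delta A\bigr)
\end{equation*}
and integrate by parts. Since $\partial(\partial M)=\emptyset$, no secondary boundary terms appear, and using $d\,\text{tr}(\beta\wedge\delta A)=\text{tr}(d_A\beta\wedge\delta A)\pm\text{tr}(\beta\wedge d_A\delta A)$ (with the sign dictated by the degrees $|\beta|=n-3$, $|\delta A|=1$), one reads off $\delta H_\beta/\delta A=\pm d_A\beta$ and $\delta H_\beta/\delta B=0$. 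Plugging these into the Poisson bracket formula
\begin{equation*}
\{f,g\}\;=\;\int_{\partial M}\text{tr}\Bigl(\tfrac{\delta f}{\delta A}\wedge\tfrac{\delta g}{\delta B}-\tfrac{\delta g}{\delta A}\wedge\tfrac{\delta f}{\delta B}\Bigr),
\end{equation*}
one obtains $\{H_\beta,f\}=\int\text{tr}(d_A\beta\wedge \delta f/\delta B)$ which, matched against the Lie derivative $\delta_\beta f=\int\text{tr}(\delta f/\delta B\wedge d_A\beta)$, confirms (up to an absorbable sign in the definition of $H_\beta$) that $H_\beta$ generates $\delta_\beta$.

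The main obstacle is bookkeeping of signs and graded commutations: $\beta$ has degree $n-3$, the commutator $[A\wedge\beta]$ inherits a Koszul sign, and the conventions for $d_A$ acting on forms of various degrees must line up with the conventions implicit in $\omega_{\partial M}$ and the Poisson bracket. Once these are fixed consistently, no analytic difficulty remains and the conclusion is immediate; as a consistency check the combined moment map $\mu(A,B)=(d_A B,\,F(A))\in\mathfrak{g}^{*}_{\partial M}\oplus\Omega^{n-3}(\partial M,\mathfrak{g})^{*}$ cuts out precisely the Cauchy subspace $C_{\partial M}=\{(A,B):F(A)=0,\ d_A B=0\}$, which is the expected geometric outcome.
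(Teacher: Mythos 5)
Your proposal is correct and follows the same overall strategy as the paper: split $G^{BF}_{\partial M}=G_{\partial M}\times\Omega^{n-3}_{\partial M}$, recycle the Yang--Mills computation for the first factor with $H_\lambda=\int_{\partial M}\operatorname{tr}(\lambda\, d_AB)$, and exhibit a separate generating function for the $\beta$-translations. The one substantive difference is your choice of the second Hamiltonian: the paper simply asserts $H_\beta=\int_{\partial M}\operatorname{tr}(A\wedge d_A\beta)$ without verification, whereas you take $H_\beta=\int_{\partial M}\operatorname{tr}(\beta\wedge F(A))$ and actually check that its variation is $\pm\int_{\partial M}\operatorname{tr}(d_A\beta\wedge\delta A)$. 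These two expressions coincide (up to sign) only in the abelian case: since $\operatorname{tr}(A\wedge[A\wedge\beta])=2\operatorname{tr}(A\wedge A\wedge\beta)$, the paper's candidate equals $\pm\int\operatorname{tr}\bigl(\beta\wedge(dA+2\,A\wedge A)\bigr)$ and its Hamiltonian vector field moves $B$ by $d\beta+2[A\wedge\beta]$ rather than $d_A\beta$. Your version is the one consistent with the paper's own subsequent claims that the moment map is $\mu(A,B)=(d_AB,F(A))$ and that $C_{\partial M}=\mu^{-1}(0)=\{F(A)=0,\ d_AB=0\}$, so your route both fills in the omitted verification and quietly corrects a normalization slip in the stated formula. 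The remaining sign bookkeeping you flag is genuinely all that is left, and it does not affect the conclusion.
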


Indeed, if $\alpha \in \Omega^0(\partial M, \mathfrak{g})$  is an element of the Lie algebra of boundary
 gauge transformations and $\beta \in \Omega^{n-3}(\partial M, \mathfrak{g}$, then we can take

\begin{equation} H_{\alpha}(A, B) = \int_{\partial M} \text{tr}(B \wedge d_A \alpha) \end{equation}
\begin{equation} H_{\beta}(A, B) = \int_{\partial M} \text{tr}(A \wedge d_A \beta). \end{equation}
as Hamiltonians generating the action of corresponding infinite dimensional Lie algebra.

This defines a moment map $\mu : F_{\partial M} \to \Omega^0(\partial M, \mathfrak{g}) \oplus \Omega^{n-3}(\partial M, \mathfrak{g})$. It is clear that Cauchy submanifold is also $C_{\partial M} = \mu^{-1}(0)$.
This proves that it is a co-isotropic submanifold.

Note also, that the restriction of $\alpha_{\pa M}$ to $C_{\pa M}$ is $G^{BF}_{\pa M}$-invariant.
Indeed $\text{tr} \int_{\pa M} d_A\beta\wedge \delta A=- \text{tr} \int_{\pa M} \beta\wedge d_A\delta A$,
and this expression vanishes when the form is pulled-back to the space of flat connections
where $d_A\delta A=0$.
Therefore the Hamiltonian reduction of $F_{\pa M}$ which is $\underline{F}_{\pa M}=C_{\pa M}/ G^{BF}_{\pa M}$
is an exact symplectic manifold.

It is easy to see that the reduced space of fields on the boundary $\underline{F}_{\pa M}$ can be
naturally identified, as a symplectic manifold, with $T^*\mathcal{M}_{\pa M}^G$, the cotangent
bundle to the moduli space of flat connections $\mathcal{M}_{\pa M}^G=\text{Hom}(\pi_1(\pa M),G)/G$.
The canonical $1$-form on this cotangent bundle corresponds to the form $\alpha_{\pa M}$ restricted
to $C_{\pa M}$. The Lagrangian subspace $L_M\subset F_{\pa M}$ is gauge invariant. It defines the
Lagrangian submanifold
\[
L_M/G^{BF}_{\pa M}\subset  T^*\mathcal{M}_{\pa M}^G
\]

The restriction of the action functional to $EL_M$ is gauge invariant and defines
the the function $\underline{S}_M$ on $EL_M/G^{BF}_{\pa M}$. The formula
for the variation of the action gives the analog of the Hamilton-Jacobi formula.

\begin{equation} d \underline{S}_M = \pi^{\ast}(\theta|_{L_M}). \end{equation}
where $\theta$ is the canonical $1$-form on the cotangent bundle $T^*\mathcal{M}_{\pa M}^G$
restricted to $L_M/G^{BF}_{\pa M}$.

\subsubsection{} One of the natural choices of boundary conditions is the Dirichlet boundary conditions.
This is the Lagrangian fibration $\Omega^1(M,\g)\oplus \Omega^{n-2}(M,\g)\to  \Omega^1(M,\g)$.
This fibration is gauge invariant. After the reduction it gives the standard
Lagrangian fibration $T^*\mathcal{M}_{\pa M}^G\to \mathcal{M}_{\pa M}^G$.

\section{Semiclassical quantization of first order field theories}

\subsection{The framework of local quantum field theory}

We will follow the framework of local quantum field theory which was outlined
by  Atiyah and Segal for topological and conformal field theories.
In a nut-shell it is a functor from a category of cobordisms to the
category of vector spaces (or, more generally, to some
category).

All known local quantum field theories can be formulated in
this way at some very basic level. It does not mean that this is a final destination of our understanding of quantum dynamics at the microscopical scale. But at the moment this general setting includes the standard model, which agrees with most of the experimental data in high energy physics. In this sense this is the accepted framework at the moment, just as at different points of history, classical mechanics, classical electro-magnetism, and quantum mechanics were playing such a role.

A quantum field theory in a given space time category gives
a functor from this category to the category of vector
spaces (or to another ``known'' category).
It can also be regarded as an assignment of a
vector space to the boundary of the space time manifold and a vector in this vector space to
the manifold:
\[
N\mapsto H(N), \ \ M\mapsto Z_M\in H(\pa M).
\]
The identification of such assignments with linear maps
is natural assuming that the vector space assigned to the
boundary is the tensor product of vector spaces
assigned to connected components of the boundary and
that changing the orientation replaces the corresponding vector space by its dual.

The vector space assigned to the boundary is the space of boundary
states. It may depend on the extra structure at
the boundary. In this case it is a vector bundle over the space of admissible geometric data and $Z_M$ is a
section of this vector bundle. The vector $Z_M$ is called the  partition
function or the amplitude.

These data should satisfy natural axioms, which
can by summarized as follows:
\begin{enumerate}

\item The locality properties of boundary states:
\[
H(\O)=\CC\,,\quad H(N_1\sqcup N_2)=H(N_1)\otimes H(N_2),
\]
\item The locality property of the partition function
\[
Z_{M_1\sqcup M_2}=Z_{M_1}\otimes Z_{M_2}\in H(\pa M_1)\otimes
H(\pa M_2).
\]

\item For each space $N$ (an object of the space time category)
there is a non-degenerate pairing
\[
\left< .,.\right> _N: H(\overline{N})\otimes H(N)\to \CC
\]
such that $\left< .,.\right> _{N_1\sqcup N_2}=\left< .,.\right> _{N_1}\otimes \left< .,.\right> _{N_2}$.

\item The orientation reversing automorphism $\sigma: N\to \overline{N}$
lifts to a $\CC$-antilinear mapping $\widehat{\sigma}_N: H(N)\to H(\overline{N})$
which agrees with locality of $N$ and  $\widehat{\sigma}_{\overline{N}}\widehat{\sigma}_N=id_N$.
Together with the pairing $\left< .,.\right> _N$ the orientation reversing mapping
induces the Hilbert space structure on $H(N)$.

\item An orientation preserving isomorphism\footnote{By an isomorphism here we mean a mapping preserving
the corresponding geometric structure} $f:N_1\to N_2$ induces a linear isomorphism
\[
T_f: H(N_1)\to H(N_2).
\]
which is compatible with the pairing and $T_{f\sqcup g}=T_f\otimes T_g$, \ \ $T_{f\circ g}=T_fT_g$ (possibly with a cocycle).

\item The gluing axiom. This pairing should agree with
partition functions in the following sense. Let $\pa M=N\sqcup\overline{N}\sqcup{N'}$, then
\begin{equation}\label{gluing-prop}
(\langle.,.\rangle\otimes id)Z_{M}=Z_{M_N}\in H(N')
\end{equation}
where $M_N$ is the result of gluing of $M$ along $N$.
The operation is known as the gluing axiom. For more details see
\cite{BK}.

\item The quantum field theory is (projectively) invariant with respect to
transformations of the space time (diffeomorphisms, gauge transformations etc.)
if for such transformation $f: M_1\to M_2$,
\[
T_{f_\pa} Z_{M_1}=c_{M_1}(f) Z_{M_2}
\]
Here $c_M(f)$ is a co-cycle $c_M(fg)=c_{gM}(f)c_M(g)$. When the theory is invariant, not only projectively invariant, $c_M(f)=1$.

\end{enumerate}

\begin{remark}
The gluing axiom in particular implies the
functoriality of $Z$:
\[
Z_{M_1\circ M_2}=Z_{M_1}\circ Z_{M_2}\,.
\]
Here $M_1\circ M_2$ is the composition of cobordisms
in the category of space time manifolds. In case of cylinders
this is the semigroup property of propagators in the operator formulation of QFT.
\end{remark}

\begin{remark}
This framework is very natural in models of statistical mechanics on cell complexes
with open boundary conditions, also known as lattice models.
\end{remark}

\begin{remark}
The main physical concept behind this framework is the locality
of the interaction. Indeed, we can cut our space time manifold in
small pieces and the resulting partition function $Z_M$ in such framework is expected to be the composition of partition
functions of small pieces. Thus, the theory is determined by
its structure on `small' space time manifolds, or at `short
distances'. This is the concept of {\it locality}. To fully implement this concept one
should consider the field theory on manifolds with corners where we can glue
along parts of the boundary.
\end{remark}

\subsection{Path integral and its finite dimensional model}

\subsubsection{Quantum field theory via path integrals}
Given a first order classical field theory with boundary conditions given by Lagrangian fibrations,
one can try to construct a quantum field theory by the path integral quantization. In this
framework the space of boundary states $H(\partial M)$ is taken as  the space of functionals on the base $B_{\partial M}$ of the Lagrangian fibration on boundary fields $F_{\pa M}$. The vector $Z_M$ is the Feynman integral
over the fields on the bulk with given boundary conditions

\begin{equation} Z_M(b) = \int_{f \in \pi^{-1} p^{-1}(b)} e^{ \frac{i}{h} S_M(f) } Df \end{equation}
where $Df$ is some fantasy measure, $\pi: F_M\to F_{\pa M}$ is the restriction map and
$p: F_{\pa M}\to B_{\pa M}$ is the boundary fibration.

All of the above is difficult to define when the space of fields is infinite
dimensional. To clarify the functorial structure of this construction and to
define the formal semiclassical path integral let us start with a model case when the space of fields
is finite dimensional, when the integrals are
defined and absolutely convergent. A ``lattice approximation" of a continuous
theory is a good example of such a finite dimensional model.

\subsubsection{Finite dimensional classical model} \label{fd-class}

A finite dimensional model of a first order classical field theory on a space time
manifold with boundary consists of the
following data. Three finite dimensional manifolds $F, F_\pa, B_\pa$ (more generally, topological
spaces) should be complemented by the following.

\begin{itemize}

\item The space $F_\pa$ is an exact symplectic manifold with symplectic form
$\omega_\pa=d\alpha_\pa$.

\item The function $S$ on $F$, such that the submanifold $EL\subset F$, on which the form $dS-\pi^*(\alpha_\pa)$ vanishes, projects to a Lagrangian submanifold in $F_\pa$.

\item Lagrangian fibration on $p_\pa: F_\pa\to B_\pa$ which is $\alpha_\pa$ exact (i.e. $\alpha_\pa$ vanishes on fibers). We also assume that a generic fiber is transversal to $L=\pi(EL)\subset F_\pa$.

\end{itemize}

We will say that this is a finite dimensional model of a {\it non-degenerate theory} if
$S$ has finitely many simple critical points
on each generic fiber $\pi^{-1}p_\pa^{-1}(b)$.

The model is gauge invariant with the bulk gauge group $G$ and the boundary gauge group $G_\pa$
if the following holds.
\begin{itemize}

\item The group $G$ acts on $F$, and $G_\pa$ acts on $F_\pa$.
\item There is a group homomorphism $\tilde{\pi}: G\to G_\pa$ such that the
restriction map is a map of group manifolds, i.e. $\pi(gx)=\tilde{\pi}(g)\pi(x)$.
\item The function $S$ is invariant under the $G$-action up to boundary terms:
\[
S(gx)=S(x)+c_\pa(\pi(x), \tilde{\pi}(g))
\]
where $c_\pa(x,g)$ is a cocycle for $G_\pa$ acting on $F_\pa$:
\[
c_\pa(x,gh)=c_\pa(hx,g)+c_\pa(x,h)
\]
\item The action of $G_\pa$ is compatible with the fibration $p_\pa$, i.e.
it maps fibers to fibers. In particular this means that there is a group
homomorphism $\tilde{p}_\pa: G_\pa \to \Gamma_\pa$. In addition we require that
the cocycle $c(g,x)$ is constant on fibers of $p_\pa$, i.e. $c(x,g)=c(p_\pa(x),\tilde{p_\pa}(g))$.

\end{itemize}

We will say that the theory with gauge invariance is non-degenerate if critical points of
$S$ form finitely many $G$-orbits and if the corresponding points on $F(b)/G$ are simple (i.e. isolated)
on each generic fiber $F(b)$ of $p_\pa \pi$.

\subsubsection{Finite dimensional quantum model}\label{nd-glue}

To define quantum theory assume that $F$ and $B_\pa$ are defined together with measures $dx$ and
$db$ respectively. Assume also that there is a measure $\frac{dx}{db}$
on each fiber $F(b)=\pi^{-1}p_\pa^{-1}(b)$ such that $dx=\frac{dx}{db}db$.

Define the vector space $H_\pa$ together with the Hilbert space
structure on it as follows:
\[
H_\pa=L^2(B_\pa)
\]
When the function $S$ is only projectively invariant with respect to the gauge group,
the space of boundary states is the space of $L^2$-sections of the corresponding line bundle.

\begin{remark} It is better to consider the space of half-forms on $B_\pa$ which are
square integrable but we will not do it here. For details see for example \cite{BW}.
\end{remark}

The partition function $Z_F$ is defined as an element of $H_\pa$ given by the integral over the fiber $F(b)$:
\begin{equation}\label{Z}
Z_F(b)=\int_{F(b)} \exp(\frac{i}{h}S(x)) \frac{dx}{db}
\end{equation}

When there is a gauge group the partition function transforms as
\[
Z_F(\gamma b)=Z_F(b) \exp(\frac{i}{h} c_\pa (b, \gamma))
\]

In such a finite dimensional model the gluing property follows from Fubini's theorem
allowing to change the order of integration.
Suppose we have two spaces $F_1$ and $F_2$ fibered over $B_\pa$ and two
functions $S_1$ and $S_2$ defined on $F_1$ and $F_2$ respectively
such that integrals $Z_{F_1}(b)$ and $Z_{F_2}(b)$ converge absolutely for generic $b$.
For example, we can assume that all spaces $F$, $F_\pa$ and $B_\pa$ are compact.
Then changing the order of integration we have
\begin{equation}\label{fd-gl-nd}
\int_{B_\pa} Z_{F_1}(b)Z_{F_2}(b)db=Z_{F_1\times_{B_\pa} F_2}
\end{equation}
where
\[
Z_{F_1\times_{B_\pa} F_2}=\int_{F_1\times_{B_\pa} F_2} \exp(\frac{i}{h}(S_1(x_1)+S_2(x_2)))\frac{dx_1}{db}\frac{dx_2}{db}db
\]
Here $F_1\times_{B_\pa}F_2=\{(x,x')\in F_1\times F_2| \pi_1(x)=\pi_2(x')\}$ is the
fiber product of $F_1$ and $F_2$ over $B_\pa$. The measure $\frac{dx}{db}\frac{dx'}{db}db$
is induced by measures on $F_1(b)$, $F_2(b)$ and on $B_\pa$.

\begin{remark}The quantization is not functorial. We need to make a choice of measure of
integration.
\end{remark}
\begin{remark} We will not discuss here quantum statistical mechanics
where instead of oscillatory integrals we have integrals of probabilistic type
representing Boltzmann measure.
Wiener integral is among the examples of such integrals.
\end{remark}
\begin{remark} When the gauge group is non-trivial, the important subgroup in
the total gauge group is the bulk gauge group, i.e. the symmetry of
the integrand in the formula for $Z_F(b)$. If $\Gamma_\pa$ is the gauge
group acting on the base of the boundary Lagrangian fibration, then the bulk gauge
group $G^B$ is the kernel in the exact sequence of groups $1\to G^B\to G\to \Gamma_\pa\to 1$.
\end{remark}

An example of such construction is the discrete time quantum mechanics
which is described in Appendix \ref{d-time-qm}.

\subsubsection{The semiclassical limit, non-degenerate case}

The asymptotical expansion of the integral (\ref{Z}) can be computed
by the method of stationary phase (see for example \cite{qft-ias},\cite{R} and references
therein).

Here we assume that the function $S$ has
finitely many simple critical points on the fiber $F(b)$ for each
generic $b \in B_\pa$. Denote the set of such critical points by $C(b)$.
Using the stationary phase approximation
we obtain the following expression for the asymptotical expansion of the partition function
as $h\to 0$:

\begin{equation}\label{sem-cl-z-nd}
Z(b)\simeq \sum_{c\in C(b)} Z_c
\end{equation}
where $Z_c$ is the contribution to the asymptotical expansion from the
critical point $c$. To describe $Z_c$ let us choose local coordinates $x^i$
near $c$, then
\begin{equation}\label{Z-scl}
Z_c=(2\pi h)^{\frac{N}{2}}
\frac{1}{\sqrt{|\det(B_c)|}}e^{\frac{iS(c)}{h}+\frac{i\pi}{4}\sign(B_c)}
(v(c)+\sum_{\Ga}\frac{(ih)^{-\chi(\Ga)}F_c(\Ga)}{|\Aut (\Ga)|})
\end{equation}
Here $N=\dim F$ and  $(B_c)_{ij}=\frac{\pa^2 S(c)}{\pa x^i\pa x^j}$, $v(x)$ is the volume density in local coordinates $\{x^i\}_{i=1}^N$ on $F(b)$ , $\frac{dx}{db}=v(x)dx^1\dots dx^N$, $\chi(\Ga)$ is the Euler characteristic
of the graph $\Ga$, $|\textrm{Aut}(\Ga)|$ is the number of automorphisms of the graph and the summation is taken over
finite graphs where each vertex has valency at least $3$. The weight of a graph $F_c(\Ga)$ is
given by the ``state sum" which is described in the appendix \ref{Fd}. Note that this formula is invariant
with respect to change of local coordinates, as it follows from the definition.
This is particularly clear at the level of determinants.
Indeed, let $J$ be the Jacobian of the coordinate change $x^i\mapsto f^i(x)$. Then $v\mapsto v |\det(J)|$
and $|\det(B_c)|\mapsto |\det(B_c)||\det(J)|$ and the Jacobians cancel. For higher level contributions, see \cite{JF}.

\subsubsection{Gluing formal semiclassical partition functions in the non-degenerate case}
The image $L=\pi(EL)$, according to our assumptions is transversal to generic fibers of $p_\pa: F_\pa \to B_\pa$.
By varying the classical background $c$ we can span the  subspace $T_{\pi(c)}L\subset T_{\pi(c)}F_\pa$
which is, according to the assumption of transversality, isomorphic to $T_{\pi(c)}B_\pa$.

We will call the partition function $Z_c$ the {\it formal semiclassical partition function} on
the classical background  $c$.\footnote{The asymptotical formula (\ref{Z-scl})
should be thought as a ``pointwise version" of the following, more natural asymptotical expansion.
Let us say that the family of vectors $\psi=\exp(\frac{i}{h}f)\phi\in H_\pa$ parameterized by $h$
is a semiclassical family of states if $f$ is a smooth function
and $\phi\in H_\pa$. It is easy to see that the semiclassical expansion of
\[
(Z, \psi)=\int_{B_\pa} Z_{B_\pa}(b) \psi(b)db
\]
has a form similar to (\ref{Z-scl}) with some extra vertices corresponding to
derivatives of the state $\psi$.}
We will also say that it is given by the formal integral of $\exp(\frac{iS}{h})$
over the formal neighborhood of $c$:
\[
Z_c=\int^{formal}_{T_cF} \exp(\frac{iS}{h}) \frac{dx}{db}
\]

Passing to the limit $h\to 0$ in (\ref{fd-gl-nd}) we obtain the gluing formula for formal semiclassical
partition functions (under the assumption of non-degenerate critical points):
\[
\int^{formal}_{T_{b_0}B_\pa} Z_{c_1(b)} Z_{c_2(b)}db=Z_c
\]
Here $c$ is a simple critical point of $S$ on $F_1\times_{B_\pa} F_2$, $b_0=p_\pa\pi_1\pi(c)=p_\pa \pi_2\pi'(c)$
where $\pi:  F_1\times_{B_\pa} F_2\to F_1$ and $\pi': F_1\times_{B_\pa} F_2\to F_2$ are natural projections
and $c_1(b)$ and $c_2(b)$ are critical points of $S_1$ and $S_2$ on fibers $F_1(b)$ and $F_2(b)$ respectively
which are formal deformations of $c_1(b_0)=\pi_1(c)$ and of $c_2(b_0)=\pi_2(c)$.
In \cite{JF} this formula was used to prove that formal semiclassical propagator
satisfies the composition property.

\subsection{Gauge fixing}\label{fdg}

\subsubsection{Gauge fixing in the integral} Here we will outline a version of the Faddeev-Popov trick for
gauge fixing in the finite dimensional model in the presence of boundary.
We assume that the action function $S$, the choice of boundary conditions, and
group action on $F$ satisfy all properties described in section \ref{fd-class}.

The goal here is to compute the asymptotics of the partition function
\[
Z_F(b)=\int_{F(b)} e^{\frac{i}{h}S(x)} \frac{dx}{db}
\]
when $h\to 0$. Here, as in the previous section $F(b)=\pi^{-1}p_\pa^{-1}(b)$
but now a Lie group $G$ acts on $F$ and the function $S$ and the integration measure $dx$
are $G$-invariant. As in section \ref{fd-class} we assume that there is an exact
sequence $1\to G^B \to G\to \Gamma_\pa\to 1$, where $\Gamma_\pa$ acts on $B_\pa$
such that $db$ is $\Gamma_\pa$-invariant and the subgroup $G^B$ acts fiberwise
such that the measure $\frac{dx}{db}$ is $G^B$-invariant.

Assume that the function $S$ has finitely many isolated $G^B$-orbits of
critical points on $F(b)$ and that the measure of integration is supported on a
neighborhood of these points.

\begin{figure}[htb]
\includegraphics[height=6cm,width=6cm]{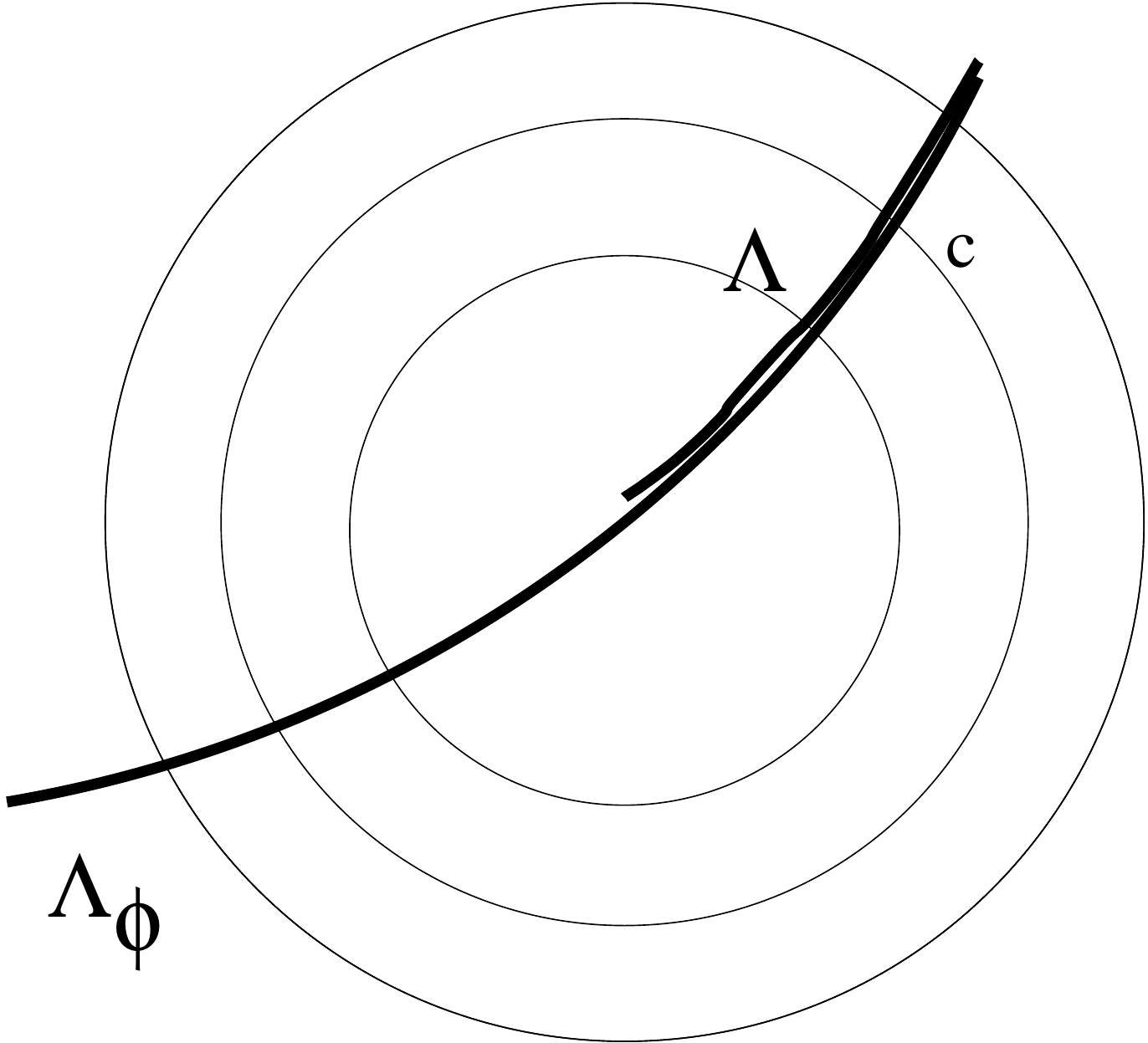}
\caption{An illustration of a choice of $\Lambda$ and $\Lambda_\varphi$.}
\label{gauges}
\end{figure}

Let $\Lambda$ be a submanifold (possibly immersed) in $F$ such that
it intersects the support of $v$ (the density of the measure in local coordinates,
$\frac{dx}{db}=v(x)dx^1\dots dx^N$) over a submanifold which is
also the intersection of $\text{supp}(v)$ and the surface $\Lambda_\varphi=\{x|\varphi_a(x)=0\}$.
Assume that $\Lambda_\varphi$ intersect each orbit in the support of $v$ exactly once.
Note that we do not assume that either $\Lambda$ or $\Lambda_\varphi$ are sections
of the $G^B$-action (i.e. of the projection $F(b)\to F(b)/G^B$).

Let $c$ be a critical point of $S$ on $F(b)$ which also belongs to $\Lambda_\varphi$.
Denote by $U_c^{(b)}$ the connected component of the support of $v$ which contains this critical point.
Following standard steps in the Faddeev-Popov constructions we obtain the
expression for each term of this sum in terms the integral over $\Lambda_\varphi$:
\begin{equation}\label{fp}
\int_{U_c^{(b)}} e^{\frac{i}{h}S(x)} \frac{dx}{db}=|G^B|\int_{U_c^{(b)}}e^{\frac{i}{h}S(x)}  \det(L_\varphi(x))\delta(\varphi(x))\frac{dx}{db}
\end{equation}
We have a natural
isomorphism 
$U_c^{(b)}\simeq (U_c^{(b)}\cap \Lambda_\varphi)\times G^B$.
To describe $L_\varphi(x)$ choose
local coordinates $x^i$ on $F_c$ and a basis $e_a$ in the Lie algebra $\mathfrak{g}^B$
of the Lie group $G^B$. The action of $e_a$ on $F$ is given by the vector field $\sum_il_a^i(x)\pa_i$.
Matrix elements of $L_\varphi(x)$ are $\sum_i l_a^i(x)\pa_i \varphi_b(x)$.

It is convenient to write (\ref{fp}) as a Grassmann integral:
\begin{equation}\label{fp-action}
|G^B|\int_{\cF_c(b)}\exp\left(\frac{i}{h}(S(x)+\sum_a \lambda^a \varphi_a(x))+\sum_a \overline{c}^aL_\varphi(x)_a^bc_b\right)  \frac{dx}{db}d\lambda d\overline{c}dc
\end{equation}
where $\cF_c(b)=
U_c^{(b)}\oplus \g^B_{odd}\oplus (\g^B_{odd})^*\oplus (\g^B_{even})^*$ and $\overline{c}$ and $c$ are
odd variables. See for example \cite{qft-ias} for details on Grassman integration.
To be pedantic, (\ref{fp-action}) also contains a normalization factor $(2\pi h)^{\dim G^B}$.
The asymptotical expansion of (\ref{fp}) as $h\to 0$ can be written as a formal integral over the formal neighborhood
of $c$ in the supermanifold $\cF_c(b)$. The functions $S(x), \varphi_a(x), L_\varphi(x)_a^b$ should
be understood as the Taylor expansion about $c$ in $\sqrt{h}$, just as in the previous section.
The result is the asymptotical expression given by Feynman integrals where two types of edges correspond to
even and odd Gaussian terms in the integral :
\begin{multline}\label{as-FP}
Z_c=\int_{T_cF(b)}^{formal} e^{\frac{i}{h}S(x)} \frac{dx}{db}= |G^B|h^{\frac{N-n}{2}}(2\pi)^{\frac{N+n}{2}}\\
\frac{1}{\sqrt{|\det(B(c))|}}\det(-iL_\varphi(c))
\exp\left(\frac{i}{h}S(c)+\frac{i\pi}{4}\sign(B(c))\right)\\
\left(v(c)+ \sum_{\Ga\neq \O}\frac{(ih)^{-\chi(\Ga)}(-1)^{c(D(\Gamma))}F_c(D(\Ga))}{|\Aut (\Ga)|}\right),
\end{multline}
Here $N=\dim F$ and $n=\dim \g^B$, $D(\Ga)$ is the planar projection of $\Ga$,
a Feynman diagram.
Feynman diagrams in this formula have bosonic edges and
fermionic oriented edges, $c(D(\Ga))$ is the number of crossings of
fermionic edges\footnote{The sign rule is equivalent to the usual $(-1)^{\#\mbox{fermionic loops}}$ which is
used in physics literature.}\footnote{In (\ref{as-FP}) we identify $T_cF(b)$ with an infinitesimally small (formal)
neighborhood of $c$.}. The structure of Feynman diagrams is the same as in
(\ref{Z-scl}). The propagators corresponding to Bose and Fermi
edges are shown in Fig. \ref{FP-diag-edges}. The weights of vertices  are
shown\footnote{Each fermionic propagator
contributes to the weight of the diagram an extra factor $h^{-1}$.
Each vertex with two adjacent fermionic (dashed) edges contributes
the factor of $h$. Because fermionic lines form loops, these factors
cancel each other.}  on Fig. \ref{FP-diag-vert}.

\begin{figure}[htb]
\includegraphics[height=3cm,width=7cm]{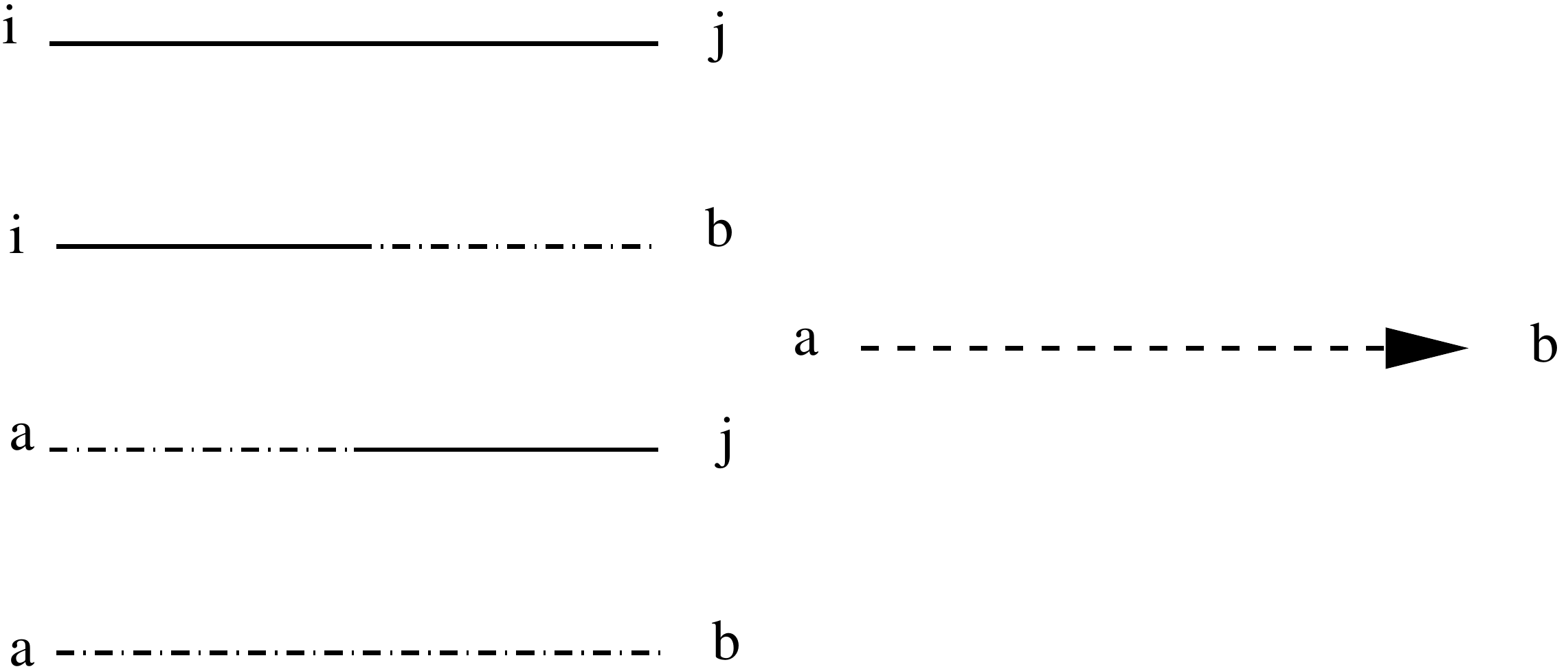}
\caption{Fermionic (left) and bosonic (right) edges for Feynman diagrams in (\ref{as-FP})
with states at their endpoints}
\label{FP-diag-edges}
\end{figure}

\begin{figure}[htb]
\includegraphics[height=3cm,width=8cm]{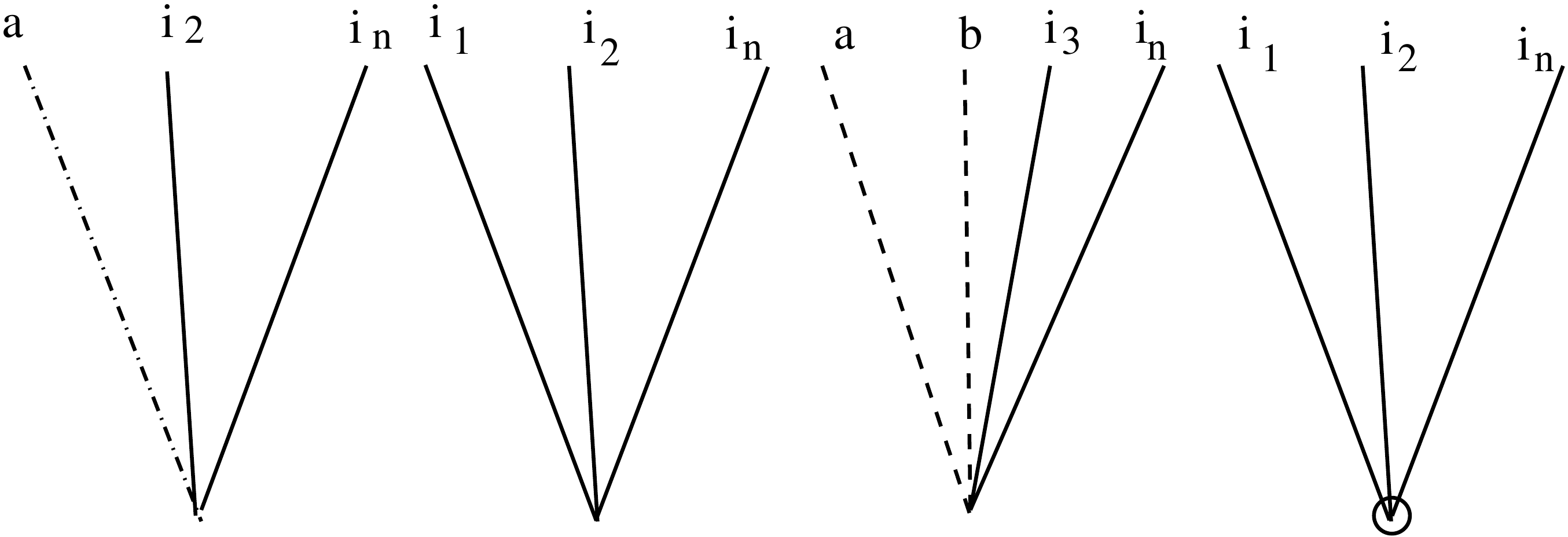}
\caption{Vertices for Feynman diagrams in (\ref{as-FP}) with states
on their stars}
\label{FP-diag-vert}
\end{figure}

The weight of the Fermionic edge on Fig. \ref{FP-diag-edges} is $((-iL_\varphi(c))^{-1})_{ab}$. Weights of the bosonic edges from
Fig. \ref{FP-diag-edges} correspond to matrix element of $B(c)^{-1}$
where
\[
B(c)=\Bigl(
\begin{array}{cc}
\frac{\pa^2 S(c)}{\pa x^i \pa x^j} &  \frac{\pa \varphi_a}{\pa x^i} \\ \frac{\pa \varphi_b}{\pa x^j}  & 0
\end{array}
\Bigl)
\]
The weights of vertices with states on their stars from Fig. \ref{FP-diag-vert}
are (from left to right):
\[
\frac{\pa^{n-1} \varphi_a(c)}{\pa x^{i_1}\dots \pa x^{i_{n-1}}}, \ \
\frac{\pa^{n} S(c)}{\pa x^{i_1}\dots \pa x^{i_n}}, \ \
i\frac{\pa^{n} L^a_b(c)}{\pa x^{i_1}\dots \pa x^{i_n}}, \ \ \frac{\pa^{n} v(c)}{\pa x^{i_1}\dots \pa x^{i_n}}
\]
The last vertex should appear exactly once in each diagram.

This formula, by definition, does not depend on the choice of local coordinates.
It is easy to see this explicitly at the level of determinants. Indeed, when we
change local coordinates
\[
B(c)\mapsto \Bigl(
\begin{array}{cc}
J & 0 \\ 0  & 1
\end{array}
\Bigl)B(c)\Bigl(
\begin{array}{cc}
J & 0 \\ 0  & 1
\end{array}
\Bigl), \ \ v\mapsto |\det(J)|v
\]
where $J$ is the Jacbian of the coordinate transformation. It is clear that the ratio $v/|\det(B(c))|$
is invariant with respect to such transformations.

Note that because we defined the formal integral (\ref{as-FP}) as the contribution to the asymptotical
expansion of the integral (\ref{fp}) from the critical orbit of $S$ passing through $c$,
the coefficients in (\ref{as-FP}) do not depend on the choice of gauge constraint $\varphi$ and
\[
Z_c=Z_{[c]}
\]
where $[c]$ is the orbit of $G^B$ passing through $c$. It is also independent on the
choice of gauge condition $\varphi$. It is easy to see this at the level of determinants.
If we change the gauge condition $\varphi(x)$ to $f(\varphi)$, matrices $B(c)$ and $L(c)$
change as
\[
B(c)\mapsto \Bigl(
\begin{array}{cc}
1 & 0 \\ 0  & \frac{\pa f}{\pa x}(c)
\end{array}
\Bigl)B(c)\Bigl(
\begin{array}{cc}
1 & 0 \\ 0  & \frac{\pa f}{\pa x}(c)
\end{array}
\Bigl), \ \  L(c)\mapsto L(c)\frac{\pa f}{\pa x}(c)
\]
Which implies that the ratio $\det(L(c))/|\det(B(c))|^{1/2}$ remain invariant.

\subsubsection{Gluing formal integrals for gauge theories}
Assume that as in section \ref{nd-glue} we have two spaces $F_1$ and $F_2$ fibered over $B_\pa$ and two
functions $S_1$ and $S_2$ defined on $F_1$ and $F_2$ respectively
such that the integrals $Z_{F_1}(b)$ and $Z_{F_2}(b)$ converge absolutely for generic $b$.
For example, we can assume that spaces $F_1$, $F_2$ and $B_\pa$ are compact.
Denote by $F$ the fiber product $F_1\times_{B_\pa} F_2$ and set $N_i=\dim F_i$, $N_\pa=\dim B_\pa$. Let Lie groups $G_1$, $G_2$ and $\Gamma_\pa$ act as $G_i: F_i\to F_i$ and $\Gamma_\pa : B_\pa\to B_\pa$ and assume that functions $S_i$ are $G_i$-invariant and $\Gamma_\pa$ appears in exact sequences:
\[
0\to G^B_1\to G_1\to \Gamma_\pa \to 0, \ \ 0\to G^B_2\to G_2\to \Gamma_\pa \to 0
\]
where kernels $G^B_1$ and $G^B_2$ are bulk gauge groups for $F_1$ and $F_2$.

Changing the order of integration we obtain (\ref{fd-gl-nd}).
As $h\to 0$ the gluing identity (\ref{fd-gl-nd}) becomes the identity
between formal integrals just as in the non-degenerate case
\[
\int^{formal}_{T_{b_0}B_\pa} Z_{[c_1(b)]} Z_{[c_2(b)]}db=Z_{[c]}
\]
which should be regarded as the contribution of the
critical point $c$ to $Z_F$ written as an iterated integral\footnote{ Recall that $db$ is a $\Gamma_\pa$-invariant
measure on $B_\pa$ such that $\frac{dx}{db} db$ is a $G$-invariant measure on $F$.}.
After a gauge fixing in the integral over $b$ we arrive to the
following formula for the left side:
\begin{multline} \label{glu}
Z_{[c]}=|G_1^B||G_2^B||\Gamma_\pa| (2\pi)^{\frac{N+n}{2}} h^{-\frac{n}{2}}
\frac{\det(-iL_{\varphi_1}(c_1))\det(-iL_{\varphi_2}(c_2))\det(-iL_{\varphi_\pa}(c_\pa))}
{|\det(B_1(c_1))||\det(B_2(c_2))||\det(B_\pa(c_\pa))|} \\ \exp\left(\frac{i}{h}(S_1(c_1)+S_2(c_2))+\frac{i\pi}{4}(\sign(B_1(c_1))+
\sign(B_2(c_2))+\sign(B_\pa(c_\pa)))\right) \\
\left(v_1(c_1)v_2(c_2)v_\pa(c_\pa)+ \sum_{\Ga\neq \O}\mbox{ composite Feynman diagrams }\right),
\end{multline}
Here $N=N_1+N_2-N_\pa=\dim F$ and $n=n_1+n_2-n_\pa$ were $n_i=\dim G_i$ and $n_\pa=\dim \Gamma_\pa$. Composite Feynman diagrams consist of Feynman diagrams for
$F_1$, Feynman diagrams for $F_2$ and Feynman diagrams connecting them
which come from formal integration over boundary fields in the formal neighborhood
of $b_0$. Factors $v_1(c_1), v_2(c_2), v_\pa(c_\pa)$ are densities of corresponding measures 
in local coordinates which we used in (\ref{glu}).

Comparing this expression with (\ref{as-FP}) besides the obvious identity
$S(c)=S(c_1)+S(c_2)$ we obtain identities
\begin{multline}
\frac{\det(-iL_{\varphi_1}(c_1))\det(-iL_{\varphi_2}(c_2))\det(-iL_{\varphi_\pa}(c_\pa))}
{|\det(B_1(c_1))||\det(B_2(c_2))||\det(B_\pa(c_\pa))|} \\ \exp\left(\frac{i\pi}{4}(\sign(B_1(c_1))+
\sign(B_2(c_2))+\sign(B_\pa(c_\pa)))\right)=
\frac{\det(-iL_\varphi(c))}{\sqrt{|\det(B(c))|}}\exp\left(\frac{i\pi}{4}\sign(B(c))\right)
\end{multline}

In addition to this in each order $h^m$ with $m>0$ we will have the identity
\begin{multline}
\sum \mbox{ {\it of all composite Feynman diagrams of order} } m \mbox{ {\it for} } F_1, F_2, B_\pa= \\ \sum \mbox{ {\it of all Feynman diagrams of order} } m \mbox{ {\it for} } F
\end{multline}

These identities are universal algebraic identities which hold for any choice of
of $F_1, F_2$ (as above). This implies that if we define the weights of
Feynman diagrams as prescribed by a path integral, they should
satisfy the same identities and therefore formal semiclassical gluing
partition functions should satisfy the gluing axiom.

\section{Abelian Chern-Simons theory}
In TQFT's there are no ultraviolet divergencies but there is a gauge
symmetry to deal with. Perhaps the simplest non-trivial example of TQFT is the
Abelian Chern-Simons theory with the Lie group $\RR$. Fields in such theory are
connections on the trivial $\RR$-bundle over a compact, smooth, oriented $3$-dimensional manifold $M$.
We will identify fields with $1$-forms on $M$.
The action is
\[
S(A)=\frac{1}{2} \int_M A\wedge dA
\]
Solutions of the Euler-Lagrange equations are closed $1$-forms on $M$.
The variation of this action induces the exact symplectic form  on $\Omega^1(\pa M)$
(see section \ref{cCS}).

\subsection{The classical action and boundary conditions} A choice of metric on $M$ induces a metric on $\pa M$ and the Hodge decomposition:
\[
\Omega(\pa M)=d\Omega(\pa M)\oplus H(\pa M)\oplus d^*\Omega(\pa M)
\]

The Lagrangian subspace
of boundary values of solutions to Euler-Lagrange equations is
\[
L_M=H^1_M(\pa M)\oplus d\Omega^0(\pa M)
\]
where $H_M(\pa M)$ is the space of harmonic representatives of cohomology classes on the boundary
coming from cohomology classes $H^1(M)$ of the bulk by pull-back with respect to inclusion of the boundary.

Choose a decomposition of $H(\pa M)$ into a direct sum of two Lagrangian
subspaces:
\[
H(\pa M)=H_+(\pa M)\oplus H_-(\pa M)
\]
This induces a decomposition of forms $\Omega(\pa M)=\Omega_+(\pa M)\oplus \Omega_-(\pa M)$
where
\[
\Omega_+(\pa M)=H_+(\pa M) \oplus d\Omega(\pa M), \ \ \Omega_-(\pa M)=H_-(\pa M) \oplus d^*\Omega(\pa M)
\]
Choose the boundary Lagrangian fibration as
\[
p_\pa: \Omega(\pa M)\to B(\pa M)=\Omega_+(\pa M)
\]
with fibers
\[
p_\pa^{-1}(b)=b+\Omega_-(\pa M)\simeq H_-(\pa M)\oplus d^*\Omega(\pa M) .
\]

This fibration is not $\alpha_{\pa M}$-exact, i.e. the restriction of $\alpha_{\pa M}$ to fibers
is zero. Let us modify the action, by adding a boundary term such that the
form $\alpha_{\pa M}$ will vanish on fibers of $p$. Define the new action as
\[
\widetilde{S}(A)=S(A)+\frac{1}{2} \int_{\pa M} A_+\wedge A_-
\]
where $A_\pm$ are $\Omega_\pm$-components of $i^*(A)$.

The new form on boundary connections is
\[
\widetilde{\alpha}_{\pa M}(a)=\alpha_{\pa M}(a)+\frac{1}{2}\delta  \int_{\pa M} a_+\wedge a_-=-\int_{\pa M} a_-\wedge \delta a_+
\]
and it vanishes on the fibers of $p_\pa$ because on each fiber $\delta a_+=0$.

Note that the modified action is gauge invariant. Indeed, on components $A_\pm$ gauge transformations act as
$A_+\mapsto A_+ +d\theta$ and $A_-\mapsto A_-$, i.e. gauge transformations act trivially on fibers.

\subsection{Formal semiclassical partition function}

\subsubsection{More on boundary conditions}For this choice of Lagrangian fibration the bulk gauge group $G^B$
is $\Omega^0(M,\pa M)$. The boundary gauge group acts trivially on fibers. Indeed, the boundary gauge group $\Omega^0(\pa M)$ acts naturally on the base $B(\pa M)=H^1(\pa M)_+\oplus
d\Omega^0(\pa M)$, $\alpha\mapsto \alpha+ d\lambda$. It acts on the
base shifting the fibers: $p(\beta+d\lambda)=p(\beta)+d\lambda$.

According to the general scheme outlined in section \ref{fdg}, in order to define the
formal semiclassical partition function we have to fix a
background flat connection $a$ and ``integrate" over the fluctuations
$\sqrt{h}\alpha$ with boundary condition $i^*(\alpha)_+=0$. We have

\[
\widetilde{S}(a+\alpha)=\tilde{S}(\alpha)
+\frac{1}{2}\int_{\pa M} a_+\wedge a_-
\]
Note that $da=0$ which means that $a$ restricted to the boundary is a closed form
which we can write as $i^*(a)=[a]_++[a]_-+d\theta$ where $[a]_\pm\in H_\pm(\pa M)$.
Therefore, for the action we have:
\[
\widetilde{S}(a+\alpha)=\widetilde{S}(\alpha)
+\frac{1}{2}<[a]_+, [a]_->_{\pa M}
\]
where $<.,.>$ is the symplectic pairing in $H(\pa M)$.

For semiclassical quantization we should choose the gauge fixing submanifold $\Lambda\subset \Omega(M)$, such that $(T_aF_M)_+=T_aEL\oplus T_a\Lambda$. Here $(T_aF_M)_+$ is the space of $1$-forms ($\alpha$-fields) with boundary
condition $i^*(\alpha)_+=0$. As it is shown in Appendix \ref{hodge}
the action functional restricted to fields with boundary values in an
isotropic subspace $I\subset \Omega^1(\pa M)$ is non-degenerate on
\[
T_a\Lambda_I =d^*\Omega^2_N(M, I^\perp)\cap \Omega^1_D(M, I)
\]
For our choice of boundary conditions $I=\Omega^1_-(\pa M)$.

\subsubsection{Closed space time} First, assume the space time has no boundary. Then the formal semiclassical
partition function is defined as the product of determinants which arise
from gauge fixing and from the Gaussian integration as in (\ref{as-FP}).
In the case of Abelian Chern-Simons the gauge condition is $d^*A=0$
and the action of the gauge Lie algebra $\Omega^0(M)$ on
the space of fields $\Omega^1(M)$ is given by the map $d: \Omega^0(M)\to
\Omega^1(M)$ (here we identified $\Omega^1(M)$ with its tangent space at
any point). Thus, the FP action (\ref{fp-action}) in our case is
\[
S(A, \overline{c}, c, \lambda)=\frac{1}{2}\int_M A\wedge dA +\int_M \overline{c} \ \ \Delta c \ \ d^3x+
\int_M \lambda \ \ d^*A  \ \ d^3x
\]
where $\overline{c}, c$ are ghost fermion fields, and $\lambda$ is the Lagrange multiplier
for the constraint $d^*A=0$.

By definition the corresponding Gaussian integral is
\[
Z_a=C\frac{|{\det}'(\Delta_0)|}{\sqrt{|{\det}'(\widehat{\ast d})|}}\exp(\frac{i\pi}{4}(2 \text{sign}(\Delta_0)+\text{sign}(\widehat{\ast d})))
\]
Here ${\det}'$ is a regularized determinant and $ \text{sign}(A)$ is the signature of the
differential operator $A$. The constant depends of the choice of regularization. The
usual choice is the $\zeta$-regularization. The signature is up to a normalization
the eta invariant \cite{W}. The operator $\widehat{\ast d}$ acts on $\Omega^1(M)\oplus \Omega^0(M)$
as
\begin{equation}\label{d-hat}
\left(
\begin{array}{cc}
\ast d &  d \\ d^*  & 0
\end{array}
\right)
\end{equation}
Its square is the direct sum of Laplacians:
\[
\widehat{\ast d}^2=\left(
\begin{array}{cc}
d^*d+dd^* &  0 \\ 0  & d^*d
\end{array}
\right)
\]
Thus the regularized determinant of $\widehat{\ast d}$ is the product of determinants
acting on 1-forms and on 0-forms:
\[
|{\det}'(\widehat{\ast d})|^2=|{\det}'(\Delta_1)||{\det}'(\Delta_0)|
\]
This gives the following formula for the determinant contribution to the partition function:
\begin{equation}\label{z-1}
\frac{|{\det}'(\Delta_0)|}{\sqrt{|{\det}'(\widehat{\ast d})|}}=\frac{|{\det}'(\Delta_0)|^{\frac{3}{4}}}{|{\det}'(\Delta_1)|^{\frac{1}{4}}}
\end{equation}
Taking into account that $\ast \Omega^i(M)=\Omega^{3-i}(M)$ we can write this as
\[
T^{1/2}=|{\det}'(\Delta_1)|^{\frac{1}{4}}|{\det}'(\Delta_2)|^{\frac{2}{4}}|{\det}'(\Delta_3)|^{\frac{3}{4}}
\]
where $T$ is the Ray-Singer torsion. This gives well-known formula
for the absolute value of the partition function of the Abelian Chern-Simons theory on a closed manifold.
\begin{equation}\label{part-torsion}
|Z|=CT^{1/2}
\end{equation}
We will not discuss here the $\eta$-invariant part.

\begin{remark} The operator $\widehat{\ast d}$ is easy to identify with $L_-=\ast d+ d\ast$, acting
on $\Omega^1(M)\oplus \Omega^3(M)$ from \cite{W}. Indeed, using Hodge star we can identify
$\Omega^0(M)$ and $\Omega^3(M)$. After this the operators are related as
\[
L_-= \left(
\begin{array}{cc}
1 &  0 \\ 0  & \ast
\end{array}
\right) \widehat{ \ast d} \left(
\begin{array}{cc}
1 &  0 \\ 0  & \ast
\end{array}
\right)^{-1}
\]
\end{remark}

\begin{remark} There is one more formula in the literature for gauge fixing.
Assume that a Lie group $G$ has an invariant inner product,
the space of fields $F$ is a Riemannian manifold and $G$ acts
by isometries on $F$.
In this case there is a natural gauge fixing
which leads to the following formula for an integral of a $G$-invariant function \cite{Sch1}:
\[
\int_F h(x) dx=|G|\int_{F/G} h(x) ({\det}'(\tau_x^*\tau_x))^{\frac{1}{2}} [dx]
\]
Here we assume that the $G$-action does not have stabilizers. The linear mapping $\tau_x: \g \to
T_xF$ is given by the $G$-action, the Hermitian conjugate is taken with respect to the
metric structure on $F$ and on $G$, $dx$ is the Riemannian volume on $F$ and $[dx]$ is
the Riemannian volume on $F/G$ with respect to the natural Riemannian structure on the quotient space.

For the Abelian Chern-Simons a choice of metric on the space time induces metrics on $G=\Omega^0(M)$ and
on $F=\Omega^1(M)$. The gauge group $G$ acts on $F$ by isometries and $\tau_x=d$, the de Rham differential.
This gives another expression for the absolute value of the partition function
\begin{equation}\label{z-3}
|Z|=C\frac{|{\det}'(\Delta_0)|^{\frac{1}{2}}}{|{\det}'(\ast d)|^{\frac{1}{2}}}
\end{equation}
Here $\ast d: \Lambda\to \Lambda$, and $\Lambda=d^*\Omega^2(M)$ is the submanifold on
which the action functional is non-degenerate. It is clear that this formula coincides with (\ref{part-torsion}).

\end{remark}

\subsubsection{Space time with boundary}
Now let us consider the case when $\pa M$ is non-empty.
In this case the bulk gauge group $G^B$ is $\Omega_D(M, \{0\})$ which we will denote just $\Omega_D(M)$.
The space of fluctuations is $\Omega^1_D(M, \Omega_-(\pa M))$. The bilinear from in the Faddeev-Popov action
is
\[
\frac{1}{2}\int_M \alpha\wedge d\alpha +\int_M \lambda \ d^*\alpha \ d^3x -i\int_M \overline{c} \ \Delta c \ d^3x
\]
The even part of this form is symmetric if we impose the boundary condition $i^*(\lambda)=0$.
Similarly to the case of closed space time we can define the partition function
as

\begin{equation}\label{pf-acs}
Z_{a,M}=C|{\det}'(\widehat{\ast d)}|^{-1/2} |{\det}'( \Delta_0^{D,\{0\}})| \exp(\frac{i\pi}{4}(2\, \text{sign}(\Delta_0)+\text{sign}(\widehat{\ast d}))) \exp(\frac{i}{h}<[a]_+, [a]_->_{\pa M})
\end{equation}
Here $\Delta_0^{D,\{0\}}$ is the Laplace operator action on $\Omega_D(M, \{0\})$ and $[a]_\pm$ are
the $\pm$ components of the cohomology class of the boundary value $i^*(a)$ of $a$. The operator
$\widehat{\ast d}$ acts on $\Omega^1_D(M, \Omega_-(\pa M))\oplus \Omega_D^0(M, \{0\})$ and is
given by (\ref{d-hat}). This ratio of determinants is expected to give a version of the Ray-Singer torsion for appropriate boundary conditions. The signature contributions are expected to be the $\eta$-invariant with the
appropriate boundary conditions. For the usual choices of boundary conditions, such as tangent, absolute,
or APS boundary conditions at least some of these relations are known, for more general boundary conditions
it is a work in progress.


\subsubsection{Gluing} According to the finite dimensional gluing formula we expect
a similar gluing formula for the partition function.
A consequence of this formula is the multiplicativity of the version of the Ray-Singer
torsion with boundary conditions
described above. To illustrate this, let us take a closer look at the exponential part of (\ref{pf-acs}).

Recall that $L_M\subset \Omega^1(\pa M)$ is
the space of closed 1-forms which are boundary values of closed 1-forms on $M$.
To fix boundary conditions we fixed the decomposition $\Omega^1(\pa M)=\Omega^1(\pa M)_+\oplus \Omega^1(\pa M)_-$
(see above).

Let $\beta$ be a tangent vector to $L_M$ at the point $i^*(a)\in L_M$. We have natural identifications
\[
T_{i^*(a)}\Omega^1(\pa M)_-=H^1(\pa M)_-\oplus d^*\Omega^2(\pa M) ,\ \ T_{i^*(a)}\Omega^1(\pa M)_+=H^1(\pa M)_+\oplus d\Omega^0(\pa M)
\]
Denote by $\beta_\pm$ the components of $\beta$ in $T_{i^*(a)}L_\pm$ respectively.
Since $d\beta=0$ we have $\beta_+=[\beta]_++d\theta$, and $\beta_-=[\beta]_-$, where $[\beta]_\pm$
are components of the cohomology $[\beta]_\pm$ in $H^1(\pa M)_\pm$.
If the reduced tangent spaces $[T_{i^*(a)}L_M]=H^1_\pm(\pa M)$ and $[T_{i^*(a)}\Omega^1(\pa M)_\pm]=H^1_M(\pa M)$
are transversal, which is what we assume here, projections to $[T_{i^*(a)}\Omega^1(\pa M)_\pm]$ give
linear isomorphisms $A^{(\pm)}_M: H^1_M(\pa M)\to H^1_\pm(\pa M)$.
This defines the linear isomorphism
\[
B_M=A^{(-)}_M(A^{(+)}_M)^{-1}:  H^1(\pa M)_+\to  H^1(\pa M)_-
\]
acting as $B_M([\beta]_+)=[\beta]_-$ for each $[\beta] \in H^1_M(\pa M)$.
This is the analog of the Dirichlet-to-Neumann operator.

Now considering small variations around $a$ have
\begin{multline}
Z_{[a+\sqrt{h}\beta]}=Z_{[a]}\exp(\frac{i}{\sqrt{h}}(<[i^*(a)]_+, B_M([i^*(\beta)]_+)>_{\pa M}+\\
<[i^*(\beta)]_+, [i^*(a)]_->_{\pa M})+ i<[i^*(\beta)]_+, B_M([i^*(\beta)]_+)>_{\pa M})
\end{multline}

The gluing formula for this semiclassical partition function at the level of exponents gives the gluing formula for
Hamilton-Jacobi actions. At the level of pre-exponents it also gives the gluing formula for torsions and for the $\eta$-invariant for appropriate boundary conditions. Changing boundary conditions results in a boundary contribution
to the partition function and to the gluing identity.
One should also expect the gluing formula for correlation functions. The details
of these statements require longer discussion and substantial analysis and will be done
elsewhere.

There are many papers on Abelian Chern-Simons theory. The appearance of torsions and $\eta$-invariants
in the semiclassical asymptotics of the path integral for the Chern-Simons action was first
pointed out in \cite{W}. For a geometric approach to compact
Abelian Chern-Simons theory and a discussion of gauge fixing and the appearance of torsions
in the semiclassical analysis see \cite{Manu}. For the geometric quantization approach
to the Chern-Simons theory with compact Abelian Lie groups see \cite{An}.

\appendix

\section{Discrete time quantum mechanics}\label{d-time-qm} An example of a finite dimensional
version of a classical field theory is a discrete time approximation to the
Hamiltonian classical mechanics of a free particle on $\RR$. We denote coordinates on this space $(p,q)$
where $p$ represents the momentum and $q$ represents the coordinate of the system.

In this case the space time is an ordered collection of $n$ points which represent the discrete
time interval. If we enumerate these points $\{1,\dots,n\}$ the points $1,n$ represent the
boundary of the space time. The space of fields is $\RR^{n-1}\times \RR^n$ with coordinates $p_i$ where $i=1,\dots, n-1$ represents the ``time interval'' between points $i$ and $i+1$ and $q_i$ where $i=1,\dots, n$. The coordinates
$p_1, p_{n-1}, q_1, q_n$ are boundary fields\footnote{ In other words the space time is a $1$-dimensional cell complex. Fields assign coordinate function $q_i$ to the vertex $i$ and $p_i$ to the edge $[i,i+1]$.}. The action is
\[
S=\sum_{i=1}^{n-1} p_i(q_{i+1}-q_i)-\sum_{i=1}^{n-1} \frac{p_i^2}{2}
\]

We have
\[
dS=\sum_{i=1}^{n-2} (q_{i+1}-q_i-p_i)dp_i+\sum_{i=2}^{n-1}(p_{i-1}-p_i)dq_i+
p_{n-1}dq_n- p_1dq_1
\]
From here we derive the Euler-Lagrange equations
\[
q_{i+1}-q_i=p_i, \ \ i=1,\dots n-1,
\]
\[
p_{i-1}-p_i=0, \ \ i=2,\dots, n-1
\]
and the boundary 1-form
\[
\alpha=p_{n-1}dq_n-p_1dq_1
\]
This gives the symplectic structure on the space of boundary fields
with
\[
\omega_\pa= dp_{n-1}\wedge dq_n-dp_1\wedge dq_1
\]
The boundary values of solutions of the Euler-Lagrange equations
define the subspace
\[
L=\pi(EL)=\{(p_1,q_1,p_{n-1},q_n)| p_1=p_{n-1}, q_n=q_1+(n-1) p_1\}
\]
It is clear that this a Lagrangian subspace.

\section{Feynman diagrams}\label{Fd}
Let $\Gamma$ be a graph with vertices of valency $\geq 3$ with one special vertex which
may also have valency $0,1,2$. To define the weight $F_c(\Gamma)$,
cut $\Gamma$ into the union of stars of vertices and edges. Denote the result by $\hat{\Gamma}$,
see an example on Fig. \ref{theta}.

\begin{figure}[htb]
\includegraphics[height=2.5cm,width=8cm]{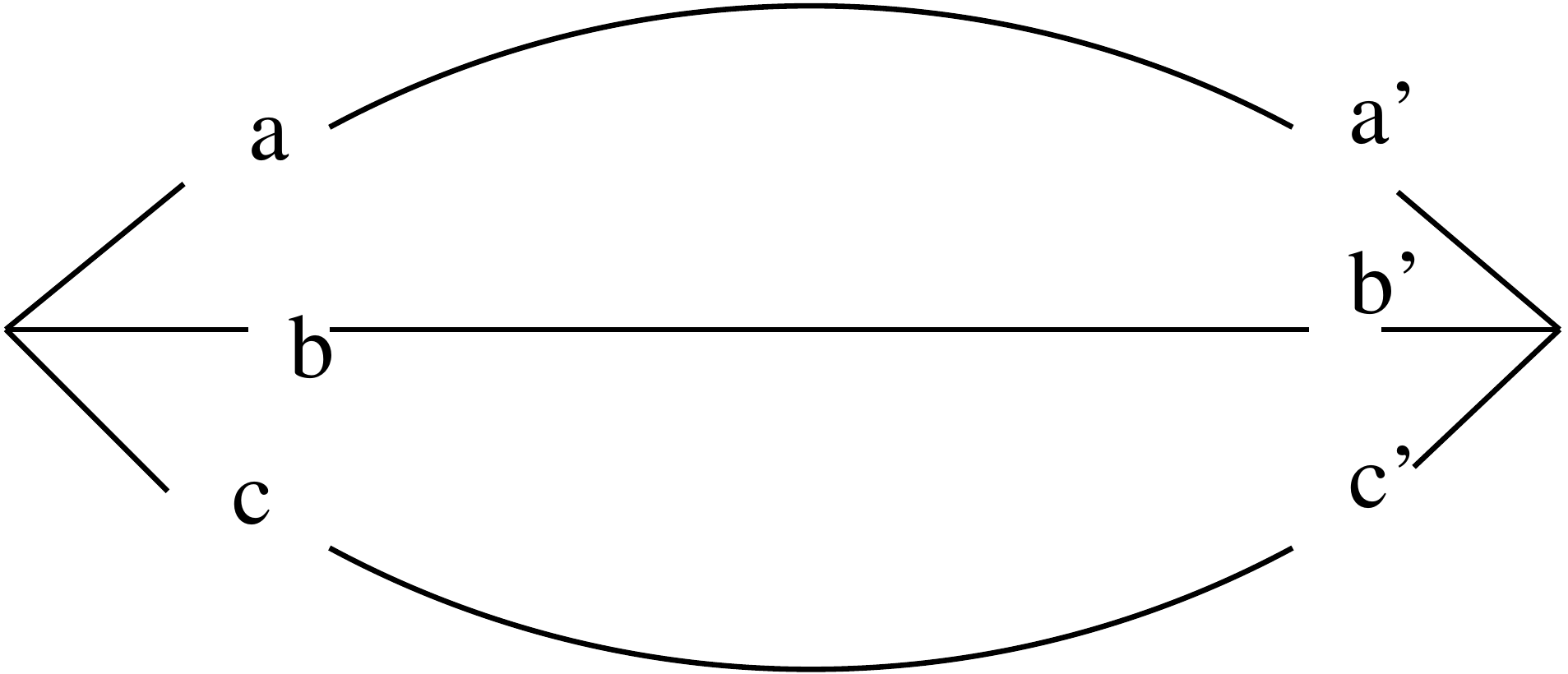}
\caption{The "theta" diagram.}
\label{theta}
\end{figure}


A state on $\hat{\Gamma}$ is a mapping from endpoints of stars and edges of $\hat{\Gamma}$
to the set $1,\dots, n$, for an example see Fig. \ref{theta}.
The weight of $\Gamma$ is defined as
\[
F_c(\Gamma)=\sum_{states}  \frac{\pa^l v}{\pa x^{j_1}\dots \pa x^{j_l}}(c) \prod_{vertices} \frac{\pa^k S}{\pa x^{i_1}\dots \pa x^{i_k}}(c)\prod_{edges} (B_c^{-1})_{ij}
\]
Here the sum is taken over all states on $\hat{\Gamma}$, and $i_1,\dots, i_k$ are states on the endpoints of edges in the start of a vertex. The first factor is the weight of the special vertex where $v$ is the density of the integration measure in local coordinates $\frac{dx}{db}=v(x) dx^1\dots dx^N$.
The $(i,j)$ is the state at endpoints of an edge. Note that weights of vertices and the matrix $B_c$
are symmetric. This makes the definition meaningful.

\section{Gauge fixing in Maxwell's electromagnetism}

In the special case of electromagnetism ($G = \RR, \mathfrak{g} = \mathbb{R}$), the space of fields is $F_M = \Omega^1(M) \oplus \Omega^{n-2}(M)$ and similarly for the boundary. If $M$ has no boundary, the gauge group $G_M = \Omega^0(M)$ acts on fields as follows: $A \mapsto A + d \alpha, B \mapsto B$. We can construct a global section of the corresponding quotient using Hodge decomposition: we know that

\begin{equation} \Omega^{\bullet}(M) \cong \Omega_{\text{exact}}^{\bullet}(M) \oplus H^{\bullet}(M) \oplus \Omega_{\text{coexact}}^{\bullet}(M) \end{equation}
where the middle term consists of harmonic forms. In particular,

\begin{equation} \Omega^1(M) = d \Omega^0(M) \oplus H^1(M) \oplus d^{\ast} \Omega^2(M) \end{equation}
where the last two terms give a global section. In physics, choosing a global section is called gauge fixing, and this particular choice of gauge is called the Lorentz gauge, where $d^{\ast} A = 0$.

\section{Hodge decomposition for Riemannian manifolds with boundary}\label{hodge}

\subsection{Hodge decomposition with Dirichlet and Neumann boundary conditions} Let $M$ be a smooth oriented Riemannian manifold with boundary $\pa M$.
Recall some basic facts about the Hodge decomposition of differential forms on $M$.
Choose local coordinates near the boundary in which the metric has the product structure
with $t$ being the coordinate in the normal direction.
Near the boundary any smooth form can be written as

\[
\omega=\omega_{tan}+\omega_{norm}\wedge dt
\]
where $\omega_{tan}$ is the tangent component of
$\omega$ near the boundary and $\omega_{norm}$ is the normal component.

We will denote by $\Omega_D(M)$ the space of forms satisfying the Dirichlet
boundary conditions $\io^*(\omega)=0$ where $\io^*$ is the pull-back of the form
$\omega$ to the boundary. This condition
can be also written as $\omega_{tan}=0$.

We will denote by $\Omega_N(M)$ the space of forms satisfying the Neumann
boundary conditions $\io^*(\ast\omega)=0$. Here $\ast: \Omega^i(M)\to \Omega^{n-i}(M)$
is the Hodge star operation, recall that $\ast^2=(-1)^{i(n-i)}\mathrm{id}$ on $\Omega^i(M)$.
Because $\omega_{norm}=\ast'\io^*(\ast\omega )$
the Neumann boundary condition can be written as $\omega_{norm}=0$.

Denote by $d^*=(-1)^i{\ast}^{-1} d \ast$ the formal adjoint of $d$, and by $\Delta=dd^*+d^* d$
the Laplacian on $M$.
Denote by $\Omega_{cl}(M)$ closed forms on $M$, $\Omega_{ex}(M)$ exact forms
on, $\Omega_{cocl}(M)$ the space of coclosed forms, i.e. closed with respect to $d^*$
and by $\Omega_{coex}(M)$ the space of coexact forms.

Define subspaces:
\[
\Omega_{cl,cocl}(M)=\Omega_{cl}(M)\cap \Omega_{cocl}(M), \ \ \Omega_{cl,coex}(M)=\Omega_{cl}(M)\cap \Omega_{coex}(M)
\]
and similarly $\Omega_{ex,cocl}(M)$, $\Omega_{cl,cocl, N}(M)$ and $\Omega_{cl,cocl, D}(M)$.

\begin{theorem} 1) The space of forms decomposes as
\[
\Omega(M)=d^* \Omega_N(M)\oplus \Omega_{cl,cocl}(M) \oplus d\Omega_D(M)
\]

2) The space of closed, coclosed forms decomposes as
\[
\Omega_{cl, cocl}(M)=\Omega_{cl,cocl, N}(M)\oplus \Omega_{ex,cocl}(M)
\]
\[
\Omega_{cl, cocl}(M)=\Omega_{cl,cocl, D}(M)\oplus \Omega_{cl,coex}(M)
\]
\end{theorem}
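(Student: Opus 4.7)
The plan is to prove both decompositions by combining orthogonality (established via integration by parts) with existence of solutions to Laplace-type boundary value problems. Throughout I use the $L^2$ inner product $\langle \alpha,\beta\rangle = \int_M \alpha\wedge \ast\beta$ and the Green's formula
\begin{equation*}
\langle d\alpha,\beta\rangle - \langle \alpha, d^*\beta\rangle = \int_{\pa M} \io^*(\alpha)\wedge \io^*(\ast\beta),
\end{equation*}
which is the single identity driving everything. Note that the boundary term vanishes if $\alpha\in\Omega_D(M)$ or $\beta\in\Omega_N(M)$.

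For part (1), pairwise orthogonality of the three summands is immediate from Green's formula. For $\xi\in\Omega_D(M)$ and $h\in\Omega_{cl,cocl}(M)$, $\langle d\xi,h\rangle = \langle \xi, d^*h\rangle = 0$ since $d^*h=0$ and $\io^*(\xi)=0$. For $\eta\in\Omega_N(M)$ and $h\in\Omega_{cl,cocl}(M)$, $\langle d^*\eta, h\rangle = \langle \eta, dh\rangle - \int_{\pa M}\io^*(\eta)\wedge\io^*(\ast h) = 0$ since $dh=0$ and $\io^*(\ast h)=0$. Finally $\langle d^*\eta, d\xi\rangle = \langle \eta, d^2\xi\rangle + \text{bd term}$, where both contributions vanish. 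The nontrivial half is fullness: given $\omega\in\Omega(M)$, I would construct a Green operator $G$ satisfying $\Delta G = \mathrm{id} - P_H$, where $P_H$ is the $L^2$ projector onto $\Omega_{cl,cocl}(M)$, with mixed absolute/relative boundary conditions on $G$ tailored so that $\xi := d^*G\omega\in\Omega_D(M)$, $\eta := dG\omega\in\Omega_N(M)$, and $h := \omega - d\xi - d^*\eta \in \Omega_{cl,cocl}(M)$. The main obstacle is precisely this analytic step: one must invoke elliptic regularity for the Hodge Laplacian under Dirichlet/Neumann-type boundary conditions, establishing that such $G$ exists with smooth output up to $\pa M$ and that $\Omega_{cl,cocl,D}(M)$, $\Omega_{cl,cocl,N}(M)$ are finite-dimensional. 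This is the Morrey--Friedrichs theory, which I would invoke as a black box rather than re-derive.

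For part (2), I treat the first refinement; the second is dual under $\ast$. Orthogonality: for $h\in\Omega_{cl,cocl,N}(M)$ and $\omega = d\beta\in\Omega_{ex,cocl}(M)$, Green's formula gives $\langle d\beta, h\rangle = \langle \beta, d^*h\rangle + \int_{\pa M}\io^*(\beta)\wedge\io^*(\ast h) = 0$, using $d^*h=0$ and $\io^*(\ast h)=0$. For fullness, given $\omega\in\Omega_{cl,cocl}(M)$, let $h := P_{\mathcal{N}}\omega$ be its orthogonal projection onto the finite-dimensional subspace $\Omega_{cl,cocl,N}(M)$, and set $\omega' := \omega - h$. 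Then $\omega'$ is closed, coclosed, and orthogonal to $\Omega_{cl,cocl,N}(M)$; I must show $\omega'$ is exact. Applying (1), write $\omega' = d\xi + d^*\eta + h''$ with $\xi\in\Omega_D(M)$, $\eta\in\Omega_N(M)$, $h''\in\Omega_{cl,cocl}(M)$. Pairing $\omega'$ with $d^*\eta$ and using $d\omega'=0$ plus $\eta\in\Omega_N(M)$ forces $\|d^*\eta\|^2 = \langle \omega', d^*\eta\rangle = \langle d\omega',\eta\rangle - \int_{\pa M}\io^*(\omega')\wedge\io^*(\ast\eta) = 0$, so $d^*\eta=0$. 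Thus $\omega' = d\xi + h''$, and both terms are coclosed, so $d^*d\xi = 0$, giving $\|d\xi\|^2 = \langle d\xi, \omega' - h''\rangle$; a further application of Green's formula together with orthogonality of $\omega'$ to $\Omega_{cl,cocl,N}$ pins down $h'' \in \Omega_{cl,cocl,N}$, whence $h''=0$ and $\omega' = d\xi$ is exact and coclosed. The second refinement follows by applying the first to $\ast\omega$ and using $\ast\Omega_N = \Omega_D$, $\ast d = \pm d^*\ast$. The only subtlety I expect to encounter is keeping track of boundary terms when iterating Green's formula, but no new analytic input beyond the elliptic theory used in (1) is required.
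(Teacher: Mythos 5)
Your treatment of part (1) is sound and is essentially the paper's own route: pairwise orthogonality of the three summands via Green's formula, with the surjectivity of the decomposition deferred to the elliptic theory (the paper is even more cavalier, simply assuming $V=U\oplus U^\perp$ holds in this infinite-dimensional setting and pointing to the reference \cite{CTGM}; invoking the Morrey--Friedrichs Green operator as a black box is a legitimate way to fill that in).

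Part (2), however, contains a genuine gap, and it is located exactly where you place all the weight. After subtracting $h=P_{\mathcal N}\omega$ you apply the decomposition of part (1) to $\omega'=\omega-h$. But $\omega'$ is still closed and coclosed, i.e.\ it already lies in the middle summand $\Omega_{cl,cocl}(M)$ of part (1); since that decomposition is orthogonal, its $d\Omega_D$- and $d^*\Omega_N$-components are automatically zero. (You can see this directly in your own computation: you show $d^*d\xi=0$, so $d\xi\in\Omega_{cl,cocl}(M)$, while $d\xi\in d\Omega_D(M)$ is orthogonal to $\Omega_{cl,cocl}(M)$, forcing $d\xi=0$; likewise $d^*\eta=0$.) So the bootstrap returns $\omega'=h''=\omega'$ and produces no primitive. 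The remaining assertion --- that a closed, coclosed form orthogonal to $\Omega_{cl,cocl,N}(M)$ is exact --- is precisely the nontrivial content of the Friedrichs refinement, and no application of Green's formula alone can yield it: Green's formula only ever gives orthogonality relations, whereas here one must \emph{exhibit} a primitive. The missing input is either (a) the identification $\Omega_{cl,cocl,N}(M)\cong H^\bullet_{dR}(M)$ together with Poincar\'e--Lefschetz duality, so that $\omega'\perp\Omega_{cl,cocl,N}(M)$ kills the de Rham class $[\omega']$ and exactness follows, or (b) the solvability of an additional boundary value problem producing $\alpha$ with $d\alpha=\omega'$. You should add one of these; as written, the claim that Green's formula ``pins down $h''\in\Omega_{cl,cocl,N}(M)$'' is unsupported (and a generic element of the harmonic-field summand of part (1) does not satisfy the Neumann condition). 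The paper's own proof of part (2) is only the phrase ``the proof of the second part is similar,'' so it offers no guidance here, but the step you are missing is a real one, not a formality.
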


We will only outline the proof of this theorem. For more details and references on the Hodge decomposition for manifolds with boundary and Dirichlet and Neumann boundary conditions see \cite{CTGM}. Riemannian structure on $M$ induces the scalar product on forms
\begin{equation}\label{sc-prod}
(\omega, \omega')=\int_M \omega\wedge\ast\omega'
\end{equation}
For two forms of the same degree  we have $\omega(x)\wedge \ast\omega'(x)=<\omega(x),\omega'(x)>dx$ where $dx$ is the Riemannian volume form and $<.,.>$ is the scalar product on
$\wedge^kT^*_xM$ induced by the metric. This is why (\ref{sc-prod}) is positive definite.

\begin{lemma}\label{orth-cocl} With respect to the scalar product (\ref{sc-prod})
\[
(d\Omega_D(M))^\perp=\Omega_{cocl}
\]
\end{lemma}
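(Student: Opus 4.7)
The plan is to prove both inclusions via integration by parts, using the vanishing of the boundary term when the Dirichlet condition holds.

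First I would establish the basic identity. For $\alpha\in\Omega^{k-1}(M)$ and $\beta\in\Omega^k(M)$, apply Stokes' theorem to $d(\alpha\wedge\ast\beta) = d\alpha\wedge\ast\beta + (-1)^{k-1}\alpha\wedge d\ast\beta$ to obtain
\[
(d\alpha,\beta) \;=\; \int_M d\alpha\wedge\ast\beta \;=\; \int_{\pa M} \io^*(\alpha\wedge\ast\beta) \;+\; (\alpha, d^*\beta),
\]
where the last term uses the definition $d^* = (-1)^i \ast^{-1}d\ast$ on $i$-forms. The boundary integrand factors as $\io^*(\alpha)\wedge\io^*(\ast\beta)$.

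Forward inclusion: if $\alpha\in\Omega_D(M)$ then $\io^*(\alpha)=0$, so the boundary term drops out and $(d\alpha,\beta)=(\alpha,d^*\beta)$. Hence whenever $\beta\in\Omega_{cocl}(M)$, i.e. $d^*\beta=0$, we get $(d\alpha,\beta)=0$ for every $\alpha\in\Omega_D(M)$; thus $\Omega_{cocl}(M)\subset (d\Omega_D(M))^\perp$.

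Reverse inclusion: suppose $\beta\in(d\Omega_D(M))^\perp$. By the identity above, $(\alpha,d^*\beta)=0$ for every $\alpha\in\Omega_D(M)$. Smooth forms compactly supported in the interior of $M$ lie in $\Omega_D(M)$ and are dense in $L^2$ forms on $M$, so testing against such $\alpha$ forces $d^*\beta = 0$ pointwise (in the distributional sense, and then smoothly since $\beta$ is smooth). Therefore $\beta\in\Omega_{cocl}(M)$, proving the opposite inclusion.

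The only real subtlety is the sign bookkeeping in the integration-by-parts formula and checking that the boundary integrand really is $\io^*(\alpha)\wedge\io^*(\ast\beta)$ (so that Dirichlet conditions on $\alpha$ alone kill it, without any condition on $\beta$). This follows from naturality of pullback along the inclusion $\io:\pa M\hookrightarrow M$. The density step is standard and presents no real obstacle once one works with smooth compactly supported test forms in the interior.
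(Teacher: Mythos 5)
Your proof is correct and follows essentially the same route as the paper: integration by parts via Stokes' theorem, with the Dirichlet condition killing the boundary term, plus the observation that testing against all such forms forces $d^*\beta=0$. Your explicit appeal to density of compactly supported forms in the reverse inclusion is a slight sharpening of a step the paper leaves implicit.
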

\begin{proof} By the Stokes theorem for any form $\theta\in \Omega^{i-1}_D(M)$
we have
\[
(\omega, d\theta)=\int_M \omega\wedge \ast d\theta= (-1)^{(i+1)(n-i)}(\int_{\pa M} \io^*(\ast \omega)\wedge \io^*(\theta)+ \int_M d\ast \omega\wedge \theta)
\]
The boundary integral is zero because $\theta\in \Omega_D(M)$. Thus $(\omega, d\theta)=0$
for all $\theta$ if and only if $d\ast \omega=0$ which is equivalent to $\omega\in \Omega_{cocl}(M)$.
\end{proof}

\begin{corollary} Because $d\Omega_D(M)\subset \Omega_{cl}(M)$, we have
$\Omega_{cl}(M)=\Omega_{cl}(M)\cap (d\Omega_D(M))^\perp\oplus d\Omega_D(M)$. i.e.
\[
\Omega_{cl}(M)=\Omega_{cl,cocl}(M)\oplus d\Omega_D(M)
\]
\end{corollary}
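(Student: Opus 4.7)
The plan is to deduce the corollary directly from the first part of the preceding Hodge decomposition theorem, combined with the lemma. The inclusion $d\Omega_D(M)\subset \Omega_{cl}(M)$ is immediate from $d^2=0$, so the content of the corollary is the existence and identification of the orthogonal complement of $d\Omega_D(M)$ inside $\Omega_{cl}(M)$.

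First, I would apply the three-term $L^2$-orthogonal decomposition
\[
\Omega(M)=d^*\Omega_N(M)\oplus \Omega_{cl,cocl}(M)\oplus d\Omega_D(M)
\]
to an arbitrary closed form $\omega\in\Omega_{cl}(M)$, writing $\omega=d^*\alpha+\eta+d\beta$ with $\alpha\in\Omega_N(M)$, $\eta\in\Omega_{cl,cocl}(M)$, $\beta\in\Omega_D(M)$. The goal is then to show that the first component vanishes, so that $\omega\in \Omega_{cl,cocl}(M)\oplus d\Omega_D(M)$.

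The main (and essentially only nontrivial) step is this vanishing. Applying $d$ to the decomposition and using $d\omega=0$, $d\eta=0$, $d^2\beta=0$ yields $dd^*\alpha=0$. To conclude $d^*\alpha=0$, I would compute $(dd^*\alpha,\alpha)$ by Stokes' theorem; the bulk term equals $(d^*\alpha,d^*\alpha)$, while the boundary term is an integral over $\partial M$ of something involving $\iota^*(\ast\alpha)$. This boundary term is exactly what the Neumann condition $\alpha\in \Omega_N(M)$ kills, so $(d^*\alpha,d^*\alpha)=0$ and hence $d^*\alpha=0$. This is the step that truly uses the boundary hypothesis on $\alpha$; keeping track of the signs in $d^*=(-1)^i\ast^{-1}d\ast$ and the Stokes identity is the only real bookkeeping.

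Finally, to identify $\Omega_{cl,cocl}(M)$ with $\Omega_{cl}(M)\cap (d\Omega_D(M))^\perp$ as stated in the corollary, I would invoke the lemma, which gives $(d\Omega_D(M))^\perp=\Omega_{cocl}(M)$; intersecting with $\Omega_{cl}(M)$ gives $\Omega_{cl}(M)\cap \Omega_{cocl}(M)=\Omega_{cl,cocl}(M)$. Combined with the previous step, this yields the orthogonal direct-sum decomposition $\Omega_{cl}(M)=\Omega_{cl,cocl}(M)\oplus d\Omega_D(M)$ as claimed.
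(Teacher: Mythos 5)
Your argument is internally coherent --- in particular the integration-by-parts step showing that the $d^*\Omega_N(M)$-component of a closed form vanishes is correct, since the Neumann condition $\iota^*(\ast\alpha)=0$ kills exactly the boundary term in $(dd^*\alpha,\alpha)=\|d^*\alpha\|^2+\int_{\partial M}\iota^*(d^*\alpha)\wedge\iota^*(\ast\alpha)$. But the overall strategy is logically backwards relative to the paper. The corollary is not a consequence of the three-term decomposition
\[
\Omega(M)=d^*\Omega_N(M)\oplus\Omega_{cl,cocl}(M)\oplus d\Omega_D(M);
\]
it is a \emph{step in the paper's proof of it}. The text states the theorem, announces ``we will only outline the proof,'' proves the Lemma $(d\Omega_D(M))^\perp=\Omega_{cocl}(M)$, derives this corollary from it, and only then completes the sketch by the companion identity $(d^*\Omega_N(M))^\perp=\Omega_{cl}(M)$. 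Invoking the theorem to prove the corollary is therefore circular in the context of the paper's development.

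The intended proof is much shorter and uses only what precedes the corollary: since $d\Omega_D(M)\subset\Omega_{cl}(M)$ (by $d^2=0$), the general splitting $V=U\oplus(U^\perp\cap V)$ for a subspace $U\subset V$ of an inner product space gives
\[
\Omega_{cl}(M)=d\Omega_D(M)\oplus\bigl(\Omega_{cl}(M)\cap(d\Omega_D(M))^\perp\bigr),
\]
and the Lemma identifies the second summand as $\Omega_{cl}(M)\cap\Omega_{cocl}(M)=\Omega_{cl,cocl}(M)$. (The paper explicitly flags, and then waves away, the functional-analytic issue that in infinite dimensions $V=U\oplus U^\perp$ requires closedness of $U$; any honest proof must either make the same disclaimer or address it.) Your final paragraph --- intersecting the Lemma with $\Omega_{cl}(M)$ --- is exactly this identification, so the fix is simply to discard the appeal to the theorem and replace your vanishing argument with the elementary splitting of $\Omega_{cl}(M)$ along the subspace $d\Omega_D(M)$.
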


Here we are sketchy on the analytical side of the story. If $U\subset V$ is a subspace in an
inner product space, in the infinite dimensional setting more analysis might be
required to prove that $V=U\oplus U^\perp$. Here and below we just assume that this
does not create problems. Similarly to Lemma \ref{orth-cocl} we obtain
\[
(d^*\Omega_N(M))^\perp=\Omega_{cl}(M)
\]
This completes the sketch of the proof of the first part. The proof of the second part
is similar.

Note that the spaces in the second part of the theorem are harmonic forms representing
cohomology classes:
\[
\Omega_{cl,cocl, N}(M)=H(M), \ \ \Omega_{cl,cocl, D}(M)=H(M,\pa M)
\]

\subsection{More general boundary conditions}
\subsubsection{} Assume that $M$ is a smooth compact
Riemannian manifold, possibly with non-empty boundary $\pa M$. Let $\pi: \Omega^i(M)\to \Omega^i(\pa M)$, $i=0, \dots, n-1$ be the
restriction map (the pull-back of a form to the boundary) and $\pi(\Omega^n(M))=0$.

The Riemannian structure on $M$ induces the metric on $\pa M$. Denote by $\ast$ the Hodge star for $M$, and by $\ast_\pa$ the Hodge star
for the boundary $\ast_\pa : \Omega^i(\pa M)\to \Omega^{n-1-i}(\pa M)$.
Define the map $\widetilde{\pi}: \Omega(M)\to \Omega(\pa M)$, $i=1,\dots, n$ as the composition $\widetilde{\pi}(\alpha)=\ast_\pa \pi (\ast\alpha)$. Note that $\tilde{\pi}(\Omega^0(M))=0$.

Denote by $\Omega_D(M, L)$ and $\Omega_N(M,L)$ the following subspaces:
\[
\Omega_D(M, L)=\pi^{-1}(L), \ \ \Omega_N(M, L)=\widetilde{\pi}^{-1}(L)
\]
where $L\subset \Omega(\pa M)$ is a subspace.

Denote by $L^\perp$ the orthogonal complement to $L$ with respect to the Hodge inner product on the boundary.
The following is clear:
\begin{lemma}
\[
(\ast L^{(i)})^\perp=\ast(L^{(i)})^\perp, \ \ \ast (L^\perp)=L^{sort}
\]
Here $L^{sort}$ is the space which is symplectic orthogonal to $L$.
\end{lemma}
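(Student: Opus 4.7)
The plan is to prove both equalities by reducing everything to two facts: (i) the boundary Hodge star $\ast_\pa$ is an isometry of the $L^2$ inner product $(\cdot,\cdot)$ on $\Omega(\pa M)$ (up to the sign coming from $\ast_\pa^2 = (-1)^{k(n-1-k)}\mathrm{id}$ on $\Omega^k(\pa M)$), and (ii) the symplectic pairing on $\Omega(\pa M)$ implicit in the definition of $L^{sort}$ is $\omega(\alpha,\beta) = \int_{\pa M}\alpha\wedge\beta$, which is related to the inner product by $\omega(\alpha,\beta) = (\alpha, \ast_\pa^{-1}\beta)$ up to a sign depending on the degree of $\beta$.

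For the first identity $(\ast L^{(i)})^\perp = \ast(L^{(i)})^\perp$, I would write an arbitrary $\alpha \in \Omega^{n-1-i}(\pa M)$ as $\alpha = \ast_\pa \gamma$ with $\gamma \in \Omega^i(\pa M)$ (which is legitimate because $\ast_\pa$ is a linear isomorphism between these spaces). The condition $\alpha \in (\ast L^{(i)})^\perp$ says $(\ast \gamma, \ast \beta)=0$ for every $\beta\in L^{(i)}$; applying the isometry property $(\ast\gamma,\ast\beta) = \pm(\gamma,\beta)$ this becomes $\gamma \perp L^{(i)}$, i.e.\ $\gamma \in (L^{(i)})^\perp$, equivalently $\alpha \in \ast(L^{(i)})^\perp$. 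Running the argument backwards gives the reverse inclusion.

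For the second identity $\ast(L^\perp) = L^{sort}$, it suffices to work one degree at a time since both $L^\perp$ and $L^{sort}$ decompose into components dual under $\ast_\pa$. Write an arbitrary element as $\eta = \ast_\pa \zeta$. By definition $\eta \in L^{sort}$ means $\int_{\pa M}\eta\wedge\beta = 0$ for all $\beta \in L$; substituting $\eta = \ast\zeta$ and using $\int_{\pa M}(\ast\zeta)\wedge\beta = \pm(\zeta,\beta)$ (with sign $(-1)^{k(n-1-k)}$ coming from commuting $\beta$ past $\ast\zeta$), this is equivalent to $(\zeta,\beta)=0$ for all $\beta\in L$, which is exactly $\zeta \in L^\perp$. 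Hence $\eta \in \ast(L^\perp)$.

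The only obstacle is sign bookkeeping: tracking the signs from $\ast_\pa^2$ and from the graded commutativity of the wedge product through the two computations. These signs are harmless since they are nonzero scalars and only the vanishing or non-vanishing of the pairings matter for the orthogonal-complement identities; nevertheless one should carry them through carefully to confirm that the claimed equalities hold on the nose rather than merely up to a relabeling.
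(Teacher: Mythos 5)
Your proof is correct and is precisely the verification the paper has in mind: the paper states this lemma without proof (``The following is clear''), and your two reductions --- that $\ast_\pa$ is an isometry of the Hodge inner product (in fact exactly, since the two sign factors $(-1)^{i(n-1-i)}$ from $\ast_\pa^2$ and from graded commutativity cancel), and that the symplectic pairing $\int_{\pa M}\alpha\wedge\beta$ equals the inner product against $\ast_\pa$ of one argument up to sign --- are the intended one-line arguments. As you note, the signs are irrelevant for the orthogonal-complement statements since only the vanishing of the pairings matters, so the identities hold on the nose.
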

\begin{proposition} $(d^*\Omega_N(M,L))^\perp=\Omega_D(M,L^\perp)_{cl}$
\end{proposition}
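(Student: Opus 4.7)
The proposition generalizes Lemma \ref{orth-cocl} (the case $L=0$) and I would prove it by the same integration-by-parts pattern: Stokes' theorem on $\omega\wedge *\eta$ converts $(\omega,d^*\eta)$ into $(d\omega,\eta)$ plus a boundary pairing, and the boundary conditions $L,L^\perp$ are engineered so that this pairing vanishes precisely on the pair of spaces in question.

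The key preliminary identity is
$$(\omega,d^*\eta) \;=\; (d\omega,\eta) \;\pm\; (\pi(\omega),\tilde\pi(\eta))_{\partial M}$$
for $\omega\in\Omega^i(M)$ and $\eta\in\Omega^{i+1}(M)$, where $(\cdot,\cdot)_{\partial M}$ is the induced Hodge inner product on the boundary. To get it, apply Stokes' theorem to $\omega\wedge *\eta$ exactly as in the proof of Lemma \ref{orth-cocl}; the boundary integral that appears is $\int_{\partial M}\pi(\omega)\wedge \pi(*\eta)$, and since by definition $\tilde\pi(\eta)=*_\partial\pi(*\eta)$ we can rewrite $\pi(*\eta)=\pm *_\partial\tilde\pi(\eta)$, turning the boundary integral into $\pm(\pi(\omega),\tilde\pi(\eta))_{\partial M}$. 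The signs depend only on degree and dimension and are irrelevant for what follows.

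With this identity the inclusion $\Omega_D(M,L^\perp)_{cl}\subseteq (d^*\Omega_N(M,L))^\perp$ is immediate: if $\omega$ is closed and $\pi(\omega)\in L^\perp$, then for any $\eta\in\Omega_N(M,L)$ the first term on the right vanishes since $d\omega=0$ and the boundary term vanishes because $\tilde\pi(\eta)\in L$ is orthogonal to $\pi(\omega)\in L^\perp$. For the reverse inclusion I would first test $(\omega,d^*\eta)=0$ against $\eta$ compactly supported in the interior; such $\eta$ automatically satisfy $\tilde\pi(\eta)=0\in L$, so $\eta\in\Omega_N(M,L)$, the boundary term drops out, and I conclude $d\omega=0$ (weakly, hence by regularity classically). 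Plugging $d\omega=0$ back in, the identity collapses to $(\pi(\omega),\tilde\pi(\eta))_{\partial M}=0$ for all $\eta\in\Omega_N(M,L)$. Then I would note that $\tilde\pi$ maps $\Omega_N(M,L)$ onto $L$ --- any $v\in L$ can be realized as $\tilde\pi(\eta)$ by extending $\pm *_\partial v$ to a smooth form on $M$ via a collar --- which forces $\pi(\omega)\in L^\perp$.

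The substantive obstacle is the same one already finessed in Lemma \ref{orth-cocl}: handling the orthogonal-complement manipulations in the infinite-dimensional setting and upgrading distributional closedness of $\omega$ to a pointwise statement with the correct regularity up to the boundary. I would defer to the standard elliptic theory that underlies the earlier Hodge decomposition results cited from \cite{CTGM} rather than reprove it here; everything else is algebraic bookkeeping with the Stokes identity.
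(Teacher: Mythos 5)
Your proposal is correct and follows essentially the same route as the paper: integrate by parts via Stokes' theorem to split $(\omega,d^*\alpha)$ into a bulk term $(d\omega,\alpha)$ and a boundary pairing of $\pi(\omega)$ with $\tilde\pi(\alpha)$, then observe that the two must vanish separately, forcing $d\omega=0$ and $\pi(\omega)\in L^\perp$. The extra details you supply (testing against compactly supported forms first, and the surjectivity of $\tilde\pi$ onto $L$ via a collar extension) are exactly the steps the paper leaves implicit in its ``if and only if'' assertion, not a different argument.
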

\begin{proof} Let $\omega$ be an $i$-form on $M$ such that
\[
\int_M \omega\wedge d\ast \alpha=0
\]
for any $\alpha$. Applying Stocks theorem we obtain
\[
\int_M \omega\wedge d\ast \alpha= (-1)^i \int_{\pa M} \pi(\omega)\wedge \ast_\pa \tilde{\pi}(\alpha)+ (-1)^{i+1} \int_M d\omega\wedge \ast\alpha
\]
The boundary integral is zero for any $\alpha$ if and only if $\pi(\omega)\in L^\perp$ and the bulk integral is
zero for any $\alpha$ if and only if $d\omega=0$.
\end{proof}

As a corollary of this we have the orthogonal decomposition
\[
\Omega(M)=\Omega_D(M,L^\perp)_{cl}\oplus d^*\Omega_N(M,L)
\]
Similarly, for each subspace $L\subset \Omega(\pa M)$ we have
the decomposition
\[
\Omega(M)=\Omega_N(M,L^\perp)_{cocl}\oplus d\Omega_D(M,L)
\]
Now, assume that we have two subspaces $L, L_1\subset \Omega(\pa M)$
such that
\begin{equation}\label{ass}
d_\pa (L_1^{\perp})\subset L^\perp,
\end{equation}
Note that this implies $d^*_\pa L\subset L_1$.
Indeed, fix $\alpha \in L$, then (\ref{ass}) implies that
for any $\beta\in L_1^\perp$ we have
\[
\int_{\pa M} \alpha\wedge \ast d_\pa \beta=0
\]
This is possible if and only if
\[
\int_{\pa M} \ast d_{\pa} \ast \alpha\wedge \ast \beta=0
\]
Thus, $d_\pa^*\alpha\in L_1$. Here we assumed that $(L_1^\perp)^\perp=L_1$.

Because $\pi d=d_\pa \pi$ and $\tilde{\pi} d^*=d^*_\pa \tilde{\pi}$ we also have
\[
d\Omega_D(M,L_1^\perp)\subset \Omega_D(M,L^\perp)_{cl}, \ \ d^*\Omega_N(M,L)\subset \Omega_N(M,L_1)_{cocl}
\]

\begin{theorem} Under assumption (\ref{ass}) we have
\begin{equation}\label{H-decomp}
\Omega(M)=d^*\Omega_N(M,L)\oplus \Omega_D(M,L^\perp)_{cl}\cap \Omega_N(M,L_1)_{cocl}\oplus d\Omega_D(M,L_1^\perp)
\end{equation}
\end{theorem}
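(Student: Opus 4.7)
The plan is to reduce the three-term decomposition to two successive applications of the two-term orthogonal decomposition established in the preceding proposition, using the compatibility assumption (\ref{ass}) to match the boundary conditions.

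First I would record the two basic orthogonal splittings that are available. The first is the proposition just proved,
\[
\Omega(M)=d^*\Omega_N(M,L)\oplus \Omega_D(M,L^\perp)_{cl}.
\]
The second is its $d$/$d^*$-dual, namely
\[
\Omega(M)=\Omega_N(M,L_1)_{cocl}\oplus d\Omega_D(M,L_1^\perp),
\]
which is obtained by the same Stokes-type computation as in the proposition, integrating by parts to show that $\bigl(d\Omega_D(M,K)\bigr)^\perp=\Omega_N(M,K^\perp)_{cocl}$ and then specializing to $K=L_1^\perp$.

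Next I would combine these. Given $\omega\in\Omega(M)$, apply the first splitting to write $\omega=d^*u+\alpha$ with $u\in\Omega_N(M,L)$ and $\alpha\in\Omega_D(M,L^\perp)_{cl}$. Then apply the second splitting to $\alpha$ to write $\alpha=\alpha_0+d\beta$ with $\alpha_0\in\Omega_N(M,L_1)_{cocl}$ and $\beta\in\Omega_D(M,L_1^\perp)$. The central step is to show that $\alpha_0$ automatically lies in $\Omega_D(M,L^\perp)_{cl}$: closedness of $\alpha_0=\alpha-d\beta$ is automatic, and the boundary value satisfies
\[
\pi(\alpha_0)=\pi(\alpha)-d_{\pa}\pi(\beta)\in L^\perp-d_{\pa}L_1^\perp\subset L^\perp,
\]
where the last inclusion is exactly assumption (\ref{ass}). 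Hence $\alpha_0\in\Omega_D(M,L^\perp)_{cl}\cap\Omega_N(M,L_1)_{cocl}$, giving the desired three-term expression.

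For directness of the sum I would note that the pairwise orthogonality of the three summands follows immediately from the two basic splittings: the first splitting orthogonalizes $d^*\Omega_N(M,L)$ against everything in $\Omega_D(M,L^\perp)_{cl}$ (which contains both other summands, using again (\ref{ass}) for $d\Omega_D(M,L_1^\perp)$), while the second splitting orthogonalizes $d\Omega_D(M,L_1^\perp)$ against $\Omega_N(M,L_1)_{cocl}$ and in particular against its subspace $\Omega_D(M,L^\perp)_{cl}\cap\Omega_N(M,L_1)_{cocl}$.

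The main obstacle, beyond the bookkeeping of signs in the boundary term, is the standard analytic point that in the infinite-dimensional setting one must know that each of the relevant subspaces is closed and that $(U^\perp)^\perp=U$, so that the symbolic ``$V=U\oplus U^\perp$" is a genuine Hilbert-space decomposition; this is precisely the analytic content glossed over already in the preceding proposition and is what makes (\ref{ass}) the structurally decisive ingredient rather than just a convenient hypothesis.
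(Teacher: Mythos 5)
Your proposal is correct and is essentially the paper's own argument: the paper packages the same two orthogonal splittings $(d^*\Omega_N(M,L))^\perp=\Omega_D(M,L^\perp)_{cl}$ and $(d\Omega_D(M,L_1^\perp))^\perp=\Omega_N(M,L_1)_{cocl}$, together with the inclusions $d\Omega_D(M,L_1^\perp)\subset\Omega_D(M,L^\perp)_{cl}$ and $d^*\Omega_N(M,L)\subset\Omega_N(M,L_1)_{cocl}$ forced by (\ref{ass}), into the abstract statement that $W\subset V^\perp$ and $V\subset W^\perp$ imply $\Omega=V\oplus(W^\perp\cap V^\perp)\oplus W$. Your two-stage decomposition, with the boundary computation $\pi(\alpha_0)=\pi(\alpha)-d_\pa\pi(\beta)\in L^\perp$, is exactly the unwound form of that lemma, and you flag the same analytic caveats the paper does.
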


Indeed, if $V, W\subset \Omega$ are liner subspaces in the scalar product space $\Omega$
such that $W\subset V^\perp$ and $V\subset W^\perp$ then $\Omega=V\oplus V^\perp=W\oplus W^\perp$ and
\[
\Omega=V\oplus W^\perp\cap V^\perp \oplus W
\]

We will call the identity (\ref{H-decomp}) the Hodge decomposition with
boundary conditions. The following is clear:

\begin{theorem} The decomposition (\ref{H-decomp}) agrees with the Hodge star
operation if and only if
\[
\ast L_1^\perp=L
\]
\end{theorem}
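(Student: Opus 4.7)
The strategy is to show that the Hodge star $\ast$ on $M$ carries each of the three summands in (\ref{H-decomp}) bijectively onto a summand of the same decomposition, and that this happens precisely when $\ast L_1^\perp=L$ (where on the boundary $\ast$ means $\ast_\pa$). The key preliminary observation is that
\begin{equation*}
\ast\,\Omega_D(M,K)=\Omega_N(M,\ast K),\qquad \ast\,\Omega_N(M,K)=\Omega_D(M,\ast K)
\end{equation*}
for every subspace $K\subset\Omega(\pa M)$; this is immediate from the definitions $\Omega_D(M,K)=\pi^{-1}(K)$, $\Omega_N(M,K)=\tilde\pi^{-1}(K)$, the formula $\tilde\pi=\ast\circ\pi\circ\ast$, and $\ast^2=\pm\mathrm{id}$. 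Combined with $d^*=\pm\ast d\ast$, which forces $\ast$ to swap $d$ and $d^*$ and to interchange closed with coclosed forms, one obtains
\begin{align*}
\ast\,d^*\Omega_N(M,L) &= d\,\Omega_D(M,\ast L),\\
\ast\bigl(\Omega_D(M,L^\perp)_{cl}\cap \Omega_N(M,L_1)_{cocl}\bigr) &= \Omega_N(M,\ast L^\perp)_{cocl}\cap \Omega_D(M,\ast L_1)_{cl},\\
\ast\,d\,\Omega_D(M,L_1^\perp) &= d^*\Omega_N(M,\ast L_1^\perp).
\end{align*}

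Since $\ast$ must interchange the exact and coexact summands and fix the harmonic (closed-and-coclosed) middle piece, for the decomposition to be $\ast$-invariant the boundary data have to match: $\ast L_1^\perp=L$ and $\ast L=L_1^\perp$ for the outer summands, and $\ast L^\perp=L_1$ and $\ast L_1=L^\perp$ for the middle. Using $\ast^2=\pm\mathrm{id}$ on $\Omega(\pa M)$ together with the lemma $(\ast K)^\perp=\ast(K^\perp)$ from the previous subsection, each of these four equalities is equivalent to the single identity $\ast L_1^\perp=L$; so this condition is both necessary and sufficient. Conversely, if $\ast L_1^\perp=L$, the three displayed equalities immediately show that $\ast$ permutes the summands of (\ref{H-decomp}), giving the forward implication.

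The step that deserves care is the backward direction: passing from an equality of subspaces of $\Omega(M)$, such as $d^*\Omega_N(M,\ast L_1^\perp)=d^*\Omega_N(M,L)$, to the corresponding equality of boundary data $\ast L_1^\perp=L$. This requires that the boundary subspace be recoverable from the resulting subspace of $\Omega(M)$, which amounts to a surjectivity statement for $\tilde\pi$ (respectively $\pi$) onto the relevant part of $L$ (respectively $L_1^\perp$) after projection. Granting this mild analytic point, the remainder of the argument is purely algebraic manipulation with the lemma and with $\ast^2=\pm\mathrm{id}$, and the main obstacle is just the careful bookkeeping of signs in the different degrees.
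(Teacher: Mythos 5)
Your argument is correct and is essentially the justification the paper has in mind: the paper states this theorem without proof (``The following is clear''), and the identities you use --- $\ast\,\Omega_N(M,K)=\Omega_D(M,\ast_\pa K)$, $\ast\, d^\ast=\pm\, d\,\ast$, and $(\ast K)^\perp=\ast(K^\perp)$ --- are exactly the ones the paper records in the surrounding lemma and remark, so that $\ast$ swaps the outer summands and fixes the middle one precisely when $\ast L_1^\perp=L$. The caveat you flag in the ``only if'' direction (recovering the boundary subspace from the corresponding bulk summand) is a genuine point, but it is glossed over by the paper as well, so your write-up is at least as complete as the source.
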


\begin{remark}In the particular case $L=\{0\}$ and $L_1^\perp=\{0\}$ we obtain
the decomposition from the previous section:
\[
\Omega(M)=d^*\Omega_N(M)\oplus \Omega_{cl,cocl}(M)\oplus d\Omega_D(M)
\]
\end{remark}
\begin{lemma} If $L\subset \Omega(\pa M)$ is an isotropic subspace
then $\ast L\subset \Omega(\pa M)$ is also an isotropic subspace.
\end{lemma}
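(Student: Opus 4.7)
The plan is to show directly that the Hodge star $\ast = \ast_\pa$ is a symplectomorphism of $\Omega(\pa M)$ equipped with the symplectic pairing $\omega(\alpha,\beta) = \int_{\pa M} \alpha\wedge\beta$; the lemma then follows tautologically, since a symplectomorphism carries isotropic subspaces to isotropic subspaces.

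First I would unpack the definition of isotropy: $L$ is isotropic means $\int_{\pa M} \alpha\wedge\beta = 0$ for all $\alpha,\beta\in L$. Since the pairing vanishes unless the degrees of $\alpha$ and $\beta$ add up to $n-1 = \dim\pa M$, I can restrict attention to the case $\deg\alpha = k$, $\deg\beta = n-1-k$ (the other cases are automatic both before and after applying $\ast$, since $\ast$ sends $\Omega^k \to \Omega^{n-1-k}$).

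The core computation is the identity
\begin{equation*}
\int_{\pa M} \ast\alpha \wedge \ast\beta \;=\; \int_{\pa M} \alpha\wedge\beta,
\end{equation*}
which I would establish using two facts: (i) graded commutativity, giving $\ast\alpha\wedge\ast\beta = (-1)^{k(n-1-k)} \ast\beta\wedge\ast\alpha$, and (ii) the involutivity of $\ast$ on $(n-1)$-manifolds, $\ast\ast = (-1)^{k(n-1-k)}$ on $k$-forms. Combining these with the symmetry of the Hodge inner product $(\,\cdot\,,\,\cdot\,)_{\pa M} = \int_{\pa M}\,\cdot\,\wedge\ast\,\cdot\,$, which gives $\int_{\pa M}\ast\beta\wedge\ast\alpha = (\ast\beta,\alpha)_{\pa M} = (\alpha,\ast\beta)_{\pa M} = (-1)^{k(n-1-k)}\int_{\pa M}\alpha\wedge\beta$, the two signs cancel and the identity follows.

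Applying this identity to $\alpha,\beta\in L$ yields $\int_{\pa M}\ast\alpha\wedge\ast\beta = \int_{\pa M}\alpha\wedge\beta = 0$, which is precisely the isotropy of $\ast L$. There is no real obstacle here; the only thing to watch carefully is the sign bookkeeping in the two applications of $\ast^2$ and in the graded commutativity, which must match exactly in order for the cancellation to go through. Since $\ast$ is moreover an isomorphism, it even sends Lagrangian subspaces to Lagrangian subspaces, which is the stronger statement implicitly used when combining this lemma with the preceding theorem about Hodge decompositions compatible with boundary conditions.
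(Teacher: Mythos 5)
Your proof is correct and follows essentially the same route as the paper's: both rest on the identity $\int_{\pa M}\ast\alpha\wedge\ast\beta=\pm\int_{\pa M}\alpha\wedge\beta$, obtained from $\ast^2=(-1)^{k(n-1-k)}$ together with (graded) commutativity of the wedge pairing. Your version just carries out the sign bookkeeping more explicitly than the paper's two-line sketch.
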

Indeed, if $L$ is isotropic then for any $\alpha, \beta\in L$ we have
$\int_{\pa M} \alpha\wedge \ast\beta=0$, but
\[
\int_{\pa M} \ast \alpha\wedge \ast^2\beta=\pm\int_{\pa M} \alpha\wedge\ast\beta
\]
therefore $\ast L$ is also isotropic.

\begin{remark} We have
\[
\ast \Omega_N(M)=\Omega_D(M),  \ \ \ast H(M)=H(M,\pa M)
\]
In the second formula $H(M)$ is the space of closed-coclosed forms with Neumann boundary conditions
and $H(M, \pa M)$ is the space of closed-coclosed forms with Dirichlet boundary conditions. They
are naturally isomorphic to corresponding cohomology spaces. Note that as a consequence of the first identity we have $*d^*\Omega_N(M)=d\Omega_D(M)$.
We also have more general identity
\[
\ast \Omega_N(M, L)=\Omega_D(M, \ast_\pa L)
\]
and consequently $\ast \Omega_D(M,L)=\Omega_N(M,\ast_\pa L)$.
\end{remark}

Let $\pi$ and $\tilde{\pi}$ be maps defined at the beginning of this section.
Because $\pi$ commutes with de Rham differential and $\tilde{\pi}$ commutes
with its Hodge dual, we have the following proposition
\begin{proposition} Let $H_M(\pa M)$ be the space of harmonic forms
on $\pa M$ extendable to closed forms on $M$, then
\[
\pi(\Omega_{cl}(M))=H_M(\pa M)\oplus d\Omega(\pa M), \ \ \widetilde{\pi}(\Omega_{cocl}(M))=H_M(\pa M)^\perp\oplus d^*\Omega(\pa M)
\]
\end{proposition}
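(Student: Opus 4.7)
The plan is to prove the two identities separately, deriving the second from the first via the Hodge star. Throughout I will work degree-wise and use the fact (established in the appendix) that $\tilde\pi(\alpha) = \ast_\pa \pi(\ast\alpha)$ together with the standard Hodge decomposition on the closed manifold $\pa M$.

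For the first identity, let $\omega \in \Omega_{cl}(M)$. Since $\pi$ commutes with $d$, the form $\pi(\omega)$ is closed on the closed manifold $\pa M$. By Hodge theory on $\pa M$ we may write $\pi(\omega) = h + d\beta$ with $h$ harmonic and $\beta \in \Omega(\pa M)$. I will then pick any smooth extension $\tilde\beta \in \Omega(M)$ (using, e.g., a collar neighborhood and a cutoff); the form $\omega - d\tilde\beta$ is closed on $M$, and $\pi(\omega - d\tilde\beta) = h$, which exhibits $h$ as an element of $H_M(\pa M)$. The reverse inclusion is immediate: $H_M(\pa M) \subset \pi(\Omega_{cl}(M))$ by definition, and every exact form $d\beta$ on $\pa M$ is hit by the closed form $d\tilde\beta \in \Omega_{cl}(M)$. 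Directness of the sum follows because $H_M(\pa M) \subset H(\pa M)$, and $H(\pa M) \cap d\Omega(\pa M) = 0$ by Hodge orthogonality on the closed boundary.

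For the second identity, I will apply the first to $\ast\alpha$. For $\alpha \in \Omega_{cocl}(M)$, the form $\ast\alpha$ is closed, so by the first part $\pi(\ast\alpha) = h' + d\gamma$ with $h' \in H_M(\pa M)$ and $\gamma \in \Omega(\pa M)$. Applying $\ast_\pa$ and using the identity $\ast_\pa d = \pm d^* \ast_\pa$ on $\pa M$ gives
\[
\tilde\pi(\alpha) = \ast_\pa h' + \ast_\pa d\gamma \in \ast_\pa H_M(\pa M) \oplus d^*\Omega(\pa M),
\]
so I need to identify $\ast_\pa H_M(\pa M) = H_M(\pa M)^\perp$ inside $H(\pa M)$ with respect to the $L^2$ inner product. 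Unwinding the Hodge star, this is equivalent to the statement that $H_M(\pa M)$ is its own annihilator under the Poincaré pairing $(h_1, h_2) \mapsto \int_{\pa M} h_1 \wedge h_2$ on the closed $(n-1)$-manifold $\pa M$. Isotropy is an immediate Stokes computation: if $h_j = \pi(\omega_j)$ with $d\omega_j = 0$, then $\int_{\pa M} h_1 \wedge h_2 = \int_M d(\omega_1 \wedge \omega_2) = 0$.

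The main obstacle, and the only nontrivial input, is the matching dimension count that upgrades isotropy to the full annihilator equality. I will invoke the ``half-lives, half-dies'' principle: from the long exact sequence of the pair $(M, \pa M)$ one has $H_M^k(\pa M) = \operatorname{im}(H^k(M) \to H^k(\pa M)) = \ker(H^k(\pa M) \to H^{k+1}(M, \pa M))$, and Poincaré--Lefschetz duality on $M$ identifies the connecting map with the dual of the restriction map in complementary degree, so the annihilator of $H_M^{k}(\pa M)$ under the Poincaré pairing on $\pa M$ is exactly $H_M^{n-1-k}(\pa M)$. Combined with the isotropy above this yields the claimed equality $\ast_\pa H_M(\pa M) = H_M(\pa M)^\perp$, completing the proof. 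All remaining verifications (directness of the second sum, degree bookkeeping for the $\ast_\pa$ map) are routine.
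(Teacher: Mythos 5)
Your proof is correct. Its core computation --- that a restriction of a closed form pairs to zero, via Stokes, against $\widetilde{\pi}$ of a coclosed form --- is the same one the paper uses (the paper phrases it as mutual orthogonality of the two images, $\int_{\pa M}\pi(\theta)\wedge\ast_\pa\widetilde{\pi}(\sigma)=\int_M d(\theta\wedge\ast\sigma)=0$, which is your isotropy statement in disguise). Where you go genuinely beyond the paper is in the second identity: orthogonality plus the boundary Hodge decomposition only yield the inclusion $\widetilde{\pi}(\Omega_{cocl}(M))\subset H_M(\pa M)^\perp\oplus d^*\Omega(\pa M)$, and the paper's outline stops there with ``the proposition follows.'' You correctly isolate the remaining content as the equality $\ast_\pa H_M(\pa M)=H_M(\pa M)^\perp$ inside $H(\pa M)$, for which isotropy gives only one inclusion, and you supply the missing dimension count $\dim H_M^k(\pa M)+\dim H_M^{n-1-k}(\pa M)=\dim H^k(\pa M)$ via the long exact sequence of the pair $(M,\pa M)$ and Poincar\'e--Lefschetz duality. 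This ``half-lives, half-dies'' input is exactly what the paper's sketch elides, so your argument is best viewed as a completion of the paper's proof rather than an alternative to it. Two small points worth making explicit: your display only shows $\widetilde{\pi}(\Omega_{cocl}(M))\subset\ast_\pa H_M(\pa M)\oplus d^*\Omega(\pa M)$, but equality is immediate because $\ast$ is a bijection $\Omega_{cocl}(M)\to\Omega_{cl}(M)$ and $\ast_\pa$ a bijection on $\Omega(\pa M)$; and one should state that $H_M(\pa M)^\perp$ is taken inside the harmonic forms $H(\pa M)$, since otherwise the sum with $d^*\Omega(\pa M)$ would not be direct.
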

Here is an outline of the proof. Indeed, let $\theta\in \Omega_{cl}(M)$ and $\sigma \in \Omega_{cocl}(M)$. Then
\[
\int_{\pa M} \pi(\theta)\wedge \ast_\pa \widetilde{\pi}(\sigma)= \int_{\pa M} \pi(\theta)\wedge\pi(\ast \sigma)=
 \int_M d(\theta\wedge \ast\sigma)
\]
The last expression is zero because by the assumption $\theta$ and $\ast \sigma$ are closed.
The proposition follows now from the Hodge decomposition for forms on the boundary
and from $\pi(\Omega_{cl}(M))\subset \Omega_{cl}(\pa M)$, $\widetilde{\pi}(\Omega_{cocl}(M))\subset \Omega_{cocl}(\pa M)$.

\subsubsection{$\dim M=3$} Let us look in details
at the 3-dimensional case. In order to have the Hodge decomposition with boundary
conditions we required
\[
d L_1^\perp\subset L^\perp
\]
If we want it to be {\it invariant with respect to the Hodge star } we should also have
$\ast L_1^\perp=L$. Together these two conditions imply that $L$ should
satisfy $d\ast L\subset L^\perp$ or
\[
\int_{\pa M} d\ast \alpha\wedge \ast \beta =0
\]
for any $\alpha, \beta \in L$. This condition is equivalent to
\[
\int_{\pa M} d^* \alpha\wedge \beta =0
\]
for any $\alpha\in L^{(1)}$ and any $\beta\in L^{(2)}$.

Note that if $L^{(2)}=\{0\}$ we have no conditions on the subspace $L^{(1)}$.
In this case for any choice of $L^{(0)}$ and $L^{(1)}$  the $\ast$-invariant
Hodge decomposition  is:
\[
\Omega^0(M)=d^*\Omega^1_N(M,L^{(0)})\oplus \Omega^0_D(M, {L^{(0)}}^\perp)_{cl}
\]
\[
\Omega^1(M)=d^*\Omega^2_N(M,L^{(1)})\oplus \Omega^1_D(M, {L^{(1)}}^\perp)_{cl}\cap
\Omega^1_N(M, L^{(0)})_{cocl}\oplus d\Omega_D^0(M)
\]
Here we used $\Omega_N^i(M,L_1)=\Omega^i_N(M, L_1^{(i-1)})=\Omega^i_N(M, (\ast L^{(3-i)})^\perp)$.
The condition $L^{(2)}=\{0\}$ implies that $\Omega^1_N(M,(\ast L^{(2)})^\perp)=
\Omega^1(M)$.
We also used $\Omega^0_D(M, L^\perp_1)=\Omega^0(M, \ast L^{(2)})=\Omega^0_D(M)$.

The decomposition of $2$- and $3$ -forms is the result of application of Hodge star
to these formulae.

\subsubsection{} Consider the bilinear form
\begin{equation}\label{B-form}
B(\alpha, \beta)=\int_M \beta\wedge d\alpha
\end{equation}
on the space $\Omega^\bullet(M)$.

Let $I\subset \Omega^\bullet(\pa M)$ be an isotropic subspace.
\begin{proposition} The form $B$ is symmetric on the space $\Omega_D(M,I)$.
\end{proposition}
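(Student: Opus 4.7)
The idea is to use Stokes' theorem to convert $B(\alpha,\beta)-B(\beta,\alpha)$ into a boundary integral, then kill this integral by the isotropy of $I$. Since $B(\alpha,\beta)=\int_M\beta\wedge d\alpha$ is only nonzero when the degrees satisfy $|\alpha|+|\beta|=n-1$, I restrict attention to homogeneous pieces $\alpha\in\Omega^k(M)$, $\beta\in\Omega^{n-k-1}(M)$ and recover the inhomogeneous case by linearity.

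First I would apply the graded Leibniz rule to $\alpha\wedge\beta$:
\[
d(\alpha\wedge\beta)=d\alpha\wedge\beta+(-1)^k\alpha\wedge d\beta,
\]
integrate over $M$, and use Stokes to obtain
\[
\int_{\partial M}i^*(\alpha)\wedge i^*(\beta)=\int_M d\alpha\wedge\beta+(-1)^k\int_M\alpha\wedge d\beta.
\]
By graded commutativity $d\alpha\wedge\beta=(-1)^{(k+1)(n-k-1)}\beta\wedge d\alpha$, so the first bulk term equals $(-1)^{(k+1)(n-k-1)}B(\alpha,\beta)$, while the second bulk term is exactly $(-1)^k B(\beta,\alpha)$ by definition. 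The key step is then the isotropy assumption: since $i^*(\alpha),i^*(\beta)\in I$, the boundary integral vanishes (it is the value of the symplectic pairing $\int_{\partial M}(\cdot)\wedge(\cdot)$ on a pair of elements of the isotropic subspace $I$). This yields the identity
\[
(-1)^{(k+1)(n-k-1)}B(\alpha,\beta)+(-1)^k B(\beta,\alpha)=0,
\]
from which the symmetry $B(\alpha,\beta)=B(\beta,\alpha)$ is read off for the degrees in which the two signs agree.

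The one thing that needs attention is the sign bookkeeping: in the main application of this appendix (1-forms on the 3-manifold $M$, with $I\subset\Omega^1(\partial M)$, the Abelian Chern-Simons case), one has $k=n-k-1=1$, so both signs reduce to $-1$ and the identity collapses cleanly to $B(\alpha,\beta)=B(\beta,\alpha)$. For other degree combinations the same Stokes-plus-isotropy argument goes through, with the understanding that ``symmetric'' is read in the appropriate graded sense dictated by the exponent $(-1)^{(k+1)(n-k-1)+k}$; the essential input is always that the boundary term $\int_{\partial M}i^*(\alpha)\wedge i^*(\beta)$ is killed by the isotropy of $I$.
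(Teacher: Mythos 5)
Your argument is the same as the paper's: integrate by parts via Stokes' theorem and observe that the resulting boundary term is the symplectic pairing $\int_{\partial M} i^*(\alpha)\wedge i^*(\beta)$ of two elements of the isotropic subspace $I$, hence vanishes, leaving the (graded) symmetry identity. One tiny slip in your closing remark: for $k=1$, $n=3$ the two signs are $(-1)^{(k+1)(n-k-1)}=+1$ and $(-1)^k=-1$ (not both $-1$), which is exactly what turns your displayed identity into $B(\alpha,\beta)=B(\beta,\alpha)$; the displayed identity and the conclusion are correct as written.
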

Indeed
\[
\int_M(\beta\wedge d\alpha)=(-1)^{|\beta|+1}\int_{\pa M}\pi(\beta)\wedge \pi(\alpha)+
\int_M d\beta\wedge \alpha =(-1)^{(|\alpha|+1)(|\beta|+1)}B(\alpha, \beta)
\]
The boundary term vanishes because boundary values of $\alpha$ and $\beta$ are
in an anisotropic subspace $I$.

\begin{proposition} Let $I\subset \Omega(\pa M)$ be an isotropic subspace, then $B$ is nondegenerate on
$d^*\Omega_N(M,I^\perp)\cap \Omega_D(M,I)$.
\end{proposition}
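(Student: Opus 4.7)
The plan is to show that any $\alpha\in V:=d^*\Omega_N(M,I^\perp)\cap\Omega_D(M,I)$ satisfying $B(\alpha,\beta)=0$ for all $\beta\in V$ must vanish. I would proceed in three steps, reducing nondegeneracy of $B$ to the injectivity of $d$ on $V$, and then using a carefully chosen Hodge-dual test element.

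\emph{Step 1 (injectivity of $d$ on $V$).} First I would show that $d\alpha=0$ already forces $\alpha=0$. Indeed, if $\alpha\in V$ and $d\alpha=0$, then $\alpha\in d^*\Omega_N(M,I^\perp)$ by definition of $V$, and also $\alpha\in\Omega_D(M,I)_{cl}$. The orthogonality relation established in the preceding section, $(d^*\Omega_N(M,L))^\perp=\Omega_D(M,L^\perp)_{cl}$, applied with $L=I^\perp$ (so $L^\perp=I$), shows that these two subspaces are $L^2$-orthogonal. Hence $(\alpha,\alpha)_{L^2}=0$, giving $\alpha=0$. Consequently, it is enough to show that the hypothesis $B(\alpha,\cdot)|_V\equiv 0$ forces $d\alpha=0$.

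\emph{Step 2 ($L^2$-reformulation).} Next I would rewrite $B(\alpha,\beta)=\int_M\beta\wedge d\alpha=\pm(\ast\beta,d\alpha)_{L^2}$. Thus the hypothesis becomes: $d\alpha$ is $L^2$-orthogonal to $\ast V^{n-|\alpha|-1}$. The natural candidate for a test element is $\beta:=\ast d\alpha$, which has degree $n-|\alpha|-1$ and satisfies $B(\alpha,\beta)=\pm\|d\alpha\|^2_{L^2}$. Provided $\beta\in V$, the hypothesis immediately forces $\|d\alpha\|^2=0$, and we conclude by Step 1.

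\emph{Step 3 (verification that $\ast d\alpha\in V$).} This splits into two conditions. For membership in $d^*\Omega_N(M,I^\perp)$, I use $\ast d\alpha=\pm d^*(\ast\alpha)$ and compute $\tilde\pi(\ast\alpha)=\pm\ast_\pa\pi(\alpha)$. Since $\pi(\alpha)\in I$ and isotropy of $I$ (with respect to the boundary pairing $\int_{\pa M}\cdot\wedge\cdot$) translates into $\ast_\pa I^k\subset(I^{n-k-1})^\perp$—itself easily deduced from $\int_{\pa M}\eta_1\wedge\eta_2=0$ for $\eta_i\in I$ by rewriting the wedge product as an $\ast_\pa$-inner product—the condition $\tilde\pi(\ast\alpha)\in I^\perp$ follows. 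For the Dirichlet-type condition $\pi(\ast d\alpha)\in I$, I would work in boundary-adapted coordinates: $\pi(\ast d\alpha)=\pm\ast_\pa[(d\alpha)_{\mathrm{norm}}|_{\pa M}]$ with $(d\alpha)_{\mathrm{norm}}=\pa_t\alpha_{\mathrm{tan}}-d_\pa\alpha_{\mathrm{norm}}$. Expanding $\alpha=d^*\gamma$ and using the intertwiner $\tilde\pi\circ d^*=\pm d_\pa^*\circ\tilde\pi$ together with the Neumann condition $\tilde\pi(\gamma)\in I^\perp$ and the Dirichlet condition $\pi(\alpha)\in I$, I expect the resulting boundary value to land in $I$.

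The main obstacle is precisely the Dirichlet verification at the end of Step 3: the normal-derivative piece $\pa_t\pi(\alpha)|_{\pa M}$ is a priori unconstrained by $\pi(\alpha)\in I$, so the direct boundary computation is delicate. If this verification fails in general, my fallback plan is to apply the Hodge decomposition with mixed boundary conditions (Theorem at the end of the appendix, with $L=I^\perp$ and suitable $L_1$) to $\ast d\alpha$, thereby writing $\ast d\alpha=\beta_0+r$ with $\beta_0\in V^{n-|\alpha|-1}$ and $r$ lying in a summand $L^2$-orthogonal to $dV^{|\alpha|}$; then $(\ast d\alpha,d\alpha)=(\beta_0,d\alpha)$ would equal $\pm B(\alpha,\ast^{-1}\beta_0)$, which still pins down $\|d\alpha\|^2$ and closes the argument.
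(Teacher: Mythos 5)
Your Step 1 is correct and is exactly the ingredient the paper also needs (the orthogonal decomposition $\Omega(M)=\Omega_D(M,I)_{cl}\oplus d^*\Omega_N(M,I^\perp)$ forces $V\cap\Omega_D(M,I)_{cl}=0$). The gap is in Steps 2--3, and it is the one you half-identify yourself: the test element $\ast d\alpha$ does not lie in $V$. Condition (a), $\ast d\alpha\in d^*\Omega_N(M,I^\perp)$, does go through via $\ast_\pa I\subset I^\perp$, but the Dirichlet condition $\pi(\ast d\alpha)\in I$ genuinely fails, since $(d\alpha)_{\mathrm{norm}}$ contains $\pa_t\alpha_{\mathrm{tan}}$, which is not constrained by $\pi(\alpha)\in I$ or by $\alpha=d^*\gamma$. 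Your fallback does not close the gap either: decomposing $\ast d\alpha=k+\mu$ with $k\in\Omega_D(M,I)_{cl}$ and $\mu\in d^*\Omega_N(M,I^\perp)$, the term $\int_M k\wedge d\alpha$ does vanish (integrate by parts and use isotropy of $I$), but $\mu$ inherits no control on $\pi(\mu)$, so $\mu\notin V$ and the hypothesis $B(\alpha,\cdot)|_V=0$ cannot be applied to it. There is no decomposition in the appendix in which $V$ itself appears as a direct summand of $\Omega(M)$, so ``$r$ lies in a summand $L^2$-orthogonal to $dV$'' is not something you can actually arrange.

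The repair is to enlarge the space of test forms rather than to force the test form into $V$. By the preceding proposition $B$ is symmetric (up to sign) on $\Omega_D(M,I)$, so for any closed $k\in\Omega_D(M,I)_{cl}$ one has $B(\alpha,k)=\pm B(k,\alpha)=\pm\int_M\alpha\wedge dk=0$ automatically. Combined with the hypothesis on $V$ and the splitting $\Omega_D(M,I)=\Omega_D(M,I)_{cl}\oplus V$, this gives $B(\alpha,\beta)=\int_M\beta\wedge d\alpha=0$ for \emph{all} $\beta\in\Omega_D(M,I)$ --- in particular for all compactly supported $\beta$, which forces $d\alpha=0$ pointwise. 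Your Step 1 then yields $\alpha=0$. This is precisely the paper's argument: the kernel of $B$ on $\Omega_D(M,I)$ is identified as $\Omega_D(M,I)_{cl}$, and $V$ is a complement to that kernel.
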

\begin{proof} If $I$ is isotropic, $\beta\in \Omega_D(M,I)$ and $B(\beta, \alpha)=0$ for any $\alpha\in \Omega_D(M,I)$,
we have:
\[
B(\beta, \alpha)=B(\alpha, \beta)=\int_M \alpha\wedge d\beta
\]
and therefore $d\beta=0$. Therefore $\Omega_D(M,I)_{cl}$ is the kernel of the
form $B$ on $\Omega_D(M,I)$. But we have the decomposition
\[
\Omega(M)=\Omega_D(M,I)_{cl}\oplus d^*\Omega_N(M,I^\perp)
\]
This implies
\[
\Omega_D(M,I)= \Omega_D(M,I)_{cl}\oplus d^*\Omega_N(M,I^\perp)\cap \Omega_D(M,I)
\]
This proves the statement.
\end{proof}

In particular, the restriction of the bilinear form $B$ is nondegenerate on
$\Lambda_I=d^*\Omega_N^2(M,{I^{(1)}}^\perp)\cap \Omega_D^1(M,I^{(1)})$.
For the space of all 1-forms with boundary values in $I^{(1)}$ we have:
\[
\Omega^1_D(M,I^{(1)})= \Omega^1_D(M,I^{(1)})_{cl}\oplus d^*\Omega^2_N(M,{I^{(1)}}^\perp)\cap \Omega_D^1(M,I^{(1)})
\]
The first part is the space of solutions to the Euler-Lagrange equations with
boundary values in $I^{(1)}$.

\end{document}